\newcommand{\myqed}{\qed}
\newcommand{\mymyqed}{\tag*{$\qedsymbol$}}
\theoremstyle{plain}
\Crefname{theorem}{Thm.}{Thm.}
\newaliascnt{propositioncnt}{theorem}
\Crefname{propositioncnt}{Prop.}{Prop.}
\newaliascnt{lemmacnt}{theorem}
\Crefname{lemmacnt}{Lem.}{Lem.}
\newaliascnt{corollarycnt}{theorem}
\Crefname{corollarycnt}{Cor.}{Cor.}
\newaliascnt{factcnt}{theorem}
\newtheorem{fact}[factcnt]{Fact}
\Crefname{factcnt}{Fact}{Fact}
\newaliascnt{assumptioncnt}{theorem}
\newtheorem{assumption}[assumptioncnt]{Assumption}
\Crefname{assumptioncnt}{Asm.}{Asm.}
\theoremstyle{definition}
\newaliascnt{remarkcnt}{theorem}
\newtheorem{remark}[remarkcnt]{Remark}
\Crefname{remarkcnt}{Rem.}{Rem.}
\newaliascnt{notationcnt}{theorem}
\newtheorem{notation}[notationcnt]{Notation}
\Crefname{notationcnt}{Notation}{Notation}
\newtheorem*{claim}{Claim}
\theoremstyle{definition}
\newaliascnt{definitioncnt}{theorem}
\Crefname{definitioncnt}{Def.}{Def.}
\newaliascnt{examplecnt}{theorem}
\Crefname{examplecnt}{Ex.}{Ex.}
\Crefname{table}{Table}{Table}
\Crefname{equation}{}{}
\newline\textbf{BEGIN: AUX-PROOF}\dotfill\newline}
\newline\textbf{END: AUX-PROOF}\dotfill\newline}
\newcommand{\op}{\mathrm{op}}
\newcommand{\Op}[1]{{#1}^{\op}}
\newcommand{\Cat}[1]{\mathbb{#1}}
\newcommand{\arrow}{\mathbin{\to}}
\newcommand{\ol}[1]{\overline{#1}}
\newcommand{\co}{\mathop{\circ}}
\newcommand{\Id}{\mathrm{Id}}
\newcommand{\tuple}[1]{\langle{#1}\rangle}
\newcommand{\CC}{\Cat{C}}
\newcommand{\EE}{\Cat{E}}
\newcommand{\DD}{\Cat{D}}
\newcommand{\id}{\mathrm{id}}
\newcommand{\place}{\underline{\phantom{n}}\,} \newcommand{\bec}[3]{#1 ~ & #2 &\text{#3}\\}
\newcommand{\becncr}[3]{#1 ~ & #2 &\text{#3}}
\newcommand{\adjunction}[3]{
	\ar@<.4pc>[#1]^-{#2}
	\ar@{}[#1]|-*=0[@]{\bot}
	\ar@<-.4pc>@{<-}[#1]_-{#3}
}
\newcommand{\adjointEquivalence}[3]{
	\ar@<.4pc>[#1]^-{#2}
	\ar@{}[#1]|-*=0[@]{\simeq}
	\ar@<-.4pc>@{<-}[#1]_-{#3}
}
\newcommand{\CLatw}{\mathbf{CLat}_\sqcap}
\newcommand{\Set}{\mathbf{Set}}
\newcommand{\Cats}{\mathbf{Cat}}
\newcommand{\PMetb}{\mathbf{PMet}_{1}}
\newcommand{\EqRel}{\mathbf{EqRel}}
\newcommand{\ERel}{\mathbf{ERel}}
\newcommand{\Meas}{\mathbf{Meas}}
\newcommand{\EqRelMeas}{\mathbf{EqRel}_{\Meas}}
\newcommand{\Pred}{\mathbf{Pred}}
\newcommand{\Pre}{\mathbf{Pre}}
\newcommand{\SDist}{\mathcal{D}_{\le 1}}
\newcommand{\Pow}{\mathcal{P}}
\newcommand{\CA}[1]{\mathbf{CoAlg}(#1)}
\newcommand{\cart}[1]{\widetilde{#1}}
\newcommand{\esit}{\mathscr{S}}
\newcommand{\bOmega}{\underline{\Omega}}
\newcommand{\rank}{\mathrm{rank}}
\newcommand{\codlif}[3]{ {\ol{#1}}^{ {#2},{#3} } }
\newcommand{\LEq}[2]{\mathsf{LE}_{#1}(#2)}
\newcommand{\CBisim}[3]{\mathsf{Bisim}^{ {#1},{#2} }(#3)}
\newcommand{\sem}[1]{\llbracket #1 \rrbracket}
\newcommand{\depth}{\mathrm{depth}}
\newcommand{\SGiry}{\mathcal{G}_{\le 1}}
\newcommand{\Eq}{\mathrm{Eq}}
\newcommand{\Unif}{\mathbf{Unif}}
\newcommand{\RR}{\mathbb{R}}
\newcommand{\theory}{\mathop{\mathrm{t\kern-0.07em h}}\nolimits}
\begin{document}
\IEEEoverridecommandlockouts
\title{Expressivity of Quantitative Modal Logics \\
	{\huge Categorical Foundations via Codensity and Approximation}\thanks{ The authors are
		supported by ERATO HASUO Metamathematics for Systems Design
		Project
(No.~JPMJER1603), JST.
	Thanks to Bart Jacobs for giving helpful comments on a draft of this paper.}}

\author{\IEEEauthorblockN{Yuichi Komorida\IEEEauthorrefmark{1}\IEEEauthorrefmark{2},
Shin-ya Katsumata\IEEEauthorrefmark{2},
Clemens Kupke\IEEEauthorrefmark{3}, 
Jurriaan Rot\IEEEauthorrefmark{4} and
Ichiro Hasuo\IEEEauthorrefmark{1}\IEEEauthorrefmark{2}}
\IEEEauthorblockA{\IEEEauthorrefmark{1}The Graduate University for Advanced Studies, SOKENDAI, Hayama, Japan}
\IEEEauthorblockA{\IEEEauthorrefmark{2}National Institute of Informatics, Tokyo, Japan}
\IEEEauthorblockA{\IEEEauthorrefmark{3}University of Strathclyde, United Kingdom}
\IEEEauthorblockA{\IEEEauthorrefmark{4}Radboud University, Nijmegen, The Netherlands}}

\IEEEoverridecommandlockouts
\IEEEpubid{\makebox[\columnwidth]{978-1-6654-4895-6/21/\$31.00~
		\copyright2021 IEEE \hfill} \hspace{\columnsep}\makebox[\columnwidth]{ }}

\maketitle

\begin{abstract}
A modal logic that is strong enough to fully characterize the behavior of a system is called expressive. Recently, with the growing diversity of systems to be reasoned about (probabilistic, cyber-physical, etc.), the focus shifted to quantitative settings which resulted in a number of expressivity results for quantitative logics and behavioral metrics. Each of these quantitative expressivity results uses a tailor-made argument; distilling the essence of these arguments is non-trivial, yet important to support the design of expressive modal logics for new quantitative settings. In this paper, we present the first categorical framework for deriving quantitative expressivity results, based on the new notion of approximating family. A key ingredient is the codensity lifting---a uniform observation-centric construction of various bisimilarity-like notions such as bisimulation metrics. We show that several recent quantitative expressivity results (e.g.\ by K\"{o}nig et al.\ and by Fijalkow et al.) are accommodated in our framework; a new expressivity result is derived, too, for what we call bisimulation uniformity.

\end{abstract}

\section{Introduction}
\label{sec:introductionX}

\paragraph{(Quantitative) Modal Logics and Their Coalgebraic Unification}
The role of different kinds of \emph{modal logics} is pervasive in computer science. Their principal functionality is to specify and reason about behaviors of state-transition systems. With the growing diversity of target systems (probabilistic, cyber-physical, etc.), the use of \emph{quantitative} modal logics---where truth values and logical connectives can involve real numbers---is increasingly common. For such logics, however, providing the necessary theoretical foundations takes a significant
effort and is often done individually for each variant. 

It is therefore desirable to establish unifying and abstract foundations once and for all, which readily instantiate to individual modal logics. This is the goal pursued by the study of \emph{coalgebraic modal logic}~\cite{Moss99,schr08:expr,Pattinson04,BonsangueK05,BonsangueK06,PavlovicMW06,Klin07}, which builds on the general categorical modeling of state-transition systems as \emph{coalgebras}~\cite{jacobs-coalg,Rutten00}.

\paragraph{Expressivity of Modal Logics}
When using a concrete modal logic, there are several important properties that we expect its metatheory to address, such as \emph{soundness} and \emph{completeness} of its proof system. In this paper, we are interested in the \emph{adequacy} and \emph{expressivity} properties of the logic. These properties are about comparison between 1) the expressive power of the logic, and 2) some notion of \emph{indistinguishability} that is inherent in the target state-transition systems.

A prototypical example of such notions of indistinguishability is \emph{bisimilarity}~\cite{Milner89}. Expressivity with respect to bisimilarity---that modal logic formulas can distinguish non-bisimilar states---is the classic result by Hennessy and Milner~\cite{HennessyM85}. Adequacy, the opposite of expressivity, means that semantics of modal formulas is invariant under bisimilarity, and holds in most modal logics. In contrast, expressivity is a desired property but not always true. Expressivity, when it holds, relies on a delicate balance between the choice of modal operators, the underlying propositional connectives, and the ``size'' of (branching of) the target state-transition systems. 

\paragraph{Quantitative Expressivity}
The aforementioned  interests in quantitative modal logics have sparked research efforts for \emph{quantitative expressivity}. In quantitative settings, the inherent indistinguishability notion in target systems is quantitative, too, typically formulated in terms of a \emph{bisimulation pseudometric} (``how much apart the two states are'') that refines the quantitative notion of bisimilarity (``if the two states are indistinguishable or not'')~\cite{Giaca90,DesharnaisGJP04}.  In the expressivity problem, such an indistinguishability notion is compared against the quantitative truth values of logical formulas.

Recent works that study quantitative expressivity include~\cite{DesharnaisGJP04,BreugelW05,ClercFKP19,WildSP018,KonigM18,WildS20}; they often involve coalgebraic generalization, too, since quantitative modal logics often have immediate variations.  Their quantitative expressivity proofs are much more mathematically involved compared to qualitative expressivity proofs. This is because the aforementioned balance between syntax and semantic equivalences is much more delicate.  Specifically, target systems are quantitative and thus exhibit \emph{continuity}  of behaviors, while logical syntax is inherently \emph{disconnected}, in the sense that each logical formula is an inductively defined and thus finitary entity. Expressivity needs to bridge these two seemingly incompatible worlds.

In order to do so, each of the expressivity proofs in~\cite{BreugelW05,ClercFKP19,WildSP018,KonigM18,WildS20} uses some kind of ``approximation.'' However, each of these arguments has a specialized, tailor-made flavor: Stone--Weierstrass-like arguments for metric spaces~\cite{KonigM18},  the unique structure theorem for analytic spaces~\cite{ClercFKP19}, and so on. It does not seem easy to distill  the essence that is common to different quantitative expressivity proofs. Indeed, there has not been a coalgebraic framework that unifies them.

\paragraph{Categorical Unification of Quantitative Expressivity via
 Codensity and Approximation}
We present the first categorical framework that uniformly axiomatizes  different approximation arguments---it uses a fibrational notion of \emph{approximating family}---and  unifies different quantitative expressivity results. 

Our framework hinges on  the construction called the \emph{codensity lifting}~\cite{SprungerKDH18,KKHKH19}; it is a general method for modeling a variety of bisimilarity-like notions (bisimilarity, probabilistic bisimilarity, bisimulation metric, etc.). The codensity lifting uses not only coalgebras (for unifying different state-transition systems) but also \emph{fibrations} for different \emph{observation modes}; the latter include Boolean predicates, quantitative/fuzzy predicates, equivalence relations, pseudometrics, topologies, etc. This use of fibrations provides flexibility to accommodate a variety of quantitative bisimilarity-like notions.

The codensity lifting, while defined in abstract categorical terms, has clear observational intuition (see~\cref{subsec:codensityBisim}). It also gives a class of \emph{codensity bisimilarity games} that characterize a variety of (qualitative and quantitative) bisimilarity notions~\cite{KKHKH19} (see also~\cref{subsec:codensityBisim}). 

Our key contribution of a categorical formalization of approximation is enabled by  the formalization of the codensity lifting. It has a similar observational intuition, too: see~\cref{subsec:approximableFamily}, where we characterize an approximating family of observations as a ``winnable'' set of moves in a suitable sense.

On top of our fibrational notion of approximating family, we establish a general expressivity framework, which is the first to unify existing quantitative expressivity results including~\cite{ClercFKP19,WildSP018,KonigM18}. In our unified framework, we have two proof principles for expressivity---Knaster--Tarski (\cref{thm:knasterTarskiApproxPrinciple})  and Kleene (\cref{thm:kleeneApproxPrinciple})---that mirror two classic characterizations of greatest fixed points. 
Our general framework is presented in terms of predicate lifting~\cite{schr08:expr,Pattinson04}. This is mostly for presentation purposes (showing concrete syntax is easier this way). A more abstract and fully fibrational recap of our framework---where a modal logic is formalized with a dual adjunction~\cite{BonsangueK05,BonsangueK06,PavlovicMW06,Klin07}---is found in~\cref{sec:dualityAndComma}.

We demonstrate our general framework with three examples: expressivity for the Kantorovich bisimulation metric (from~\cite{KonigM18}, \cref{sec:exampleKoenigMM}); that for Markov process bisimilarity (from~\cite{ClercFKP19}, \cref{sec:exMarkovProcesses}); and that for the so-called \emph{bisimulation uniformity} (\cref{sec:exBisimUnif}).
Both the Knaster--Tarski and Kleene principles are used for proofs. See~\cref{table:introParametersX}.
The last
is a new expressivity result that is not previously found in the literature. 

We note that the role of the notion of approximating family is as a useful axiomatization: it tells us what key lemma to prove in an expressivity proof, but it does not tell how to prove the key lemma. The proof of this key lemma is where the technical hardcore lies in existing expressivity proofs (by a Stone--Weierstrass-like result in~\cite{KonigM18},  by the unique structure theorem in~\cite{ClercFKP19}, etc.). For the new instance of bisimulation uniformity (\cref{sec:exBisimUnif}), the general axiomatization of approximating family allowed us to discover a result we need in a paper~\cite{CsaszarApproximation1971} that is seemingly unrelated to modal logic. The same result guided us in the design of modal logic, too, especially in the choice of propositional connectives.

\paragraph{Contributions} We summarize our contributions.
\begin{itemize}
 \item  The notion of approximating family, whose instances occur in the key steps of existing quantitative expressivity proofs. It is built on top of the codensity lifting. 
 \item We use it in a unified categorical  expressivity framework. It offers two proof principes (Knaster--Tarski and Kleene) that have different applicability (\cref{table:introParametersX}).
 \item The framework is instantiated to two known expressivity results~\cite{KonigM18,ClercFKP19} and one new result (\cref{sec:exBisimUnif}).
 \item The framework is given an abstract and fully fibrational recap (\cref{sec:dualityAndComma}) that exposes further fibrational structures. 
\end{itemize}

\begin{table*}[tbp]
	\caption{Examples of expressivity situations. The non-shaded rows describe data in expressivity situations, and the shaded ones describe the resulting bisimilarity notions and modal logics, i.e.\ two constructs compared in the problem of expressivity.
	}
	\label{table:introParametersX}
	\footnotesize
	\centering
	\scalebox{1}{\begin{tabular}{c||c|c|c}
			parameter
&
			\cite{KonigM18} (\cref{sec:exampleKoenigMM}) proved by Kleene			&
			\cite{ClercFKP19} (\cref{sec:exMarkovProcesses}) proved by Knaster--Tarski
			&
			\cref{sec:exBisimUnif} proved by Knaster--Tarski
			\\\hline\hline
			$\Cat{C}$
			&
			$\Set$
			&
			$\Meas$
			&
			$\Set$
			\\ 
			category of \emph{spaces}
			&
			sets 
			&
			measurable sets
			&
			sets
			\\\hline
			$B$
			&
			$B$ (arbitrary)
			&
			$(\SGiry \place)^A$
			&
			$B$ (arbitrary)
			\\
behavior functor
			&
			
			&
continuous-space Markov processes
			&
			\\\hline
			$p\colon \Cat{E}\to\Cat{C}$
			&
			$\PMetb\to\Set$ 
			&
			$\EqRelMeas\to\Meas$
			&
			$\Unif\to\Set$
			\\
observation mode
			&
1-bounded pseudometrics
			  &
			equiv.\ relations
&
			uniformity
			\\\hline
			\parbox[t]{8em}{\centering
				$\Omega\in\Cat{C}$\\
}
			&
			\parbox[t]{8em}{\centering
				$[0,1]$\\
}
			&
			\parbox[t]{13em}{\centering
				$2=\{0,1\}$ \\ 
}
			&
			\parbox[t]{8em}{\centering
				$[0,1]$\\
}
		       \\
				truth-value domain
		      &
				the unit interval
			  &
		      with the discrete $\sigma$-algebra
			      &
				the unit interval				  
			\\\hline
			\parbox[t]{9em}{\centering
				$\bOmega\in \Cat{E}_{\Omega}$\\
				observation predicate
			}
			&
			\parbox[t]{8em}{\centering
				$([0,1], d_e)$\\
				Euclidean metric
			}
			&
			\parbox[t]{8em}{\centering
				$(2,=)$\\
				equality
			}
			&
			\parbox[t]{8em}{\centering
				$([0,1], \mathcal{U}_e)$\\
				metric uniformity
			}
			\\\hline
			$(\tau_{\lambda}\colon B\Omega\to \Omega)_{\lambda\in\Lambda}$
			&
			arbitrary but
			&
			$\Lambda=A\times(\mathbb{Q}\cap[0,1])$
			&
			arbitrary but
			\\
observation modality
			&
			must satisfy \cref{asm:exampleKoenigMM}
			&
			$\tau_{a,r}((\mu_a)_{a\in A})=1$ iff $\mu_a(1)>r$
			&
			must satisfy \cref{asm:exBisimUnif}
			\\\hline\hline
			\cellcolor{lightgray!30}
			\parbox[t]{10em}{\centering
				\textbf{ resulting
					bisimilarity-like notion 
			}} 
			&
			\cellcolor{lightgray!30}
			\parbox[t]{7em}{\centering
				$B$-bisimulation metric~\cite{KonigM18}
			}
			&
			\cellcolor{lightgray!30}
			\parbox[t]{7em}{\centering
				probabilistic bisimilarity
			}
			&
			\cellcolor{lightgray!30}
			\parbox[t]{7em}{\centering
				bisimulation uniformity
			}
			\\\hline\hline
			\parbox[t]{11em}{\centering
				$\Sigma$\\
				propositional connectives
			}
			&
			\parbox[t]{10em}{\centering
				$\top,\neg,\min$, and\\ $(\ominus q)$ for $q\in\mathbb{Q}\cap[0,1]$
			}
			&
			\parbox[t]{7em}{\centering
				$\top,\wedge$
			}
			&
			\parbox[t]{10em}{\centering
				$1,\min$, and\\ $(r+),(r\times)$ for 
				$r\in\RR$
			}
			\\\hline
			\parbox[t]{10em}{\centering 
				$(f_\sigma)_{\sigma\in\Sigma}$\\
				propositional structure
			}
			&
			\parbox[t]{14em}{\centering
				Zadeh logic connectives on $[0,1]$
			}
			&
			\parbox[t]{7em}{\centering
				meet-semilattice with $0\sqsubseteq 1$
			}
			&
			\parbox[t]{7em}{\centering
				affine lattice structure on $\RR$
			}
			\\\hline\hline
			\bfseries 
			\cellcolor{lightgray!30}
			\parbox[t]{14em}{\centering
				resulting modal logic\\
				(modal operators are $(\heartsuit_{\lambda})_{\lambda\in\Lambda}$)
			}
			&
			\cellcolor{lightgray!30}
			\parbox[t]{15em}{\centering
				The logic in~\cite{KonigM18}, generalization of Zadeh fuzzy modal logic in~\cite{WildSP018}
			}
			&
			\cellcolor{lightgray!30}
			\parbox[t]{5em}{\centering
				$\mathrm{PML}_\wedge$~\cite{ClercFKP19}
			}
			&
			\cellcolor{lightgray!30}
			\parbox[t]{5em}{\centering
				A new modal logic
			}
		\end{tabular}
	}
\end{table*}

\paragraph{Related Work} 
Here we list related work considering quantitative expressivity.

Our framework is parameterized both in the kind of coalgebra and in the observation mode.
To our knowledge, the only existing work with this generality is~\cite{KR20}
which combines coalgebras and fibrations to provide a general setting
for proving expressivity. However, that approach does not accommodate approximation arguments, 
therefore failing to cover any of the aforementioned quantitative expressivity proofs~\cite{BreugelW05,ClercFKP19,WildSP018,KonigM18,WildS20}.
Our~\cref{sec:dualityAndComma} can be seen as an extension of~\cite{KR20}; our main novelty is the accommodation of approximation arguments and thus quantitative expressivity results, as we already discussed.

The idea of behavioral metrics was first proposed in~\cite{Giaca90}.
In the setting of category theory, the \emph{behavioral pseudometric} is introduced in~\cite{BreugelW05} in terms of coalgebras in the category $\PMetb$ of 1-bounded pseudometric spaces, and a corresponding expressivity
result is established. 
Many other formulations of quantitative bisimilarity are based on \emph{fibrational coinduction}~\cite{HJ98}. The work~\cite{BaldanBKK18} discusses general behavioral metrics (but not modal logics); expressivity w.r.t.\ these metrics is studied in~\cite{KonigM18} for general $\Set$-coalgebras. The line of work on \emph{codensity bisimilarity}---including \cite{SprungerKDH18,KKHKH19} and the current work---follows this fibrational tradition, too.

A recent work~\cite{WildS20}
uses a different  formulation of bisimilarity-like notions: it does not use fibrations or functor liftings, but uses so-called \emph{fuzzy lax extensions} of functors.
This approach is a descendant of \emph{relators}~\cite{Rutten98Relator}; seeking the connection to these works is future work.

\paragraph{Organization}
After recalling  preliminaries in~\cref{sec:prelim}, we axiomatize the data under which we study expressivity---it is called an  \emph{expressivity situation}---in~\cref{sec:expSit}.
In~\cref{sec:approxAndExp} we define our key notion of approximating family, from which we derive the Knaster--Tarski and Kleene proof principles for expressivity.
\S{}\ref{sec:exampleKoenigMM}--\ref{sec:exBisimUnif} present instances of our framework: two known~\cite{KonigM18,ClercFKP19} and one new (\cref{sec:exBisimUnif}).  In~\cref{sec:dualityAndComma}, we recap our framework in more abstract terms, and identify expressivity as a problem of comparing coinductive predicates in two different fibrations.

Most proofs are deferred to the appendix.

\section{Preliminaries}\label{sec:prelim}

We use coalgebras (\cref{subsec:prelimCoalg}) to accommodate different types of systems, and fibrations  (\cref{subsec:prelimFibration}) to accommodate different ``observation modes.'' Then quantitative bisimilarity-like notions are formulated as  coinductive predicates (\cref{subsec:prelimcoinductionInFib}).

\subsection{Coalgebra}
\label{subsec:prelimCoalg}
Coalgebras are commonly used as a categorical presentation of state-based transition systems~\cite{jacobs-coalg,Rutten00}. Let $\Cat{C}$ be a category and $B\colon \Cat{C}\to\Cat{C}$ be a functor. A \emph{$B$-coalgebra} is a pair $(X, x)$ of an object $X\in \Cat{C}$ and a $\Cat{C}$-arrow $x\colon X\to BX$; this coalgebra is often denoted simply by $x\colon X\to BX$. A \emph{morphism} of $B$-coalgebras from $(X, x)$ to $(Y,y)$ is a $\Cat{C}$-arrow $f\colon X\to Y$ such that $y\co f= Bf\co x$. 

The theory of coalgebras generalizes process theory and automata theory, where our interests are principally in the observational behaviors of transition systems that are insensitive to internal states. \emph{Bisimilarity} by Park and Milner~\cite{Milner89} is a prototype notion that captures such black-box behaviors. 
One of the early successes of coalgebras is a  categorical characterization of bisimilarity that works for different $B$'s, hence for different types of systems. In the theory, coalgebra morphisms are identified as ``behavior-preserving maps.''

A \emph{final} $B$-coalgebra is $\zeta\colon Z\to BZ$ such that there is a unique morphism from each coalgebra $(X,x)$ to $(Z,\zeta)$. It plays an important role 
as a fully abstract domain of $B$-behaviors. 

This paper's use of coalgebras goes beyond what we described so far. We follow~\cite{HJ98,BonchiPPR17,HasuoKC18,Komorida20} and use them in combination with \emph{fibrations} (\cref{subsec:prelimFibration}). In this  case, a coalgebra in a fiber is understood as a predicate (or relation, pseudometric, etc.) with a suitable invariance property (\cref{subsec:prelimcoinductionInFib}). The importance of final coalgebras remains, since those in a fiber---called coinductive predicates in~\cref{subsec:prelimcoinductionInFib}---characterize coinductively defined bisimilarity-like notions.

\subsection{Fibration}\label{subsec:prelimFibration}
Fibrations give a categorical way of organizing indexed entities. Roughly corresponding to indexed categories $\Op{\Cat{C}}\to \Cats$, a fibration $p\colon \Cat{E}\to \Cat{C}$ can be thought of as a collection $(\Cat{E}_{C})_{C\in \Cat{C}}$ of categories $\Cat{E}_{C}$ given for each $C\in \Cat{C}$, together with a suitable ``pullback'' action of arrows of $\Cat{C}$. In a fibration, these \emph{fiber categories} $\Cat{E}_{C}$ are patched up to form a single \emph{total category} $\Cat{E}$, a formulation that accommodates many structural reasoning principles. See~\cite{Jacobs:fib} for a comprehensive account.

The definition of fibration is simpler if restricted to posetal fibration; see e.g.~\cite{KKHKH19}. We need the following general definition for the discussions in~\cref{sec:dualityAndComma}.
\begin{definition}[fibration]\label{def:fibration}
	Let $p\colon \Cat{E}\to \Cat{C}$ be a functor. 
	\begin{itemize}
		\item An $\Cat{E}$-arrow $f\colon X\to Y$ is said to be \emph{Cartesian} if it has the following universal property: for 
each $\Cat{E}$-arrow $h\colon Z\to Y$, if $ph$ (in $\Cat{C}$) factors through $pf$ (say $ph=pf\co k$, see below middle), then there exists a unique $\Cat{E}$-arrow $g\colon Z\to X$ such that $h=f\co g$ and $pg = k$. 
\begin{displaymath}\footnotesize
			\vcenter{\xymatrix@R=0em@C-1em{
					\Cat{E} \ar[ddd]_-{p}
					& Z  
					\ar@/^/[rrd]^-{h}
					\ar@{-->}[rd]_-{g}
					\\
					&
					&
					X \ar[r]_-{f}
					&
					Y
					&
					i^{*}Y
					\ar[r]^-{\cart{i}Y}
					&
					Y
					\\
					& pZ  
					\ar@/^/[rrd]^(.7){ph}
					\ar@{->}[rd]_-{k}
					\\
					\Cat{C}
					&
					&
					pX \ar[r]_-{pf}
					&
					pY
					&
					C
					\ar[r]_-{i}
					&
					pY
			}}
		\end{displaymath}
\item The functor $p$ is called a \emph{fibration} if, for each $Y\in \Cat{E}$ and each $\Cat{C}$-arrow $i\colon C\to pY$, there exists an $\Cat{E}$-arrow $\cart{i}Y\colon i^{*}Y\to Y$ such that $p(\cart{i}Y)=i$ and $\cart{i}Y$ is Cartesian. See above right. 
The object $i^{*}Y\in \Cat{E}$ is called the \emph{pullback} of $Y$ along $i$ (it is unique up-to isomorphism); the arrow $\cart{i}Y$ is called the \emph{Cartesian lifting} of $i$ with respect to $Y$. 
	\end{itemize}
	We say that $X\in \Cat{E}$ is \emph{above} $C\in \Cat{C}$ if $pX=C$; an $\Cat{E}$-arrow \emph{above} a $\Cat{C}$-arrow is defined similarly. A fibration $p$ gives  rise to the \emph{fiber category} $\Cat{E}_{C}$ for each $C\in \Cat{C}$: it consists of all the objects above $C$ and all the arrows above $\id_{C}$. The category $\Cat{E}$ is called the \emph{total category} of the fibration. Each $\Cat{C}$-arrow $i\colon C\to C'$ gives rise to a \emph{reindexing functor} $i^{*}\colon \Cat{E}_{C'}\to \Cat{E}_{C}$. 
	
	A fibration $p\colon \Cat{E}\to \Cat{C}$ is \emph{posetal} if each fiber $\Cat{E}_{C}$ is a poset. It is a \emph{$\CLatw$-fibration} if each fiber is a complete lattice, and each reindexing functor preserves arbitrary meets $\bigsqcap$. 
\end{definition}
$\CLatw$-fibrations form a special class of \emph{topological functors}~\cite{herr74:topo}. 
We prefer the fibrational presentation, following  works on coinductive predicates~\cite{HJ98,BonchiPPR17,HasuoKC18,SprungerKDH18,KR20,Komorida20}.

Fibrations in general organize various indexed structures. In this paper, however, our examples share the following intuition.
\begin{itemize}
	\item The base category $\Cat{C}$ is that of sets (or spaces, structured sets, etc.) and functions between them (that preserve/respect those structures imposed on the objects). We often assume an endofunctor $B\colon \Cat{C}\to\Cat{C}$, for which coalgebras  model state-based systems, as in~\cref{subsec:prelimCoalg}. 
	\item A fibration $p\colon \Cat{E}\to\Cat{C}$ specifies the \emph{observation mode}, providing an additional reasoning structure for spaces in $\Cat{C}$. Such a structure can be predicates, binary relations, pseudometrics, topologies,  etc.\ (Example~\ref{ex:fibr}). In particular, a bisimilarity-like notion over $X$
	is  an object $P\in \Cat{E}_{X}$.
\end{itemize}

\noindent For the sake of presentation, we fix the following terminology. 
\begin{definition}[(fib.) predicate]\label{def:fibrationalPredicate}
	In a fibration $p\colon \Cat{E}\to\Cat{C}$, an object $P\in \Cat{E}_{X}$ is called a \emph{(fibrational) predicate} over $X$. 
\end{definition}
\noindent Note that a (fibrational) predicate can be in fact a binary relation, a pseudometric, etc., depending on the choice of $p\colon \Cat{E}\to\Cat{C}$. This abuse of words will be useful in~\cref{subsec:prelimcoinductionInFib}. 

The coming examples are all $\CLatw$-fibrations;  arrows in fibers are denoted by $\sqsubseteq$. The intuition of $P\sqsubseteq Q$ is that the predicate $P$ is more fine-grained and discriminating than $Q$. 

\begin{example}
	\label{ex:fibr}
	($\Pred\to\Set$) An object of $\Pred$ is a pair $(X,P)$ of a set $X$ and a predicate $P\subseteq X$. An arrow $f\colon (X,P)\to (Y, Q)$ in $\Pred$ is a function $f\colon X\to Y$ such that $f(x)\in Q$ whenever $x\in P$. The obvious forgetful functor $\Pred\to\Set$ is a fibration, with pullbacks given by $f^{*}(Y,Q)=(X, f^{-1}Q)$. Each fiber $\Pred_{X}$ is the powerset $\Pow X$ with ${\sqsubseteq}={\subseteq}$ as arrows. 
	
	($\ERel\to\Set, \EqRel\to\Set, \Pre\to\Set$) $\ERel\to\Set$ is a binary variant of $\Pred\to\Set$: an object $(X,R)$ of $\ERel$ is a set $X$ with an endorelation $R\subseteq X\times X$. Pullbacks are given by inverse images, too. These relations are restricted to equivalence relations and preorders,  in $\EqRel\to\Set$ and $\Pre\to\Set$, respectively.
	
	($\PMetb\to\Set$) $\PMetb$ consists of sets with 1-bounded pseudometrics and non-expansive maps. A pseudometric $d$ is much like a metric but allows $d(x,y)=0$ for $x\neq y$, a common setting where bisimulation metrics are formulated. 1-boundedness (that $d(x,y)\le 1$ for all $x,y$) is assumed for technical convenience---any bound would do, such as $\infty$.  The forgetful functor $\PMetb\to\Set$ is a fibration, in which pullbacks equip a set with an induced pseudometric: $f^{*}(Y,e)=\bigl(X, \lambda (x,x').\, e(f(x), f(x'))\bigr)$. A consequence is that the fiber $(\PMetb)_{X}$ consists of all 1-bounded pseudometrics over $X$ ordered by ${\sqsubseteq} = {\ge}$---this concurs with the above intuition that $d$ is more discriminating if $d\sqsubseteq d'$. 
	
	($\EqRelMeas\to\Meas$)
This is a measurable variant of $\EqRel\to\Set$; $\Meas$ is the category of measurable spaces and measurable maps. An object $(X, R)$ of $\EqRelMeas$ is given by a measurable set $X$, together with $R\subseteq X\times X$.
	\begin{auxproof}
		such that, if $(x,y)\notin R$, then there exists some measurable $S\subseteq X$ satisfying $x\in S$ and $y\notin S$. (There is some subtlety here: an alternative is to ask $R$ to be measurable in $X\times X$, but this  does not give a $\CLatw$-fibration.)
	\end{auxproof}
	\begin{auxproof}
		Future work: give a nice categorical characterization to $\EqRelMeas\to\Meas$. Is it a pullback of $\mathrm{StrongSub}(\Meas)\to\Meas$?
	\end{auxproof}
	
	\begin{auxproof}
		($\Top\to\Set, \Meas\to\Set$) The forgetful functor $\Top\to\Set$ is a fibration, where reindexing is given by pullback topologies. The fiber $\Top_{X}$ has topologies $\mathcal{O}\subseteq \Pow X$ as objects, ordered by ${\sqsubseteq} = {\supseteq}$. $\Meas\to\Set$ is similar; its fiber $\Meas_{X}$ consists of all $\sigma$-algebras over $X$. 
	\end{auxproof}
\end{example}

\subsection{Coinductive Predicate in a Fibration}\label{subsec:prelimcoinductionInFib}
The combination of coalgebras and fibrations has been actively studied, starting in~\cite{HJ98} and more recently e.g.\ in~\cite{BonchiPPR17,HasuoKC18,KKHKH19,KR20,Komorida20}. Fibrations introduce additional reasoning structures  (\cref{subsec:prelimFibration})
which allow to accommodate bisimilarity-like notions beyond classical bisimilarity (including bisimilarity pseudometrics, see e.g.~\cite{Bonchi0P18,KKHKH19}). In fact, these bisimilarity-like notions are defined coinductively, i.e., as suitable greatest fixed points, which are then identified with coalgebras in a fiber. Here we shall review this fibrational machinery for coinductive reasoning for coalgebras.

\begin{definition}[functor lifting]\label{def:fibrationalFunctorLifting}
	Let $p\colon \Cat{E}\to \Cat{C}$ be a fibration, and $B\colon \Cat{C}\to\Cat{C}$ be a functor. A functor $\ol B\colon \Cat{E}\to \Cat{E}$ is  a \emph{lifting} of $B$ along $p$ if $p\co \ol B=B\co p$. 
We say that a lifting $\ol B$ is  \emph{fibered} if it preserves Cartesian arrows.
	\begin{auxproof}
		, in the sense that the canonical arrow below (dashed) is an iso.
		\begin{equation}\label{eq:fiberedlifting}
			\vcenter{\xymatrix@R=0em@C-.6em{
					&&&&
					(Bf)^{*}(\ol{B}Q)
					\ar[rd]^-{\cart{Bf}(\ol{B}Q)}
					\\
					\Cat{E}
					\ar[r]^-{\ol B}
					\ar[ddd]_-{p}
					&
					\Cat{E}
					\ar[ddd]_-{p}
					& 
					f^{*}Q
					\ar[r]^-{\cart{f}Q}
					&
					Q
					&&
					{\ol B}Q
					\\
					&&&&
					{\ol B}(f^{*}Q)
					\ar[ru]_-{\ol{B}(\cart{f}Q)}  
					\ar@{-->}[uu]_-{\cong}
					\\
					\\
					\Cat{C}
					\ar[r]^-{B}
					&
					\Cat{C}
					&
					X \ar[r]^-{f}
					&
					Y
					&
					BX \ar[r]^-{Bf}
					&
					BY
			}}
		\end{equation}
	\end{auxproof}
\end{definition}
A functor lifting determines the type of \emph{coinductive predicates} through a \emph{predicate transformer}.

\begin{definition}[(fibrational) predicate transformer $x^{*} \circ \ol B$]\label{def:fibrationalPredTrans}
	In the setting of \cref{def:fibrationalFunctorLifting}, let $\ol B$ be a lifting of $B$, and  $x\colon X\to BX$ be a $B$-coalgebra. We then obtain an endofunctor 
	\begin{displaymath}
		x^{*}\co\ol B\colon \Cat{E}_{X}\to \Cat{E}_{X}
		~\text{by the composite}~
		\Cat{E}_{X}\xrightarrow{\ol B}\Cat{E}_{BX}\xrightarrow{x^{*}}\Cat{E}_{X}.
	\end{displaymath}
	The functor $x^{*}\co\ol B$ is called the \emph{predicate transformer} induced by $\ol{B}$ over the $B$-coalgebra $x$.
\end{definition}

It is standard to characterize bisimilarity as a suitable greatest fixed point (gfp). Accordingly, we are interested in the gfp of the predicate transformer $x^{*}\co\ol B$. The latter amounts to the final $x^{*}\co\ol B$-coalgebra,  in our current setting where the fiber $\Cat{E}_{X}$ is not necessarily a poset but a category. 

\begin{definition}[coinductive predicate $\nu(x^{*}\co\ol{B})\in\Cat{E}_{X}$, and invariant]\label{def:fibrationalInvariantAndCoinductivePred}
	In the setting of \cref{def:fibrationalPredTrans}, the carrier of the  final $x^{*}\co\ol{B}$-coalgebra (if it exists) is called the \emph{$\ol{B}$-coinductive predicate}  over $x$. It is denoted by $\nu(x^{*}\co\ol{B})\in \Cat{E}_{X}$. 
	
	An $x^{*}\co\ol{B}$-coalgebra is called a \emph{$\ol{B}$-invariant} over $x$. 
\end{definition}
\noindent The names in the above definition reflect the common reasoning principle for gfp specifications (such as safety), namely that an invariant underapproximates (and thus witnesses) the gfp specification. Each $\ol{B}$-invariant indeed witnesses the $\ol{B}$-coinductive predicate, in the sense that there is a unique morphism from the former to the latter. 

\begin{example}
	[$\Pred\to\Set$] Liftings of a functor $B$ along $\Pred\to\Set$ are well-studied---they correspond to the so-called \emph{predicate liftings}~\cite{schr08:expr,Pattinson04}. For example, for the powerset functor $B=\Pow$, two liftings $\ol{\Pow}_{\Box}, \ol{\Pow}_{\Diamond}$ are given by 
	\begin{align*}
\ol{\Pow}_{\Box}(X,P) & =
		\bigl(\,\Pow X,\,\Box P = \{U\subseteq X\mid U\subseteq P\} \,\bigr),
                                                \\
		\ol{\Pow}_{\Diamond}(X,P) & =
		\bigl(\,\Pow X,\,\Diamond P = \{U\subseteq X\mid U\cap P\neq\emptyset\} \,\bigr).
	\end{align*}
Thus a choice of  lifting here amounts to a choice of modality.
	
	Coinductive predicates in this setting represent various safety properties. On a Kripke frame $x\colon X\to \Pow X$, 
the coinductive  predicate $\nu(x^{*}\co\ol{\Pow}_{\Diamond})\subseteq X$ designates those states from which there is an infinite path. A similar example appears in~\cite{BonchiPPR17}. \end{example}

\begin{example}
	[$\ERel\to\Set, \EqRel\to\Set, \Pre\to\Set $] In these relational examples, a coinductive predicate embodies some bisimilarity-like relation---more specifically, the greatest among those relations which are preserved by one-step  transitions. A class of examples is given by \emph{coalgebraic bisimilarity}: see~\cite{HJ98}, where they assign a specific choice of lifting $\ol{B}$ to each functor $B$. Other examples include similarity~\cite{HughesJ04} (see also \cite[\S{}4.3]{KR20}), probabilistic bisimilarity~\cite{KKHKH19} and the language equivalence between deterministic automata~\cite{KKHKH19}. 
\end{example}  
\begin{example}
	[$\PMetb\to\Set$] A prototypical example is given as follows. Let $B=\SDist$, the \emph{subdistribution} functor that carries a set $X$ to $\SDist X=\{\xi\colon X\to [0,1]\mid \sum_{x\in X}\xi(x)\le 1\}$. Its action on arrows is given by push-forward distributions. One lifting $\ol{\SDist}$ of $\SDist$ along $\PMetb\to\Set$ is given by the \emph{Kantorovich metric}:
	\begin{math}
		\ol{\SDist}(X,d)= (\SDist X, \mathcal{K}d)
	\end{math}, where $\mathcal{K}d(\xi, \xi')$ is given by
	\begin{align*}
\textstyle
		\sup_{f\colon (X,d)\to_{\text{ne}} [0,1]} 
		\, \bigl|\,\sum_{x\in X}f(x)\cdot \xi(x) - \sum_{x\in X}f(x)\cdot \xi'(x)\,\bigr|.
	\end{align*}
	In the above supremum, $f$ ranges over all \emph{nonexpansive} functions of the designated type. The coinductive predicate for $\ol{\SDist}$ coincides with the \emph{bisimulation metric} from~\cite{DesharnaisGJP04}. 
	
	The above construction of a lifting $\ol{B}$ along $\PMetb\to\Set$ has been generalized to an arbitrary functor $B$.  This is called the \emph{Kantorovich lifting} and is introduced in~\cite{BaldanBKK18}. It uses a map $B[0,1]\to[0,1]$ as a parameter; the latter is much like a choice of a modality. We study this general setting in~\cref{sec:exampleKoenigMM}, following~\cite{KonigM18} but identifying the Kantorovich lifting as a special case of the \emph{codensity lifting} (\cref{def:codensityLifting}). 
\end{example}

Here is an abstract account of coinductive predicates.
\begin{proposition}[from~{\cite[Prop.~4.1 and 4.2]{HasuoKC18}}]\label{prop:coindPredMoreAbstractly}
Each lifting $\ol B$  of $B$ along $p\colon \Cat{E}\to\Cat{C}$ induces a functor $\CA p\colon \CA{\ol{B}}\to \CA B$; it carries a coalgebra $U\to \ol{B}U$ (in $\Cat{E}$) to $pU\to B(pU)$ (in $\Cat{C}$). Moreover, the functor $\CA p$ is a fibration if $\ol B$ is fibered. The fiber $\CA{\ol{B}}_{(X,x)}$ over a coalgebra $x\colon X\to BX$ coincides with the category $\CA{x^{*}\co\ol{B}}$ of $\ol{B}$-invariants over $(X,x)$. \myqed
\end{proposition}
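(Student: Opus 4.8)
The plan is to handle the three assertions in sequence, each time exploiting the universal property of Cartesian liftings; only the last uses that $\ol B$ is fibered.

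\emph{The functor $\CA p$.} First I would observe that applying $p$ to a $\ol B$-coalgebra $c\colon U\to\ol B U$ in $\Cat{E}$ and using $p\co\ol B = B\co p$ gives a $\Cat{C}$-arrow $pc\colon pU\to B(pU)$, i.e.\ a $B$-coalgebra; and that a $\ol B$-coalgebra morphism $f$ (with $c'\co f = \ol B f\co c$) is sent by $p$ to a $B$-coalgebra morphism, since $p(\ol B f)=B(pf)$. Functoriality is then inherited from $p$. This step is routine bookkeeping.

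\emph{Identifying the fibre (no fiberedness needed).} Fix a $B$-coalgebra $x\colon X\to BX$. An object of $\CA{\ol B}_{(X,x)}$ is a $\ol B$-coalgebra $c\colon U\to\ol B U$ with $pU=X$ and $pc=x$. Since $\ol B U$ lies above $BX$ and $pc=x=x\co\id_X$, the universal property of the Cartesian lifting $\cart{x}(\ol B U)\colon x^*(\ol B U)\to\ol B U$ yields a unique $\Cat{E}$-arrow $\tilde c\colon U\to x^*(\ol B U)=(x^*\co\ol B)(U)$ above $\id_X$ with $\cart{x}(\ol B U)\co\tilde c=c$; this $\tilde c$ is exactly an $x^*\co\ol B$-coalgebra structure on $U$, and $c\mapsto\tilde c$ is a bijection (post-composing with $\cart{x}(\ol B U)$ is its inverse). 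For morphisms I would check, using the naturality square $\cart{x}(\ol B V)\co x^*(\ol B f)=\ol B f\co\cart{x}(\ol B U)$ and the fact that post-composition with a Cartesian arrow is faithful on arrows sharing a fixed base image, that an $\Cat{E}$-arrow $f$ above $\id_X$ satisfies $c'\co f=\ol B f\co c$ iff $\tilde{c'}\co f=(x^*\co\ol B)(f)\co\tilde c$. Hence $\CA{\ol B}_{(X,x)}\cong\CA{x^*\co\ol B}$, the category of $\ol B$-invariants over $x$.

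\emph{Fiberedness of $\CA p$.} Assume now $\ol B$ preserves Cartesian arrows. Given $V\in\CA{\ol B}$ above $(X,x)$ with structure $c_V\colon V\to\ol B V$, and a $B$-coalgebra morphism $\phi\colon(Y,y)\to(X,x)$, I would form the Cartesian lifting $\cart{\phi}V\colon\phi^*V\to V$ of $\phi$ in $\Cat{E}$. Since $\ol B$ is fibered, $\ol B(\cart{\phi}V)\colon\ol B(\phi^*V)\to\ol B V$ is Cartesian over $B\phi$; as $c_V\co\cart{\phi}V$ lies above $x\co\phi=B\phi\co y$, its universal property delivers a unique $d\colon\phi^*V\to\ol B(\phi^*V)$ above $y$ with $\ol B(\cart{\phi}V)\co d=c_V\co\cart{\phi}V$. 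Thus $(\phi^*V,d)\in\CA{\ol B}$ sits above $(Y,y)$ and $\cart{\phi}V$ is a $\ol B$-coalgebra morphism. It remains to verify that $\cart{\phi}V$ is Cartesian for $\CA p$: for a $\ol B$-coalgebra morphism $\psi\colon(W,c_W)\to(V,c_V)$ whose image under $\CA p$ factors through $\phi$ in $\CA B$ (say via $k$), the Cartesianness of $\cart{\phi}V$ in $\Cat{E}$ gives a unique $\Cat{E}$-arrow $g$ above $k$ with $\cart{\phi}V\co g=\psi$, and post-composing the two sides of $d\co g=\ol B g\co c_W$ with the Cartesian arrow $\ol B(\cart{\phi}V)$ — both sides lying above $y\co k$ — shows they agree, so $g$ is a $\ol B$-coalgebra morphism. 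Hence $\CA p$ is a fibration.

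\emph{Main obstacle.} The genuinely delicate part is the last one: constructing the reindexed coalgebra structure $d$ on $\phi^*V$ and showing $\cart{\phi}V$ is Cartesian \emph{in the category of coalgebras}. This is precisely where the hypothesis that $\ol B$ is fibered is used (to make $\ol B(\cart{\phi}V)$ Cartesian, so that the universal property can be invoked). The recurring technical device throughout all three parts is that a Cartesian arrow is ``faithful from above'' — post-composition with it is injective on parallel arrows sharing a base image — which is what makes every coalgebra-compatibility equation go through; the rest is diagram chasing that commutes by $p\co\ol B=B\co p$.
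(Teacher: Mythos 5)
Your proof is correct. The paper itself gives no argument for this proposition --- it is imported verbatim from \cite{HasuoKC18} (Prop.~4.1 and 4.2) --- and your three-part argument (lifting condition $p\co\ol B=B\co p$ for functoriality of $\CA{p}$, the bijection $c\mapsto\tilde c$ via the universal property of $\cart{x}(\ol B U)$ for the fibre identification, and the use of fiberedness to make $\ol B(\cart{\phi}V)$ Cartesian so that the reindexed coalgebra structure $d$ exists and $\cart{\phi}V$ is Cartesian in $\CA{\ol B}$) is exactly the standard proof given there, with the hypothesis on $\ol B$ used in precisely the place you identify.
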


\begin{lemma}[from~{\cite{KKHKH19,HasuoKC18}}]\label{lem:CLatWFibCoindPred}
	Let $\ol{B}$ be a lifting of $B$ along $p$. If $p$ is a $\CLatw$-fibration, then $\nu(x^{*}\co\ol{B})$ exists for each $B$-coalgebra $x:X\to BX$---it is the gfp of the monotone map $x^{*}\co\ol{B}$ over the complete lattice $\Cat{E}_{X}$. Moreover, if $\ol{B}$ is fibered, then these coinductive predicates are preserved by reindexing along coalgebra morphisms. \myqed
\end{lemma}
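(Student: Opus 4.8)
The plan is to treat the two claims in turn. For the first claim, I would start from the observation that $x^{*}\co\ol B$ really is an endofunction of $\Cat{E}_{X}$: since $p\co\ol B = B\co p$, the functor $\ol B$ sends an object above $X$ to an object above $BX$, so it restricts to $\ol B\colon\Cat{E}_{X}\to\Cat{E}_{BX}$, and then $x^{*}\colon\Cat{E}_{BX}\to\Cat{E}_{X}$ sends it back above $X$. Both restrictions are monotone — the first because $\ol B$ is a functor and $\Cat{E}_{X},\Cat{E}_{BX}$ are posets, the second because reindexing functors are functorial — so $x^{*}\co\ol B$ is a monotone self-map of the complete lattice $\Cat{E}_{X}$ (complete by the definition of a $\CLatw$-fibration). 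By the Knaster--Tarski theorem it has a greatest fixed point, which is moreover its greatest post-fixed point. Since in a poset an $F$-coalgebra is the same thing as a post-fixed point of $F$ and a coalgebra morphism is just an inequality, the final $x^{*}\co\ol B$-coalgebra is exactly this greatest post-fixed point; hence $\nu(x^{*}\co\ol B)$ exists and coincides with the gfp of $x^{*}\co\ol B$.

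For the second claim, fix a coalgebra morphism $f\colon(X,x)\to(Y,y)$, i.e.\ $Bf\co x = y\co f$. The heart of the argument is that $f^{*}$ intertwines the two predicate transformers, $f^{*}\co(y^{*}\co\ol B) = (x^{*}\co\ol B)\co f^{*}$, and this is exactly where fiberedness of $\ol B$ enters. From $Bf\co x = y\co f$ and functoriality of reindexing one gets $f^{*}\co y^{*} = (y\co f)^{*} = (Bf\co x)^{*} = x^{*}\co(Bf)^{*}$; and fiberedness of $\ol B$ — preservation of Cartesian arrows, applied to the Cartesian lifting $\cart{f}Q$ — forces $\ol B(f^{*}Q) = (Bf)^{*}(\ol B Q)$ on the nose, because in a posetal fibration the canonical vertical comparison between two Cartesian liftings of the same arrow is an identity. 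Composing, $f^{*}\co y^{*}\co\ol B = x^{*}\co(Bf)^{*}\co\ol B = x^{*}\co\ol B\co f^{*}$.

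It then remains to promote this intertwining to the equality $f^{*}\bigl(\nu(y^{*}\co\ol B)\bigr) = \nu(x^{*}\co\ol B)$, and here I would use the Kleene/Cousot transfinite iteration for the gfp: on the complete lattice $\Cat{E}_{Y}$ one has $\nu(y^{*}\co\ol B) = (y^{*}\co\ol B)^{(\gamma)}(\top)$ for a suitably large ordinal $\gamma$, where the ordinal-indexed chain starts at $\top$, applies $y^{*}\co\ol B$ at successors and takes meets at limits, and similarly over $\Cat{E}_{X}$. Because $f^{*}$ preserves all meets (the defining property of a $\CLatw$-fibration), it preserves $\top$ and the limit-stage meets, and by the intertwining equality it commutes with each successor step; a routine transfinite induction then shows $f^{*}$ carries the chain over $Y$ onto the corresponding chain over $X$, hence $f^{*}\bigl(\nu(y^{*}\co\ol B)\bigr) = \nu(x^{*}\co\ol B)$. (A more conceptual alternative: meet-preservation gives $f^{*}$ a left adjoint; one shows $f^{*}(\nu(y^{*}\co\ol B))$ is a fixed point of $x^{*}\co\ol B$ for the ``$\sqsubseteq$'' direction, and that the left adjoint applied to $\nu(x^{*}\co\ol B)$ is a post-fixed point of $y^{*}\co\ol B$ for the reverse.) I expect the only real subtlety to be the passage from ``$\ol B$ preserves Cartesian arrows'' to the strict equation $\ol B\co f^{*} = (Bf)^{*}\co\ol B$ of reindexing functors; everything after that is bookkeeping with monotone maps on complete lattices.
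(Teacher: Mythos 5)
Your proof is correct. The paper gives no proof of this lemma---it is imported from \cite{KKHKH19,HasuoKC18}---and your argument is essentially the standard one found in those references: Knaster--Tarski on the complete-lattice fiber $\Cat{E}_{X}$ for existence (using that a final coalgebra of a monotone map on a poset is its greatest post-fixed point, hence its gfp), and, for preservation, the strict intertwining $f^{*}\co y^{*}\co\ol{B} = x^{*}\co\ol{B}\co f^{*}$ (which indeed holds on the nose in a posetal fibration once $\ol{B}$ preserves Cartesian arrows, since the vertical comparison of two Cartesian liftings of the same arrow is an identity) combined with meet-preservation of $f^{*}$ and transfinite iteration of the greatest-fixed-point chain from $\top$.
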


\begin{notation}
	Final objects shall be denoted with subscripts ($1_{\Cat{E}}, 1_{X}$, etc.) to clarify in which category it is final, unless it is obvious. We will also write $\top$ for the maximum element of a poset, that is, the final object when the poset is thought of as a category. This typically happens 
	with the final object $\top_{X}\in \Cat{E}_{X}$ in a fiber $\Cat{E}_{X}$ of a $\CLatw$-fibration.
\end{notation}

\section{Expressivity Situation}
\label{sec:expSit}

On top of the above preliminaries, we fix the format of categorical data under which we study expressivity. It is called an expressivity situation.
While it may seem overwhelming, we  show that the data arises naturally, with clear intuition from the viewpoints of modal logics and observations (\S{}\ref{subsec:syntaxSemOfLogic}--\ref{subsec:codensityBisim}).

\subsection{Definition}\label{subsec:expSitDef}

\begin{definition}\label{def:expSitX}
	An \emph{expressivity situation} 
 $\esit=(p,B,\Omega,\bOmega,\Sigma,\Lambda,(f_\sigma)_{\sigma\in\Sigma},(\tau_\lambda)_{\lambda\in\Lambda})$ is given by the following.
	\begin{itemize}
		\item A $\CLatw$-fibration $p\colon\Cat{E}\to\Cat{C}$.
		\item A functor $B\colon\Cat{C}\to\Cat{C}$ (a \emph{behavior functor}).
		\item An object $\Omega\in\Cat{C}$ (a \emph{truth-value object}) equipped with finite powers ($\Omega^{n}\in\Cat{C}$ for $n\in \mathbb{N}$), and another object $\bOmega$ (an \emph{observation predicate}) above it. It follows that $\bOmega$ also has finite powers~\cite[Prop.~9.2.1]{Jacobs:fib}.
		\item A ranked alphabet $\Sigma$ of \emph{propositional connectives} and a family of arrows $\left(f_\sigma\colon\Omega^{\rank(\sigma)}\to\Omega\right)_{\sigma\in\Sigma}$ (a \emph{propositional structure}). Moreover, we require that each $f_\sigma\colon\Omega^{\rank(\sigma)}\to\Omega$ has a \emph{lifting} $g_\sigma\colon\bOmega^{\rank(\sigma)}\to\bOmega$ (in $\Cat{E}$) such that $pg_\sigma=f_\sigma$.
\item A set $\Lambda$ of \emph{modality indices} and a family of algebras $\left(\tau_\lambda\colon B\Omega\to\Omega\right)_{\lambda\in\Lambda}$ (\emph{observation modalities}).
	\end{itemize}
\end{definition}

Roughly speaking, $\Sigma$ and $\Lambda$ are used for modal logic \emph{syntax}, and $\Cat{C}$, $B$, $\Omega$, $(f_\sigma)_{\sigma\in\Sigma}$, and $(\tau_\lambda)_{\lambda\in\Lambda}$ are used for modal logic \emph{semantics}.
The other constructs ($p$ and $\bOmega$) are there for defining a bisimilarity-like notion.

In what follows, we formulate the expressivity problem on top of~\cref{def:expSitX}, explaining the role of each piece of data in an expressivity situation $\esit$. More specifically, we let $\esit$ induce the following constructs: 1) the modal logic  $L_\esit$ (\cref{def:syntax,def:formulaSemantics}); 2) the \emph{fibrational logical equivalence} $\LEq{\esit}{x}$ induced by $L_\esit$  (\cref{def:logicalEq}); and 3) the bisimilarity-like notion
 $\CBisim{\bOmega}{\tau}{x}$ as a \emph{codensity bisimilarity} (\cref{def:codensityBisim}). Comparison of the last two is the problem of expressivity.
As  an illustrating example, we use an expressivity situation $\esit_{\mathrm{KMM}}$ that arises from the real-valued logic $\mathcal{M}(\Lambda)$ in~\cite{KonigM18} (see also~\cref{sec:exampleKoenigMM}).

\subsection{Syntax and Semantics of Our Logic $L_\esit$}\label{subsec:syntaxSemOfLogic}

The syntax of modal logic is specified by the propositional connectives in $\Sigma$ and the modality indices in $\Lambda$.
\begin{definition}[$L_\esit$]
	\label{def:syntax}
	Let $\esit$ be an expressivity situation in \cref{def:expSitX}. The modal logic $L_{\esit}$ has the following syntax.
	\begin{align*}
		\varphi, \varphi_{1},\dotsc,  \varphi_{n}\;::=&\;\sigma(\varphi_{1},\dotsc,\varphi_{\rank(\sigma)})&(\sigma\in\Sigma) \\
		\mid&\;\heartsuit_{\lambda}\varphi&(\lambda\in\Lambda)
	\end{align*}
	We also let $L_{\esit}$ denote the set of all formulas.
\end{definition}

\begin{example}
	\label{ex:syntaxKMM}
	Let $\Lambda$ be a set.
	To model the modal logic $\mathcal{M}(\Lambda)$ in~\cite{KonigM18}, we let $\Sigma=\{\top^0,\min^2,\neg^1\}\cup\{(\ominus q)^1~|~q\in\mathbb Q\cap[0,1]\}$.
Then the syntax is given by
	\begin{align*}
		\varphi, \varphi_{1},\dotsc,  \varphi_{n}\;::=&\;\top\mid\neg(\varphi)\mid\min(\varphi_{1},\varphi_{2})&\\
		\mid&\;(\ominus q)\varphi\enspace(q\in\mathbb Q\cap[0,1])\;\mid\;\heartsuit_{\lambda}\varphi\enspace(\lambda\in\Lambda). 
	\end{align*}
	 Identifying $\heartsuit_{\lambda}$ with $[\lambda]$ in the original notation, this recovers the syntax of $\mathcal{M}(\Lambda)$ in~\cite{KonigM18}.
\end{example}

Given a coalgebra $x\colon X\to BX$ of the behavior functor $B$, the semantics of each formula is a $\Cat{C}$-arrow from the state space $X$ to the truth-value object $\Omega$, inductively defined as follows.
\begin{definition}
	\label{def:formulaSemantics}
	Let $\esit$ be an expressivity situation in \cref{def:expSitX} and let $x\colon X\to BX$ be a $B$-coalgebra.
	For each $\varphi\in L_{\esit}$, the \emph{interpretation}  $\sem{\varphi}_{x}\colon X\to\Omega$ of $\varphi$ with respect to $x$ is defined inductively as follows:
	\begin{align*}
		\sem{\sigma(\varphi_{1},\dotsc,\varphi_{\rank(\sigma)})}&=f_{\sigma}\co\langle\sem{\varphi_{1}},\dotsc,\sem{\varphi_{\rank(\sigma)}}\rangle,  &(\sigma\in\Sigma) \\
		\sem{\heartsuit_{\lambda}\varphi}&=\tau_{\lambda}\co(B\sem{\varphi})\co x.&(\lambda\in\Lambda)
	\end{align*}
\end{definition}

\begin{example}
	\label{ex:formulaSemanticsKMM}
	Recall \cref{ex:syntaxKMM}.
	Let $B\colon\Set\to\Set$ be an endofunctor, and
$\Omega$ be the unit interval $[0,1]$.
	We specify the propositional structure $(f_\sigma\colon[0,1]^{\rank(\sigma)}\to[0,1])_{\sigma\in\Sigma}$ by:
	\begin{align*}
		f_{\top}()&=1,&f_{\min}(x,y)&=\min(x,y),\\
		f_{\neg}(x)&=1-x,&f_{\ominus q}(x)&=\max(x-q,0).
	\end{align*}
	Here $\min$ plays the role of conjunction.
	Let $(\tau_\lambda\colon B[0,1]\to[0,1])$ be  a family of observation modalities, and  $x\colon X\to BX$ be a $B$-coalgebra.
	Then, 
the semantics $\sem{\varphi}_x$ of each formula $\varphi$  in \cref{def:formulaSemantics} coincides with the definition in~\cite[\S{}3.2]{KonigM18}.
\end{example}

The following definition  generalizes, in fibrational terms, the conventional definition that two states are logically equivalent if each formula's truth values coincide.

\begin{definition}[fibrational logical equivalence $\LEq{\esit}{x}$]
	\label{def:logicalEq}
	Let $\esit$ be an expressivity situation in \cref{def:expSitX} and let $x\colon X\to BX$ be a $B$-coalgebra.
	The \emph{fibrational logical equivalence} $\LEq{\esit}{x}$ with respect to $x$ is a predicate above $X$ defined by \[
		\LEq{\esit}{x}=\textstyle\bigsqcap_{\varphi\in L_{\esit}}\sem{\varphi}_{x}^{*}\bOmega, \quad\text{where }
\vcenter{		\xymatrix@R=.5em@C-.5em{
			\Cat{E} \ar[d]^-p & \sem{\varphi}_{x}^{*}\bOmega \ar@{-->}[r]
			& \bOmega
			\\ 
\Cat{C} &   X\ar[r]^-{\sem{\varphi}_{x}} & \Omega\mathrlap{.}
		}
}
	\]
\end{definition}

\begin{example}
	\label{ex:logicalEqKMM}
	Recall \cref{ex:formulaSemanticsKMM}.
	To define a logical distance, we let $p$ be the $\CLatw$-fibration $\PMetb\to\Set$ (\cref{ex:fibr}), and $\bOmega$ be the usual Euclidean metric $d_e$ on $[0,1]$.
	
	Then, for each $B$-coalgebra $x\colon X\to BX$ , the pseudometric $d^L_x := \LEq{\esit_{\mathrm{KMM}}}{x}$ is equivalently described by  \[
		d^L_x(s,t) = \textstyle\sup_{\varphi} d_e\bigl(\sem{\varphi}_x(s),\sem{\varphi}_x(t)\bigr),
	\] where $\varphi$ ranges over the modal formulas.
	Thus \cref{def:logicalEq} coincides with the notion of logical distance in~\cite[Def.~25]{KonigM18}.
\end{example}

\subsection{Codensity Lifting and Codensity Bisimilarity}\label{subsec:codensityBisim}
We unify different quantitative bisimilarity notions---such as probabilistic bisimilarity and bisimulation metric---using \emph{codensity bisimilarity}.  This is what is compared with the fibrational logical equivalence (\cref{def:logicalEq}). 
Codensity bisimilarity  arises natually from the notion of \emph{codensity lifting}~\cite{KKHKH19,SprungerKDH18}.

\begin{wrapfigure}[3]{r}{0pt}

\vspace{-.4em}
\begin{math}
  \vcenter{\xymatrix@R=.8em@C=.9em{
  \Cat{E}
      \ar[d]_-{p}
  &
  {k^{*}\bOmega}
     \ar[r]&
  {\bOmega}
 \\
  \Cat{C}
  &
  {X
}
    \ar[r]^{k
}
  &
  {\Omega}
 }} 
\end{math}
\end{wrapfigure}
The codensity lifting, although it is formulated in abstract terms (\cref{def:codensityLifting}), has clear  \emph{observational} intuition. The codensity lifting $\codlif{B}{\bOmega}{\tau}\colon\Cat{E}\to\Cat{E}$ is defined for the following data, which is part of the data in an expressivity situation (\cref{def:expSitX}).
\begin{itemize}
 \item A fibration $p\colon\Cat{E}\to\Cat{C}$
for the observation mode (\cref{subsec:prelimFibration}).
 \item An endofunctor $B\colon\Cat{C}\to\Cat{C}$; which is to be lifted (\cref{def:fibrationalFunctorLifting}). Our target system is a $B$-coalgebra (\cref{subsec:prelimCoalg}). 
\item A truth value domain  $\Omega\in\Cat{C}$. Here we use ``$\Omega$-valued observations,'' that is, arrows $k\colon X\to \Omega$ in $\Cat{C}$ (cf.\ the diagram above). 

 \item An observation predicate $\bOmega\in\Cat{E}$ above $\Omega$. It is a ``template of observations,'' whose pullback $k^{*}(\bOmega)$ by an observation $k$ defines an indistinguishability notion on $X$. See above.

For $\Cat{E}=\EqRel$, a common example is $\Omega=2$ and $\bOmega=(2,\Eq_2\subseteq 2\times 2)$; it means we distinguish elements of $X$ if they are mapped to different elements by an observation $k\colon X\to 2$.
 \item A family of observation modalities $(\tau_\lambda\colon B\Omega\to\Omega)_{\lambda\in\Lambda}$. 
An observation modality $\tau_{\lambda}$ specifies how observations interact with the behavior type $B$. Technically, it lifts
\begin{itemize}
 \item 
       an observation $k\colon X\to \Omega$ of $X$ 
 \item 
to an observation \begin{math}
 BX\xrightarrow{Bk} B\Omega\xrightarrow{\tau_{\lambda}}\Omega
\end{math}
 of $BX$.
\end{itemize}
\end{itemize}
Given the above data with the observational intuition, the codensity lifting is defined as the ``indistinguishability with respect to lifted $\Omega$-valued observations,'' as below.

\begin{definition}[codensity lifting]
	\label{def:codensityLifting}
	Let $\esit$ be an expressivity situation in \cref{def:expSitX}.
	The \emph{codensity lifting} of $B$ with respect to $\bOmega$ and $(\tau_\lambda)_{\lambda\in\Lambda}$ is the functor $\codlif{B}{\bOmega}{\tau}\colon\Cat{E}\to\Cat{E}$, defined by
	\begin{equation}\label{eq:codensity}
		\codlif{B}{\bOmega}{\tau}P = \textstyle
		\bigsqcap_{\lambda\in\Lambda,h\in\Cat{E}(P,\bOmega)}(\tau_\lambda\circ B(ph))^* \bOmega.
	\end{equation}
\end{definition}

\begin{wrapfigure}[4]{r}{0pt}

\vspace{-.4em}
\begin{math}
  \vcenter{\xymatrix@R=.8em@C=1.1em{
  \Cat{E}
      \ar[d]_-{p}
  &
  {\bullet}
     \ar@{-->}[rr]&&
  {\bOmega}
 \\
  \Cat{C}
  &
  {BX
}
    \ar[r]^{B(ph)
}
  &
  {B\Omega}
    \ar[r]^-{\tau_{\lambda}}
  &
  {\Omega}
 }} 
\end{math}
\end{wrapfigure}
Some explanations are in order.
An arrow $h\colon P\to \bOmega$ is a \emph{legitimate} $\Omega$-valued observation---it is a function $ph\colon X\to \Omega$ that respects indistinguishability predicates $P$ and  $\bOmega$. The latter legitimacy requirement instantiates to predicate- and relation-preservation, nonexpansiveness, continuity, etc., depending on the choice of the fibration $p$. 

The predicate $(\tau_\lambda\circ B(ph))^* \bOmega$ in~\cref{eq:codensity} is over $BX$ and induced from ``lifting  $h$ along $\tau_{\lambda}$,'' as shown in the above diagram. In~\cref{eq:codensity}, in the end, $\codlif{B}{\bOmega}{\tau}P$ arises as the coarsest indistinguishability induced by such $h$'s.

\begin{definition}[codensity bisimilarity $\CBisim{\bOmega}{\tau}{x}$]
	\label{def:codensityBisim}
	Let $\esit$ be an expressivity situation in \cref{def:expSitX} and let $x\colon X\to BX$ be a $B$-coalgebra.
	The \emph{codensity bisimilarity} $\CBisim{\bOmega}{\tau}{x}$ of $x$ is the $\codlif{B}{\bOmega}{\tau}$-coinductive predicate (\cref{def:fibrationalInvariantAndCoinductivePred}), i.e., the greatest fixed point of the map $x^*\co\codlif{B}{\bOmega}{\tau}\colon\Cat{E}_X\to\Cat{E}_X$: \[
		\CBisim{\bOmega}{\tau}{x}=\nu(x^*\co\codlif{B}{\bOmega}{\tau})\in\Cat{E}_X.
	\]
\end{definition}

\begin{example}
	\label{ex:codensityBisimKMM}
	Recall \cref{ex:logicalEqKMM}.
	In this case we can see that the codensity lifting coincides with the \emph{Kantorovich lifting};  see \cref{sec:exampleKoenigMM} for details.
	Thus the codensity bisimilarity coincides with the behavioral distance defined in~\cite[Def.~22]{KonigM18}.
\end{example}

Much like the codensity lifting  explained in terms of ``observations,'' codensity bisimilarity can be regarded as an outcome of ``repeated observations.'' We recall here the following game-theoretic characterization~\cite{KKHKH19}, which is not used in the rest of the paper yet is useful for providing intuitions.

\begin{fact}[{from~\cite[Cor.~VI.4]{KKHKH19}}]
	\label{fact:codensityGame}
	In the setting of \cref{def:codensityBisim}, we define a two-player infinite game called the \emph{(untrimmed) codensity bisimilarity game} as shown in Table~\ref{table:untrimmedCodensityGame}. Its two players are called Spoiler and Duplicator; once a player gets stuck,  the player loses;  any infinite play is won by Duplicator.

	Then, for each $P\in\Cat{E}_X$, $P$ is below the codensity bisimilarity ($P\sqsubseteq\CBisim{\bOmega}{\tau}{x}$) if and only if the position $P\in\Cat{E}_X$ is winning for Duplicator.
\end{fact}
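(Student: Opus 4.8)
The plan is to obtain the statement as the game-theoretic form of the Knaster--Tarski characterization of $\CBisim{\bOmega}{\tau}{x}=\nu\Phi$, where $\Phi:=x^{*}\co\codlif{B}{\bOmega}{\tau}\colon\Cat{E}_{X}\to\Cat{E}_{X}$. Since $\Cat{E}_{X}$ is a complete lattice and $\Phi$ is monotone (\cref{lem:CLatWFibCoindPred}), we have $\nu\Phi=\bigsqcup\{Q\in\Cat{E}_{X}\mid Q\sqsubseteq\Phi(Q)\}$, so $P\sqsubseteq\CBisim{\bOmega}{\tau}{x}$ holds iff $P\sqsubseteq Q$ for some post-fixed point $Q$ of $\Phi$ --- call such a $Q$ a \emph{codensity bisimulation} --- and one may always take $Q=\CBisim{\bOmega}{\tau}{x}$ itself. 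The statement thus reduces to: Duplicator wins from position $P$ iff $P$ lies below some codensity bisimulation. The bridge to the game is the following unfolding of the post-fixed-point condition. Using that $x^{*}$ preserves $\bigsqcap$ (as $p$ is a $\CLatw$-fibration) together with the universal property of the Cartesian liftings in~\cref{eq:codensity}, $Q\sqsubseteq\Phi(Q)$ is equivalent to the assertion that for every $\lambda\in\Lambda$ and every $\Cat{E}$-arrow $h\colon Q\to\bOmega$ the $\Cat{C}$-arrow $\tau_{\lambda}\co B(ph)\co x\colon X\to\Omega$ lifts to an $\Cat{E}$-arrow $Q\to\bOmega$. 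This ``one-step closure under lifted observations'' is exactly what a single round of the untrimmed game in Table~\ref{table:untrimmedCodensityGame} verifies: Spoiler proposes a modality index $\lambda$ together with an observation $h$ out of the current bisimulation candidate, and Duplicator must continue from a candidate witnessing the corresponding lift.

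For the ``if'' direction, given $P\sqsubseteq Q$ with $Q$ a codensity bisimulation, Duplicator commits to $Q$ as the running candidate for the whole play: each Spoiler challenge $(\lambda,h)$ with $h\colon Q\to\bOmega$ is answered by the lift of $\tau_{\lambda}\co B(ph)\co x$ supplied by $Q\sqsubseteq\Phi(Q)$, and the play returns to the candidate $Q$. Hence Duplicator is never stuck, so every maximal play from $P$ is won by Duplicator.

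For the ``only if'' direction, fix a winning strategy for Duplicator from $P$ and let $Q$ be the join, in $\Cat{E}_{X}$, of all bisimulation candidates that Duplicator commits to along plays consistent with this strategy (the initial position $P$ lies below the first committed candidate, so $P\sqsubseteq Q$). To see that $Q$ is a codensity bisimulation, fix $\lambda$ and $h\colon Q\to\bOmega$. For each committed candidate $R$, restrict $h$ along $R\sqsubseteq Q$ to get $h|_{R}\colon R\to\bOmega$; since the strategy is winning it never gets stuck against the Spoiler challenge $(\lambda,h|_{R})$ from $R$, hence it provides a lift of $\tau_{\lambda}\co B(ph)\co x$ above some candidate $R'\sqsupseteq R$ with $R'\sqsubseteq Q$, so $R\sqsubseteq R'\sqsubseteq(\tau_{\lambda}\co B(ph)\co x)^{*}\bOmega$. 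Taking the join over all committed $R$ yields $Q\sqsubseteq(\tau_{\lambda}\co B(ph)\co x)^{*}\bOmega$, and then the meet over $\lambda$ and $h$ (which $x^{*}$ preserves) yields $Q\sqsubseteq\Phi(Q)$. By Knaster--Tarski $Q\sqsubseteq\nu\Phi$, and with $P\sqsubseteq Q$ this gives $P\sqsubseteq\CBisim{\bOmega}{\tau}{x}$.

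The part I expect to require the most care is not the fixed-point bookkeeping above --- which is the usual strategy/invariant correspondence --- but (i) transcribing the precise positions and moves of the untrimmed game from Table~\ref{table:untrimmedCodensityGame} and checking that its rounds correspond exactly to the unfolding of $\Phi$; and (ii) verifying that Spoiler's per-round choice of a single observation $h$, ranging over all of $\Cat{E}(Q,\bOmega)$, genuinely resolves the infinitary meet $\bigsqcap_{\lambda,h}$ defining $\codlif{B}{\bOmega}{\tau}$, so that ``Duplicator survives all Spoiler plays from $Q$'' coincides with ``$Q\sqsubseteq\Phi(Q)$''. This is precisely where the \emph{untrimmed} formulation matters --- a trimmed variant that bounds the observations used per round would only capture a coarser lifting, hence merely an approximant --- and where the $\CLatw$-structure is used, both for the meet-preservation of $x^{*}$ and for the existence of the join $Q$ in $\Cat{E}_{X}$.
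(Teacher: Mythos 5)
There is a genuine gap, and it starts with a misreading of the game itself. (Note also that the paper offers no proof of this Fact --- it is imported verbatim from~\cite[Cor.~VI.4]{KKHKH19} --- so the comparison below is against the argument in that reference.) In Table~\ref{table:untrimmedCodensityGame}, Spoiler's move from a position $P\in\Cat{E}_X$ is a \emph{bare} $\Cat{C}$-arrow $k\colon X\to\Omega$ subject to the distinguishing condition $\exists\lambda.\,P\not\sqsubseteq x^*(Bk)^*\tau_\lambda^*\bOmega$; it is emphatically not required to be a legitimate observation $h\colon Q\to\bOmega$ of any candidate. Dually, Duplicator's move from $k$ is a predicate $P'$ with $P'\not\sqsubseteq k^*\bOmega$, i.e.\ a witness that $k$ is \emph{illegitimate} --- not ``a lift of $\tau_\lambda\co B(ph)\co x$.'' Your ``if'' direction contains the right germ (Duplicator keeps replaying a post-fixed point $Q\sqsupseteq P$), but the justification should be: any legal Spoiler move $k$ from $Q$ satisfies $Q\not\sqsubseteq x^*(Bk)^*\tau_\lambda^*\bOmega$ for some $\lambda$, and since $Q\sqsubseteq\codlif{B}{\bOmega}{\tau}$-unfolding $=\bigsqcap_{\lambda,h\colon Q\to\bOmega}x^*(B(ph))^*\tau_\lambda^*\bOmega$, such a $k$ cannot factor through $\bOmega$ from $Q$, whence $Q\not\sqsubseteq k^*\bOmega$ and $Q$ is a legal Duplicator reply.

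The ``only if'' direction is where the argument actually breaks. Your key inequality $R\sqsubseteq R'\sqsubseteq(\tau_\lambda\co B(ph)\co x)^*\bOmega$ directly contradicts the legality condition of the very Spoiler move you invoke (a legal challenge $k=ph$ from $R$ \emph{means} $R\not\sqsubseteq x^*(B(ph))^*\tau_\lambda^*\bOmega$), and the claim $R'\sqsupseteq R$ is unsupported --- in fact $R\sqsubseteq k^*\bOmega$ while $R'\not\sqsubseteq k^*\bOmega$ forces $R'\not\sqsubseteq R$, so Duplicator's committed candidates do not form a chain whose join you can control this way. More fundamentally, soundness of this game is not a post-fixed-point/invariant correspondence at all, because Duplicator's reply may jump to an unrelated predicate. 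The proof in~\cite{KKHKH19} instead runs over the transfinite final sequence $(\Phi^\alpha(\top))_\alpha$ of $\Phi=x^*\co\codlif{B}{\bOmega}{\tau}$: if $P\not\sqsubseteq\nu\Phi$, let $\beta+1$ be the least ordinal with $P\not\sqsubseteq\Phi^{\beta+1}(\top)$ (it must be a successor, since reindexing preserves the meets taken at limits); Spoiler plays the underlying map $k=ph$ of a witness $h\colon\Phi^\beta(\top)\to\bOmega$, and any legal Duplicator reply $P'$ satisfies $P'\not\sqsubseteq k^*\bOmega\sqsupseteq\Phi^\beta(\top)$, hence has strictly smaller rank. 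Well-foundedness of the ordinals then forces Duplicator to get stuck in finitely many rounds. Without this rank argument (or an equivalent), the soundness direction does not go through.
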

\begin{table}[tbp]
  \caption{(Untrimmed) codensity bisimilarity game}
  \label{table:untrimmedCodensityGame}
  \centering \renewcommand{\arraystretch}{1}
  \begin{tabular}{l|l|l}
    position & player &  possible moves    \\\hline\hline
$P\in\Cat{E}_X$
             & Spoiler &  $k\colon X\to\Omega$ in $\Cat{C}$ such that   
    \\
             & &  $\exists\lambda\in\Lambda.\,P\not\sqsubseteq x^*(Bk)^*\tau_\lambda^*\bOmega$
    \\\hline
    $k\in\Cat{C}(X,\Omega)$
             & Duplicator &  $P\in\Cat{E}_X$ s.t.\ $P\not\sqsubseteq k^*\bOmega$
\end{tabular}
\end{table}

\noindent
In the codensity bisimilarity game, Spoiler repeatedly carries out observations $k\colon X\to \Omega$, trying to show that the previous move $P\in\Cat{E}_{X}$ by Duplicator was in fact not below the codensity bisimilarity.
Duplicator responds with a counter-argument that Spoiler's $k\colon X\to \Omega$ is illegitimate, not respecting the indistinguishability predicates $P$ (on $X$) and $\bOmega$ (on $\Omega$).

\subsection{Adequacy and Expressivity}\label{subsec:adequacyExpressivity}
 We are ready to formulate  adequacy and expressivity. Recall that  $P\sqsubseteq Q$  in a fiber means that $P$ is more discriminating.
\begin{definition}\label{def:expressive}
	Let $\esit$ be an expressivity situation (\cref{def:expSitX}) and  $x\colon X\to BX$ be a $B$-coalgebra.
\begin{itemize}
 \item 	$\esit$ is \emph{expressive} for $x$ if $\CBisim{\bOmega}{\tau}{x}\sqsupseteq\LEq{\esit}{x}$ holds.
 \item 	$\esit$ is \emph{adequate} for $x$ if $\CBisim{\bOmega}{\tau}{x}\sqsubseteq\LEq{\esit}{x}$ holds. 
\end{itemize}	
	$\esit$ is \emph{expressive} (or \emph{adequate}) if it is expressive (or adequate, respectively) for any $B$-coalgebra $x$.
\end{definition}
The following result  justifies our axiomatization in~\cref{def:expSitX}: adequacy, a property that is a prerequisite in most usage scenarios of modal logics, follows easily from the axiomatization itself.
\begin{propositionrep}
	\label{prop:adequacy}
Any expressivity situation $\esit$ in \cref{def:expSitX} is adequate.
 \myqed
\end{propositionrep}
\begin{proof}
	Fix a $B$-coalgebra $x\colon X\to BX$.
	It suffices to show that, for any $\varphi\in L_{\esit}$, $\nu(x^*\co\codlif{B}{\bOmega}{\tau})\sqsubseteq\sem{\varphi}_{x}^{*}\bOmega$ holds.
	We show this by structural induction on $\varphi$.
	
	Assume $\varphi=\sigma(\varphi_{1},\dotsc,\varphi_{\rank(\sigma)})$ where $\sigma\in\Sigma$.
	By IH, for each $i=1,\dots,\rank(\sigma)$, $\nu(x^*\co\codlif{B}{\bOmega}{\tau})\sqsubseteq\sem{\varphi_{i}}_{x}^{*}\bOmega$ holds, and thus there exists an arrow $h_{i}\colon\nu(x^*\co\codlif{B}{\bOmega}{\tau})\to\bOmega$ in $\Cat{E}$ that satisfies $ph_{i}=\sem{\varphi_{i}}_{x}$.
	Take an arrow $g_\sigma\colon\bOmega^{\rank(\sigma)}\to\bOmega$ satisfying $pg_\sigma=f_\sigma$ (its existence is required in \cref{def:expSitX}).
	Consider the arrow $g_\sigma\co\langle h_1,\dots,h_{\rank(\sigma)}\rangle\colon\nu(x^*\co\codlif{B}{\bOmega}{\tau})\to\bOmega$.
	Since $p$ sends this arrow to $f_\sigma\co\langle \sem{\varphi_1}_x,\dots,\sem{\varphi_{\rank(\sigma)}}\rangle=\sem{\varphi}_x$, $\nu(x^*\co\codlif{B}{\bOmega}{\tau})\sqsubseteq\sem{\varphi}_{x}^{*}\bOmega$ holds.
	
	Assume $\varphi=\heartsuit_{\lambda}\varphi'$ where $\lambda\in\Lambda$.
	By IH, $\nu(x^*\co\codlif{B}{\bOmega}{\tau})\sqsubseteq\sem{\varphi'}_{x}^{*}\bOmega$ holds.
	Applying $\codlif{B}{\bOmega}{\tau}$ to both sides yields
	\begin{align*}
		\codlif{B}{\bOmega}{\tau}\nu(x^*\co\codlif{B}{\bOmega}{\tau})&\sqsubseteq\codlif{B}{\bOmega}{\tau}\sem{\varphi'}_{x}^{*}\bOmega \\
		&=\bigsqcap_{\lambda'\in\Lambda,h\colon\sem{\varphi'}_{x}^{*}\bOmega\to\bOmega}(B(ph))^*\tau_{\lambda'}^*\bOmega\\
		&\sqsubseteq(B\sem{\varphi'}_{x})^*\tau_{\lambda}^*\bOmega.\\
	\end{align*}
	Then by applying $x^*$ to both sides we obtain the claim: \[
		\nu(x^*\co\codlif{B}{\bOmega}{\tau}) = x^*\codlif{B}{\bOmega}{\tau}\nu(x^*\co\codlif{B}{\bOmega}{\tau}) \sqsubseteq x^*(B\sem{\varphi'}_{x})^*\tau_{\lambda}^*\bOmega = \sem{\varphi}_x^*\bOmega.
	\]
	This concludes the induction.
\end{proof}

\begin{example}
	\label{ex:expressiveKMM}
	Recall \cref{ex:codensityBisimKMM}. Expressivity of the expressivity situation $\esit_{\mathrm{KMM}}$ means that,  for each $x\colon X\to BX$ and each pair $(s,t)\in X^2$ of states, the inequality $d_x(s,t)\le d^L_x(s,t)$ holds  between the behavioral and logical distances (``$d^L_x$ is more discriminating''). Adequacy means that, for each $x$ and $(s,t)$, $d_x(s,t)\ge d^L_x(s,t)$ holds.
\end{example}

\section{Approximation in Quantitative Expressivity}
\label{sec:approxAndExp}
In this section, based on the axiomatization in~\cref{sec:expSit}, we present a fibrational notion of \emph{approximating family of observations}. The notion axiomatizes and unifies the  ``approximation'' properties that are key steps in many recent quantitative expressivity proofs, such as in~\cite{KonigM18,ClercFKP19,WildSP018,WildSP019}.

We then proceed to present two proof principles for expressivity---\emph{Knaster--Tarski} and \emph{Kleene}---that mirror two classic characterizations of greatest fixed points. These proof principles make a large part of an expressivity proof routine. The remaining technical challenges are 1) choosing a suitable propositional signature and 2) identifying suitable approximating families; our general framework singles out these technical challenges and thus eases the efforts for addressing them.

We recall the two characterizations of gfps.
\begin{theorem}\label{thm:KnasterTarskiKleeneX}
	Let $(L,\sqsubseteq)$ be a complete lattice, and $f\colon L\to L$ be a monotone function. 
	\begin{enumerate}
		\item \textbf{(Knaster--Tarski)} The set $\{l\in L\mid l\sqsubseteq f(l)\}$ of post-fixed points is a complete lattice. Its maximum $z$ satisfies $z=f(z)$, hence $z$ is the greatest fixed point $\nu f$. 
		\item \textbf{(Kleene)} Consider the following $\omega^{\op}$-chain in $L$. 
		\begin{equation}\label{eq:kleeneSeqX}
			\top \;\sqsupseteq\; f(\top)
			\;\sqsupseteq\; f^{2}(\top)
			\;\sqsupseteq\; \cdots
		\end{equation}
		If $f$ preserves the meet $\bigsqcap_{i\in \omega^{\op}}f^{i}(\top)$, then $\bigsqcap_{i\in \omega^{\op}}f^{i}(\top)$ is the greatest fixed point $\nu f$. 
	\end{enumerate}
\end{theorem}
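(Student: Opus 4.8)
Both statements are classical; I plan to give the standard order-theoretic arguments, using only that $L$ is a complete lattice and $f$ is monotone. For part~(1) the key object is the set of post-fixed points $\mathrm{Post}(f) := \{l\in L\mid l\sqsubseteq f(l)\}$ together with its supremum in $L$; for part~(2) the key object is the decreasing $\omega$-chain \eqref{eq:kleeneSeqX} together with its infimum, and the hypothesis that $f$ commutes with that particular infimum. Part~(1) will also be invoked inside the proof of part~(2), so I would prove them in this order.

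\textbf{Part~(1), Knaster--Tarski.} Put $z := \bigsqcup \mathrm{Post}(f)$, which exists since $L$ is complete. First I would check $z\in\mathrm{Post}(f)$: for every $l\in\mathrm{Post}(f)$ we have $l\sqsubseteq z$, hence $l\sqsubseteq f(l)\sqsubseteq f(z)$ by monotonicity; taking the join over all such $l$ gives $z\sqsubseteq f(z)$. So $z$ is the maximum of $\mathrm{Post}(f)$. Next, applying $f$ to $z\sqsubseteq f(z)$ yields $f(z)\sqsubseteq f(f(z))$, so $f(z)\in\mathrm{Post}(f)$ and therefore $f(z)\sqsubseteq z$; combining, $z=f(z)$. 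Since every fixed point is in particular a post-fixed point and hence $\sqsubseteq z$, the fixed point $z$ is the greatest one, i.e.\ $z=\nu f$. For the complete-lattice claim, I would observe that $\mathrm{Post}(f)$ is closed under arbitrary joins taken in $L$ (the argument just given for $z$ works verbatim for $\bigsqcup S$ with $S\subseteq\mathrm{Post}(f)$), and a poset that has all joins automatically has all meets, so $\mathrm{Post}(f)$ is a complete lattice; alternatively one exhibits the meet of $S$ in $\mathrm{Post}(f)$ as the join of the (nonempty, since $\bot\in\mathrm{Post}(f)$) set of post-fixed lower bounds of $S$.

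\textbf{Part~(2), Kleene.} A trivial induction shows \eqref{eq:kleeneSeqX} is genuinely decreasing: $f(\top)\sqsubseteq\top$, and applying the monotone $f$ to $f^{i+1}(\top)\sqsubseteq f^{i}(\top)$ gives the next step. Let $z:=\bigsqcap_{i\in\omega^{\op}}f^{i}(\top)$. By the hypothesis that $f$ preserves this meet, $f(z)=\bigsqcap_{i\in\omega^{\op}}f\bigl(f^{i}(\top)\bigr)=\bigsqcap_{i\in\omega^{\op}}f^{i+1}(\top)$, and since $\top=f^{0}(\top)$ is the largest element of the chain, adjoining it changes nothing: $\bigsqcap_{i\in\omega^{\op}}f^{i+1}(\top)=\bigsqcap_{i\in\omega^{\op}}f^{i}(\top)=z$. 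Hence $f(z)=z$, so $z$ is a fixed point. To see it is the greatest, use part~(1): $\nu f$ exists, and from $\nu f\sqsubseteq\top$ and monotonicity we get $\nu f=f^{i}(\nu f)\sqsubseteq f^{i}(\top)$ for all $i$, so $\nu f\sqsubseteq z$; conversely $z$ is a fixed point, hence $z\sqsubseteq\nu f$. Therefore $z=\nu f$.

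\textbf{Main obstacle.} There is essentially no deep obstacle here, as this is a textbook result; the only point that requires a moment's care is the index-shift identity $\bigsqcap_{i\in\omega^{\op}}f^{i+1}(\top)=\bigsqcap_{i\in\omega^{\op}}f^{i}(\top)$ in part~(2), which is exactly where the choice of starting the chain at $\top$ (rather than an arbitrary element) is used, and making precise that ``$f$ preserves the meet $\bigsqcap_{i\in\omega^{\op}}f^{i}(\top)$'' means $f\bigl(\bigsqcap_i f^i(\top)\bigr)=\bigsqcap_i f\bigl(f^i(\top)\bigr)$.
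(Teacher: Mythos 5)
Your proof is correct and is the standard textbook argument; the paper itself states this theorem as a recalled classical fact without proof, so there is nothing to diverge from. Both parts are handled properly, including the two points that need care (closure of the post-fixed points under joins computed in $L$, and the index-shift $\bigsqcap_{i}f^{i+1}(\top)=\bigsqcap_{i}f^{i}(\top)$ for the decreasing chain starting at $\top$).
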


\subsection{Approximating Family of Observations}\label{subsec:approximableFamily}

Our categorical notion of \emph{approximating family of observations} designates a ``good'' subset $S\subseteq\Cat{C}(X,\Omega)$ of $\Omega$-valued observations in a suitable sense. We will be asking if the set $\{\sem{\varphi}_{x}\colon X\to \Omega\mid\varphi\in L'\}$ of ``logical observations'' is approximating or not, where $L'$ is some set of modal formulas.

\begin{definition}[approximating family]
	\label{def:approximable}
	Let $\esit$ be an expressivity situation in \cref{def:expSitX} and $X$ be an object of $\Cat{C}$.
	A subset $S\subseteq\Cat{C}(X,\Omega)$ is an \emph{approximating family of observations}, or simply \emph{approximating}, if, for every morphism 
\begin{equation}\label{eq:approximable}
 	h\colon\;\textstyle \bigl(\,\bigsqcap_{k\in S}k^*\bOmega\,\bigr) \longrightarrow \bOmega \end{equation}	
of $\Cat{E}$ and every $\lambda\in\Lambda$, the following inequality holds: 
\begin{equation}\label{eq:approximable2}
 	\textstyle\bigsqcap_{k'\in S, \lambda'\in\Lambda} (\tau_{\lambda'}\co Bk')^*\bOmega\;\sqsubseteq\; (\tau_\lambda\co B(ph))^*\bOmega.
\end{equation}
Note that $k\colon X\to \Omega$ is a $\Cat{C}$-arrow while $h$ is an $\Cat{E}$-arrow.
\end{definition}

Some explanation is in order.
Intuitively, in the definition above, the set $S$ is a set of ``logical'' observations.
Each $h$ as in~\cref{eq:approximable} is a ``non-logical'' legitimate observation.
For such $h$, the r.h.s.~of~\cref{eq:approximable2} is the information obtained from the observation $h$.
(Note the way $h$ is used: it is not $h^*\bOmega$, but $(\tau_\lambda\co B(ph))^*\bOmega$.
This corresponds to~\cref{eq:codensity}.)
On the other hand, the l.h.s.~of~\cref{eq:approximable2} is the information from ``logical'' observations.
Thus, an intuitive meaning of the definition above is that no ``non-logical'' observation gives any additional information.
In many cases, the ``logical'' observations in $S$ approximate each ``non-logical'' ones $h$. See \cref{rem:twoStepsInApprox} for details.

Another intuition is given in terms of the codensity bisimilarity game  (\cref{fact:codensityGame}). Roughly, $S$ being an approximating family says that Spoiler may restrict its moves to $S\subseteq\Cat{C}(X,\Omega)$.

\begin{remark}\label{rem:twoStepsInApprox}
 In many examples, $S$ being an approximating family is proved in the following two steps: 1) showing that $ph$ can be approximated by observations in $S$; and 2) this approximation is preserved along the lifting  $k\mapsto\tau_{\lambda}\co Bk$ of observations over $X$ to those over $BX$. The former step is usually the harder one, and proved via arguments specific to the current situation (pseudometric spaces, measurable spaces, etc.). 
\end{remark}

\begin{example}
	\label{ex:approximableKMM}
	Recall \cref{ex:expressiveKMM}.
	Let $X$ be a set and $S\subseteq\Set(X,[0,1])$.
	In this case, $\bigsqcap_{k\in S}k^*\bOmega\in(\PMetb)_X$ is the pseudometric $d_S$ given by $d_S(x,y)=\sup_{k\in S}d_e(k(x),k(y))$.
	Therefore, in order to show  $S$ being an approximating family, we have to recover the pseudometric induced by $h\colon(X,d_S)\to([0,1],d_e)$ from observations in $S$, for each $h$.
	
	In \cref{prop:exampleKoenigMMApproximability} later, it will turn out that $S$ is approximating if the following hold (under \cref{asm:exampleKoenigMM}):
	\begin{itemize}
		\item $S$ is closed under  the four operations
		$\top$, $\min$, $\neg$, and $\ominus q$ for every
		$q \in \mathbb{Q} \cap [0,1]$.
		\item $(X,d_S)$ is totally bounded.
\end{itemize}
	In this case, any $h\colon(X,d_S)\to([0,1],d_e)$ can be uniformly approximated by a countable sequence of arrows in $S$. Moreover, this approximation is preserved by the lifting $k\mapsto\tau_{\lambda}\co Bk$ (this is what we require in \cref{asm:exampleKoenigMM}). These two facts establish that $S$ is approximating (cf.~\cref{rem:twoStepsInApprox}). See  \cref{prop:exampleKoenigMMApproximability} for details.
\end{example}

\subsection{The Knaster--Tarski Proof Principle for Expressivity}
From the Knaster--Tarski theorem, we can derive the following simple expressivity proof principle. Its proof is by showing that the logical equivalence $\LEq{\esit}{x}$ is a suitable invariant and thus underapproximates the codensity bisimilarity (\cref{thm:KnasterTarskiKleeneX}).

\begin{theoremrep}[the Knaster--Tarski proof principle]
	\label{thm:knasterTarskiApproxPrinciple}
	Let $\esit$ be an expressivity situation in \cref{def:expSitX} and $x\colon X\to BX$ be a $B$-coalgebra. If $\{\sem{\varphi}_x~|~\varphi\in L_{\esit}\}\subseteq\Cat{C}(X,\Omega)$ is approximating, 
	then $\esit$ is expressive for $x$. 
\myqed
\end{theoremrep}
\begin{proof}
	We show $\nu(x^*\co\codlif{B}{\bOmega}{\tau})\sqsupseteq\bigsqcap_{\varphi\in L_{\esit}}\sem{\varphi}_{x}^{*}\bOmega$.
	By the Knaster--Tarski theorem (\cref{thm:KnasterTarskiKleeneX}), it suffices to show \[
		x^*\codlif{B}{\bOmega}{\tau}\left(\bigsqcap_{\varphi\in L_{\esit}}\sem{\varphi}_{x}^{*}\bOmega\right)\sqsupseteq\bigsqcap_{\varphi\in L_{\esit}}\sem{\varphi}_{x}^{*}\bOmega.
	\]
	Since the l.h.s.~is equal to \[
		\bigsqcap_{\lambda\in\Lambda,h\colon\bigsqcap_{\varphi\in L_{\esit}}\sem{\varphi}_{x}^{*}\bOmega\to\bOmega}x^*(B(ph))^*\tau_{\lambda}^*\bOmega,
	\] it suffices to show \[
		x^*(B(ph))^*\tau_{\lambda}^*\bOmega \sqsupseteq \bigsqcap_{\varphi\in L_{\esit}}\sem{\varphi}_{x}^{*}\bOmega.
	\] for each $\lambda\in\Lambda$ and $h\colon\bigsqcap_{\varphi\in L_{\esit}}\sem{\varphi}_{x}^{*}\bOmega\to\bOmega$.
	
	The set $\{\sem{\varphi}_{x}~|~\varphi\in L_{\esit}\}\subseteq\Cat{C}(X,\Omega)$ being approximating implies the following lower bound of the l.h.s.:
	\begin{align*}
		x^*(B(ph))^*\tau_{\lambda}^*\bOmega &\sqsupseteq \bigsqcap_{\varphi'\in L_{\esit}, \lambda'\in\Lambda} x^*(\tau_{\lambda'}\co B\sem{\varphi'}_x)^*\bOmega \\
		&= \bigsqcap_{\varphi'\in L_{\esit}, \lambda'\in\Lambda} \sem{\heartsuit_{\lambda'}\varphi'}_x^*\bOmega \\
		&\sqsupseteq \bigsqcap_{\varphi\in L_{\esit}}\sem{\varphi}_{x}^{*}\bOmega. \\
	\end{align*}
	This concludes the proof.
\end{proof}
\noindent The theorem's applicability  hinges on whether we can show that the set $\{\sem{\varphi}_{x}~|~\varphi\in L_{\esit}\}\subseteq\Cat{C}(X,\Omega)$ is an approximating family (where $\varphi$ ranges over all formulas).
We use the theorem for the examples in~\S{}\ref{sec:exMarkovProcesses} \&~\ref{sec:exBisimUnif}.

\subsection{The Kleene Proof  Principle for Expressivity} \label{sec:kleenepp}
To make use of Kleene theorem, we have to consider 
\begin{equation}
	\label{eq:kleeneFinalSeq}
	\top \;\sqsupseteq\; (x^*\co\codlif{B}{\bOmega}{\tau})(\top)
	\;\sqsupseteq\; (x^*\co\codlif{B}{\bOmega}{\tau})^2(\top)
	\;\sqsupseteq\; \cdots
\end{equation}
where the functor $x^*\co\codlif{B}{\bOmega}{\tau}$ is from~\cref{def:codensityBisim}. 
We also have to assume that this sequence \emph{stabilizes after $\omega$ steps}, i.e., $\bigsqcap_{i<\omega}(x^*\co\codlif{B}{\bOmega}{\tau})^i(\top)$ is a fixed point of $x^*\co\codlif{B}{\bOmega}{\tau}$.

 We  stratify $L_{\esit}$  corresponding to the sequence~\cref{eq:kleeneFinalSeq}.

\begin{definition}[depth]
	Let $\esit$ be an expressivity situation in \cref{def:expSitX}.
	For each $\varphi\in L_{\esit}$, the \emph{depth of $\varphi$} $\depth(\varphi)$ is a natural number defined inductively as follows:
	\begin{align*}
		&\depth(\sigma(\varphi_{1},\dotsc,\varphi_{\rank(\sigma)})) \\
		&\quad=\max(\depth(\varphi_{1}),\dotsc,\depth(\varphi_{\rank(\sigma)}))  &(\sigma\in\Sigma) \\
		&\depth(\heartsuit_{\lambda}\varphi)=\depth(\varphi)+1&(\lambda\in\Lambda) 
	\end{align*}
	For $\sigma\in\Sigma$ with $\rank(\sigma)=0$, $\depth(\sigma())$ is defined to be $0$.
\end{definition}

We formulate the following proof principle. Unlike  Knaster--Tarski (\cref{thm:KnasterTarskiKleeneX}), it uses an explicit induction on the depth $i$. Its proof is therefore more involved but not much more.

\begin{theoremrep}[the Kleene proof principle]\label{thm:kleeneApproxPrinciple}
	Let $\esit$ be an expressivity situation as in \cref{def:expSitX} and $x\colon X\to BX$ be a $B$-coalgebra.
	Assume that the chain~\cref{eq:kleeneFinalSeq} in $\Cat{E}_X$ 
stabilizes after $\omega$ steps.
If the set $\{\sem{\varphi}_{x}~|~\varphi\in L_{\esit},\depth(\varphi)\le i\}\subseteq\Cat{C}(X,\Omega)$ is approximating  for each $i$, 	then $\esit$ is expressive for $x$. \myqed
\end{theoremrep}
\begin{proof}
	By Kleene theorem (\cref{thm:KnasterTarskiKleeneX}), it suffices to show $(x^*\co\codlif{B}{\bOmega}{\tau})^i(\top)\sqsupseteq\bigsqcap_{\varphi\in L_{\esit}}\sem{\varphi}_{x}^{*}\bOmega$ for each $i$.
	We show
	\begin{equation}\label{eq:kleeneApproxPrincipleProof1}
		 (x^*\co\codlif{B}{\bOmega}{\tau})^i(\top)\sqsupseteq\bigsqcap_{\varphi\in L_{\esit}, \depth(\varphi)\le i}\sem{\varphi}_{x}^{*}\bOmega 
	\end{equation}
	by induction on \(i\).
	
	For $i=0$,~\cref{eq:kleeneApproxPrincipleProof1} is trivial.
	
	Assume that~\cref{eq:kleeneApproxPrincipleProof1} holds for $i=j$ and we show it also holds for $i=j+1$.
	Start with~\cref{eq:kleeneApproxPrincipleProof1} for $i=j$.
	Applying $x^*\co\codlif{B}{\bOmega}{\tau}$ to both sides of it we obtain \[
		(x^*\co\codlif{B}{\bOmega}{\tau})^{j+1}(\top)\sqsupseteq(x^*\co\codlif{B}{\bOmega}{\tau})\bigsqcap_{\varphi\in L_{\esit}, \depth(\varphi)\le j}\sem{\varphi}_{x}^{*}\bOmega.
	\]
	Here, expanding the definition of the r.h.s.~we get \[
		(x^*\co\codlif{B}{\bOmega}{\tau})\bigsqcap_{\varphi\in L_{\esit}, \depth(\varphi)\le j}\sem{\varphi}_{x}^{*}\bOmega = \bigsqcap_{\lambda\in\Lambda,h\colon\bigsqcap_{\varphi\in L_{\esit}, \depth(\varphi)\le j}\sem{\varphi}_{x}^{*}\bOmega\to\bOmega}x^*(B(ph))^*\tau_{\lambda}^*\bOmega.
	\]
	Now let $\lambda\in\Lambda$ and $h\colon\bigsqcap_{\varphi\in L_{\esit}, \depth(\varphi)\le j}\sem{\varphi}_{x}^{*}\bOmega\to\bOmega$.
	That $\{\sem{\varphi}_{x}~|~\varphi\in L_{\esit},\depth(\varphi)\le j\}\subseteq\Cat{C}(X,\Omega)$ is an approximating family yields \begin{align*}
		x^*(B(ph))^*\tau_{\lambda}^*\bOmega &\sqsupseteq \bigsqcap_{\lambda'\in\Lambda,\varphi\in L_{\esit}, \depth(\varphi)\le j}x^*(B\sem{\varphi}_{x})^*\tau_{\lambda'}^*\bOmega \\
		&= \bigsqcap_{\lambda'\in\Lambda,\varphi\in L_{\esit}, \depth(\varphi)\le j}\sem{\heartsuit_{\lambda'}\varphi}_{x}^*\bOmega \\
		&\sqsupseteq \bigsqcap_{\varphi'\in L_{\esit}, \depth(\varphi')\le j+1}\sem{\varphi'}_{x}^*\bOmega.
	\end{align*}
	Thus we have
	\begin{align*}
		(x^*\co\codlif{B}{\bOmega}{\tau})^{j+1}(\top)&\sqsupseteq(x^*\co\codlif{B}{\bOmega}{\tau})\bigsqcap_{\varphi\in L_{\esit}, \depth(\varphi)\le j}\sem{\varphi}_{x}^{*}\bOmega \\
		&= \bigsqcap_{\lambda\in\Lambda,h\colon\bigsqcap_{\varphi\in L_{\esit}, \depth(\varphi)\le j}\sem{\varphi}_{x}^{*}\bOmega\to\bOmega}x^*(B(ph))^*\tau_{\lambda}^*\bOmega \\
		&\sqsupseteq \bigsqcap_{\varphi'\in L_{\esit}, \depth(\varphi')\le j+1}\sem{\varphi'}_{x}^*\bOmega.
	\end{align*}
	This concludes the induction.
\end{proof}
\noindent
 In~\cref{thm:kleeneApproxPrinciple}, we require that $\{\sem{\varphi}_{x}~|~\varphi\in L_{\esit},\depth(\varphi)\le i\}$ is approximating for each depth $i$; this is often easier than the case where $\varphi$ ranges over all formulas (as in \cref{thm:knasterTarskiApproxPrinciple}). We use the theorem for the example in~\cref{sec:exampleKoenigMM}.

\begin{example}
Sufficient conditions for being an approximating family were given in~\cref{ex:approximableKMM}. Combined with \cref{thm:kleeneApproxPrinciple}, it yields expressivity (\cref{cor:exampleKoenigMMExpressive}), one of the main results of~\cite{KonigM18}.

\end{example}

\begin{remark}
	\label{rem:chainLength}
 In \cref{thm:kleeneApproxPrinciple}, we assumed the stabilization of the chain~\cref{eq:kleeneFinalSeq} at length $\omega$. This assumption turns out to be benign, essentially because our modal formulas all have a finite depth (\cref{subsec:syntaxSemOfLogic}).
Specifically we can show the following: if the logic  $L_{\esit}$ is expressive for $x\colon X\to BX$, then the chain~\cref{eq:kleeneFinalSeq}  stabilizes after $\omega$ steps. 
See Appendix~\ref{appendix:remchainLength}.

\end{remark}

\section{Expressivity for the Kantorovich Bisimulation Metrics}\label{sec:exampleKoenigMM}

This section shows how one of the main results in~\cite{KonigM18}, expressivity of a real-valued logic w.r.t.~bisimulation metric, is proved by our Kleene proof principle (\cref{thm:kleeneApproxPrinciple}).
See also \cref{ex:syntaxKMM,ex:formulaSemanticsKMM,ex:logicalEqKMM}.

\begin{definition}
	\label{def:exampleKoenigMMSetting}
	Define an expressivity situation $\esit_{\mathrm{KMM}}$ by:
	\begin{itemize}
		\item Its  fibration is $\PMetb\to\Set$ (\cref{ex:fibr}).
		\item Its truth-value object is $[0,1]$ and its observation predicate is $d_e$, the usual Euclidean metric on $[0,1]$.
		\item The ranked alphabet of its propositional connectives is $\Sigma=\{\top^0,\min^2,\neg^1\}\cup\{(\ominus q)^1~|~q\in\mathbb Q\cap[0,1]\}$. Its propositional structure $(f_\sigma\colon[0,1]^{\rank(\sigma)}\to[0,1])_{\sigma\in\Sigma}$ is specified by:
		\begin{align*}
			f_{\top}()&=1&f_{\min}(x,y)&=\min(x,y)\\
			f_{\neg}(x)&=1-x&f_{\ominus q}(x)&=\max(x-q,0)
		\end{align*}
		\item The behavior functor $B\colon\Set\to\Set$, the set of its modality indices $\Lambda$, and its observation modalities $(\tau_\lambda\colon B[0,1]\to[0,1])_{\lambda\in\Lambda}$ are arbitrary.
	\end{itemize}
\end{definition}

The modal logic $L_{\esit_{\mathrm{KMM}}}$ is the same as the logic $\mathcal{M}(\Lambda)$ in~\cite[Table 1]{KonigM18}.
What they call an \emph{evaluation map} $\gamma\in\Gamma$ corresponds to an observation modality $\tau_\lambda(\lambda\in\Lambda)$ in our framework.
Thus, the fibrational logical equivalence $\LEq{\esit_{\mathrm{KMM}}}{\alpha}$ (\cref{def:logicalEq}) coincides with the \emph{logical distance} $d_\alpha^L$~\cite[Def.~25]{KonigM18} for a coalgebra $\alpha\colon X\to BX$.

Moreover, the codensity lifting $\codlif{B}{d_e}{\tau}$ specializes to the {\em Kantorovich lifting} by \cite{BaldanBKK18}:
\begin{align*}
	\codlif{B}{d_e}{\tau}(X,d) &= (BX, d_B)\quad\text{where}\\
	d_B(t_1,t_2) &=
	\textstyle\sup_{\lambda,h}
	d_e(\tau_\lambda ((B h) (t_1)),\tau_\lambda((Bh) (t_2))).
\end{align*}
In the above sup, $\lambda,h$ ranges over $\Lambda$ and
$\PMetb((X,d),([0,1],d_e))$, respectively.
Thus, the codensity bisimilarity $\CBisim{d_e}{\tau}{\alpha}$ (\cref{def:codensityBisim}) recovers the definition of the \emph{behavioral distance} $d_\alpha$~\cite[Def.~22]{KonigM18} for a coalgebra $\alpha\colon X\to BX$.

From \cref{prop:adequacy} we obtain:
\begin{corollary}
	For $\alpha\colon X\to BX$, $d_\alpha \ge d_\alpha^L$ holds.\myqed
\end{corollary}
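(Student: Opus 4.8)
The plan is to observe that this Corollary is nothing more than the instantiation of the general adequacy result (\cref{prop:adequacy}) to the specific expressivity situation $\esit_{\mathrm{KMM}}$ of \cref{def:exampleKoenigMMSetting}, so the bulk of the work is bookkeeping. First I would check that $\esit_{\mathrm{KMM}}$ really is an expressivity situation in the sense of \cref{def:expSitX}. All the ingredients are manifestly as required except for one clause: each propositional connective $f_\sigma\colon[0,1]^{\rank(\sigma)}\to[0,1]$ must admit a lifting $g_\sigma$ along the fibration $\PMetb\to\Set$, i.e.\ $f_\sigma$ must be nonexpansive when the source is equipped with the appropriate finite power of $([0,1],d_e)$. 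This is immediate: $f_\top$ is a map out of the terminal object; $f_\neg=1-(\cdot)$ is an isometry; $f_{\ominus q}=\max((\cdot)-q,0)$ is $1$-Lipschitz; and $\min$ satisfies $|\min(x,y)-\min(x',y')|\le\max(|x-x'|,|y-y'|)$, so it is nonexpansive for the product pseudometric. Hence $g_\sigma$ exists for every $\sigma\in\Sigma$, and $\esit_{\mathrm{KMM}}$ is a legitimate expressivity situation.

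Next I would simply invoke \cref{prop:adequacy}: it gives, for the coalgebra $\alpha\colon X\to BX$, the fiberwise inequality
\[
  \CBisim{d_e}{\tau}{\alpha}\ \sqsubseteq\ \LEq{\esit_{\mathrm{KMM}}}{\alpha}
\]
in the fiber $(\PMetb)_X$. It then remains to translate both sides and the order into concrete metric language. As already recorded in the text preceding the Corollary, $\CBisim{d_e}{\tau}{\alpha}$ (\cref{def:codensityBisim}) coincides with the behavioral distance $d_\alpha$ of \cite[Def.~22]{KonigM18}, and $\LEq{\esit_{\mathrm{KMM}}}{\alpha}$ (\cref{def:logicalEq}) coincides with the logical distance $d_\alpha^L$ of \cite[Def.~25]{KonigM18}. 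Finally, the order on the fiber $(\PMetb)_X$ is the reverse pointwise order, ${\sqsubseteq}={\ge}$ (\cref{ex:fibr}), so $d_\alpha\sqsubseteq d_\alpha^L$ unfolds to $d_\alpha(s,t)\ge d_\alpha^L(s,t)$ for all $s,t\in X$, which is exactly the claim.

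I do not expect any genuine obstacle: the mathematical content sits entirely in \cref{prop:adequacy} (whose proof is a structural induction on formulas using the liftings $g_\sigma$ and the codensity-lifting definition). The only point requiring a moment's care is the direction of the fiber order---one must use that ${\sqsubseteq}$ on $(\PMetb)_X$ is ${\ge}$ pointwise, so that adequacy ($\sqsubseteq$) becomes the $\ge$ inequality in the statement rather than $\le$; getting this backwards would silently turn the Corollary into the (false-in-general) expressivity direction.
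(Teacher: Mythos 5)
Your proposal is correct and follows exactly the route the paper takes: the corollary is stated as an immediate consequence of \cref{prop:adequacy}, with the only content being the identification of $\CBisim{d_e}{\tau}{\alpha}$ with $d_\alpha$, of $\LEq{\esit_{\mathrm{KMM}}}{\alpha}$ with $d_\alpha^L$, and of the fiber order ${\sqsubseteq}$ with ${\ge}$. Your explicit check that the connectives of $\esit_{\mathrm{KMM}}$ are nonexpansive (so that the liftings $g_\sigma$ required by \cref{def:expSitX} exist) is a detail the paper leaves implicit, and you correctly flag the one place where a sign error could occur.
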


As mentioned in~\cite{KonigM18}, it is harder to prove expressivity.
We use the Kleene proof principle (\cref{thm:kleeneApproxPrinciple}) here.
For this argument to work, we have to make further assumptions.
\begin{assumption}
	\label{asm:exampleKoenigMM}
	For $\esit_{\mathrm{KMM}}$, assume the following:
	\begin{enumerate}
		\item $\Lambda$ is finite.
		\item\label{item:exampleKoenigMM} If a sequence $k_i$ of functions of type $X\arrow[0,1]$ uniformly converges into $l$,
		then $\tau_\lambda\co Bk_i\colon BX\arrow[0,1]$ uniformly converges into $\tau_\lambda\co Bl$ for each $\lambda\in\Lambda$.
	\end{enumerate}
\end{assumption}

In particular, condition \ref{item:exampleKoenigMM} above is satisfied if each $\tau_\lambda$ \emph{induces a
	non-expansive predicate lifting}~\cite[Def.~17]{KonigM18}.

The notion of \emph{total boundedness} below is pivotal in~\cite{KonigM18}.
\begin{definition}[{from~\cite[Def.~28]{KonigM18}}]
	$(X,d)\in\PMetb$ is \emph{totally bounded} if, for any
	$\varepsilon>0$, there is a finite set $F_\varepsilon\subseteq X$
	satisfying the following: for each $x\in X$, there is
	$y\in F_\varepsilon$ such that $d(x,y)<\varepsilon$.
\end{definition}

A critical step in their proof used a Stone--Weierstrass-like property of
totally bounded spaces.

\begin{propositionrep}\label{prop:exampleKoenigMMDensity}
	Let $(X,d)$ be a totally bounded pseudometric space. A subset $S \subseteq \PMetb((X,d), ([0,1],d_e))$ is dense in the topology of uniform convergence if the following are satisfied:
	\begin{enumerate}
		\item $S$ is closed under the four operations
		$\top$, $\min$, $\neg$ and $\ominus q$ for every
		$q \in \mathbb{Q} \cap [0,1]$;
		\item for every $h \in\PMetb((X,d), ([0,1],d_e))$, and every $x,y \in X$, we
		have
		\[
		d_e(h(x),h(y)) \leq \textstyle\sup_{g \in S} d_e(g(x),g(y)) \,.\mymyqed
		\]
	\end{enumerate}
\end{propositionrep}
\begin{proof}
	By~\cite[Lemma 5.8]{WildSP018}, it suffices to show that, for each $h\in\PMetb((X,d), ([0,1],d_e))$, each $\delta>0$, and each pair of points $x,y\in X$, there is $g\in S$ such that $d_e(h(x),g(x))\le\delta$ and $d_e(h(y),g(y))\le\delta$ hold.
	
	Without loss of generality, we can assume $h(x)\ge h(y)$.
	Let $\gamma=h(x)-h(y)$. Since $\gamma\ge 0$, $\gamma=d_e(h(x),h(y))$.
	By the second assumption, there is $f$ such that $\gamma-\delta\le d_e(f(x),f(y))$.
	Since $S$ is closed under $\neg$, we can assume that $f(x)\ge f(y)$ without loss of generality.
	This implies $\gamma-\delta\le f(x)-f(y)$.
	
	Now, we do a case analysis.
	
	Firstly, assume $f(y)\ge h(y)$.
	Take $r,s\in\mathbb{Q}\cap[0,1]$ satisfying $f(y)-h(y)-\delta\le r\le f(y)-h(y)$ and $h(x)\le s\le h(x)+\delta$.
	Then $g=\min(f\ominus r,s)$ is what we want.
	
	Secondly, assume $f(y)<h(y)$.
	Take $r,s\in\mathbb{Q}\cap[0,1]$ satisfying $h(y)-f(y)-\delta\le r\le h(y)-f(y)$ and $h(x)\le s\le h(x)+\delta$.
	Then $g=\min(\neg((\neg f)\ominus r),s)$ is what we want.
\end{proof}

In our framework, this can be stated in the following form:
\begin{propositionrep}
	\label{prop:exampleKoenigMMApproximability}
	Assume the setting of \cref{def:exampleKoenigMMSetting}.
	Let $X\in\Set$.
	Under \cref{asm:exampleKoenigMM}, a subset $S\subseteq\Set(X,[0,1])$ is approximating if the following hold:
	\begin{itemize}
		\item $S$ is closed under  the four operations
		$\top$, $\min$, $\neg$ and $\ominus q$ for every
		$q \in \mathbb{Q} \cap [0,1]$.
		\item $(X,d_S)$ is totally bounded, where $d_S(x,y)=\sup_{k\in S}d_e(k(x),k(y))$.\myqed
	\end{itemize}
\end{propositionrep}
\begin{proof}
	Fix $h\colon (X,d_S) \to ([0,1], d_e)$, $\lambda\in\Lambda$, and $(z,w)\in (BX)^2$.
	It suffices to show the following:
	\begin{equation}
		\label{eq:exampleKoenigApproximability}
		\sup_{k\in S,\lambda'\in\Lambda}d_e(\tau_{\lambda'}((Bk)(z)),\tau_{\lambda'}((Bk)(w))) \ge d_e(\tau_{\lambda}((Bh)(z)),\tau_{\lambda}((Bh)(w))).
	\end{equation}

	Use \cref{prop:exampleKoenigMMDensity} for $d=d_S$.
	This ensures the existence of a sequence $(k_n\colon X\to[0,1])_{n=1,2,\dots}$ that uniformly converges
	to $ph$ as $n\to\infty$.
	
	Fix $\varepsilon>0$.
	By \cref{asm:exampleKoenigMM}, the sequence $(\tau_\lambda\co k_n)_{n=1,2,\dots}$ also uniformly converges to $\tau_\lambda\co (ph)$ as $n\to\infty$.
	Thus, we can fix $n$ so that $d_e(\tau_{\lambda}((Bk)(z)),\tau_{\lambda}((Bk_n)(z)))<\varepsilon$ and $d_e(\tau_{\lambda}((Bh)(w)),\tau_{\lambda}((Bk_n)(w)))<\varepsilon$ both hold.
	From the triangle inequality, we obtain $d_e(\tau_{\lambda}((Bk_n)(z)),\tau_{\lambda}((Bk_n)(w))) \ge d_e(\tau_{\lambda}((Bh)(z)),\tau_{\lambda}((Bh)(w))) + 2\varepsilon$.
	
	Since $\varepsilon$ is arbitrary, we have \cref{eq:exampleKoenigApproximability}.
\end{proof}

From now we use some facts on totally bounded space.
Using the variation of Arzel\`{a}--Ascoli theorem~\cite[Lemma 5.6]{WildSP018}
for totally bounded spaces, we can show the following:
\begin{fact}
	\label{fact:exampleKoenigMM}
	Under \cref{asm:exampleKoenigMM}, if $(X,d)\in\PMetb$ is totally bounded, then
	\begin{itemize}
		\item $\codlif{B}{d_e}{\tau}(X,d)$ is also totally bounded. \footnote{
			Here the finiteness of the number of modalities is crucial.
			When $\Lambda$ is infinite,  the Kantorovich lifting
			does not preserve total boundedness.  See~\cref{subsubsec:totalBoundedness}.
		}
		\item If $(X,d)\sqsubseteq(X,d')$, $(X,d')$ is also totally bounded.
	\end{itemize}
\end{fact}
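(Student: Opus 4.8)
\textbf{Proof proposal for \cref{fact:exampleKoenigMM}.}

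The plan is to reduce both bullets to the quantitative Arzel\`{a}--Ascoli property of totally bounded spaces, namely that for a totally bounded $(X,d)$ the set $\PMetb((X,d),([0,1],d_e))$ of non-expansive functions is itself totally bounded in the uniform metric (this is~\cite[Lemma 5.6]{WildSP018}, which we may cite). For the second bullet, suppose $(X,d)\sqsubseteq(X,d')$, i.e.\ $d'\le d$; since the identity map $(X,d)\to(X,d')$ is then a surjective non-expansive map and non-expansive images of totally bounded spaces are totally bounded (a finite $\varepsilon$-net downstairs is obtained as the image of a finite $\varepsilon$-net upstairs), $(X,d')$ is totally bounded. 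This direction is routine.

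For the first bullet, write $\codlif{B}{d_e}{\tau}(X,d)=(BX,d_B)$ with
\[
d_B(t_1,t_2)=\sup_{\lambda\in\Lambda,\;h\in\PMetb((X,d),([0,1],d_e))} d_e\bigl(\tau_\lambda((Bh)(t_1)),\tau_\lambda((Bh)(t_2))\bigr).
\]
First I would fix $\varepsilon>0$. By Arzel\`{a}--Ascoli the space $H:=\PMetb((X,d),([0,1],d_e))$ is totally bounded, so pick a finite set $\{h_1,\dots,h_m\}\subseteq H$ that is an $\varepsilon$-net for $H$ in the uniform metric. Using \cref{asm:exampleKoenigMM}(\ref{item:exampleKoenigMM}) — uniform convergence of $k_i\to l$ implies uniform convergence of $\tau_\lambda\co Bk_i\to\tau_\lambda\co Bl$ — and the fact that $\Lambda$ is finite, one upgrades this to: for every $\lambda\in\Lambda$ and every $h\in H$ there is some $h_j$ with $\sup_{t\in BX}d_e\bigl(\tau_\lambda((Bh)(t)),\tau_\lambda((Bh_j)(t))\bigr)\le\varepsilon$. (Here one uses condition~\ref{item:exampleKoenigMM} in its ``uniform-closeness'' form; strictly it is stated for sequences, but a short argument — consider the eventually-constant sequence, or note that uniform continuity of $\tau_\lambda\co B(\place)$ on the relevant bounded set follows — gives the stated modulus-free consequence, and finiteness of $\Lambda$ lets us take a single common bound over all $\lambda$.) Consequently the finite family $\{\tau_\lambda\co Bh_j : \lambda\in\Lambda,\ j\le m\}$ of maps $BX\to[0,1]$ is $\varepsilon$-dense, in the uniform metric, among all the maps $\{\tau_\lambda\co Bh:\lambda\in\Lambda, h\in H\}$ defining $d_B$. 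Call this finite family $G$.

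Now I would use $G$ to build a finite $O(\varepsilon)$-net for $(BX,d_B)$. Define $e\colon BX\to[0,1]^{G}$ by $e(t)=(g(t))_{g\in G}$, and equip $[0,1]^G$ with the sup metric; $[0,1]^G$ is a finite-dimensional compact space, hence totally bounded, so its (finite) image $e(BX)$ has a finite $\varepsilon$-net, which we pull back to a finite set $F_\varepsilon\subseteq BX$. For $t\in BX$ choose $t'\in F_\varepsilon$ with $\max_{g\in G}d_e(g(t),g(t'))\le\varepsilon$. Then for any $\lambda,h$ in the supremum defining $d_B$, pick $g=\tau_{\lambda}\co Bh_j\in G$ within uniform distance $\varepsilon$ of $\tau_\lambda\co Bh$; the triangle inequality in $[0,1]$ gives $d_e(\tau_\lambda((Bh)(t)),\tau_\lambda((Bh)(t')))\le 2\varepsilon + d_e(g(t),g(t'))\le 3\varepsilon$, and taking the sup over $\lambda,h$ yields $d_B(t,t')\le 3\varepsilon$. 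Since $\varepsilon$ was arbitrary, $F_\varepsilon$ (rescaled) witnesses total boundedness of $(BX,d_B)$.

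\textbf{Main obstacle.} The genuinely delicate point is the passage from \cref{asm:exampleKoenigMM}(\ref{item:exampleKoenigMM}), which is phrased for a \emph{single} sequence, to a \emph{uniform-over-$h$} statement of the form ``$h$ and $h_j$ uniformly $\varepsilon$-close $\Rightarrow$ $\tau_\lambda\co Bh$ and $\tau_\lambda\co Bh_j$ uniformly $O(\varepsilon)$-close,'' and then to a single modulus valid for all $\lambda\in\Lambda$. Finiteness of $\Lambda$ handles the last quantifier, and the per-$\lambda$ uniform-closeness transfer can be extracted from condition~\ref{item:exampleKoenigMM} by a standard ``no uniform modulus $\Rightarrow$ extract a bad sequence'' contradiction argument inside the totally bounded (hence ``compact enough'') space of non-expansive maps. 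Everything else — the Arzel\`{a}--Ascoli input and the net-pullback through $e$ — is routine. The footnote's warning that finiteness of $\Lambda$ is essential is exactly the place where this argument would break for infinite $\Lambda$: one could no longer pass to a finite $\varepsilon$-net $G$ of modulated observations, and total boundedness of $(BX,d_B)$ genuinely fails (see~\cref{subsubsec:totalBoundedness}).
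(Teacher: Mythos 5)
Your proof is correct and follows essentially the route the paper intends: the statement is left unproved in the paper precisely because it reduces to the cited Arzel\`{a}--Ascoli variant~\cite[Lemma 5.6]{WildSP018} (total boundedness of the non-expansive function space) plus finiteness of $\Lambda$, which is exactly your argument. The one ``delicate point'' you flag closes cleanly once you note that $H=\PMetb((X,d),([0,1],d_e))$ is not merely totally bounded but compact --- it is closed under uniform limits (a uniform limit of non-expansive maps is non-expansive) inside the complete sup-metric space of bounded $[0,1]$-valued functions --- so the sequential continuity of $h\mapsto\tau_\lambda\co Bh$ from \cref{asm:exampleKoenigMM}(\ref{item:exampleKoenigMM}) upgrades to uniform continuity on $H$, which is what your finite net $G$ actually requires.
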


These enable us to use \cref{thm:kleeneApproxPrinciple}:

\begin{propositionrep}
	\label{prop:exampleKoenigMMStepApproximable}
	Let $x\colon X\to BX$ be a coalgebra.
	Under \cref{asm:exampleKoenigMM}, for each $i$, $\{\sem{\varphi}~|~\varphi\in L_{\esit_{\mathrm{KMM}}},\depth(\varphi)\le i\}\subseteq\Set(X,[0,1])$ is approximating.\myqed
\end{propositionrep}
\begin{proof}
	By induction, for each $i$, $(\codlif{B}{d_e}{\tau})^i(\top)\in(\PMetb)_X$ is totally bounded.
	By the stepwise adequacy (\cref{rem:chainLength} \& \cref{appendix:remchainLength}) and \cref{fact:exampleKoenigMM}, for each $i$, $\bigsqcap_{\varphi\in L_{\esit_{\mathrm{KMM}}},\rank(\varphi)\le i}\sem{\varphi}_x^*d_e$ is also totally bounded.
	From this and \cref{prop:exampleKoenigMMApproximability}, we can show that the desired set is  approximating.
\end{proof}

\begin{corollaryrep}[{from~\cite[Thm.~32]{KonigM18}}]
	\label{cor:exampleKoenigMMExpressive}
	Let $\alpha\colon X\to BX$ be a coalgebra.
	Assume that the sequence
	$\top\sqsupseteq (x^{*}\codlif{B}{d_e}{\tau})(\top) \sqsupseteq
	(x^{*}\codlif{B}{d_e}{\tau})^{2}(\top)\sqsupseteq\cdots$ stabilizes after $\omega$
	steps (as in Thm.~\ref{thm:kleeneApproxPrinciple}).
	Then, under \cref{asm:exampleKoenigMM}, $d_\alpha \le d_\alpha^L$ holds.
	In particular, $d_\alpha$ is characterized as the greatest pseudometric that makes all $\sem{\varphi}_\alpha$ nonexpansive.\myqed
\end{corollaryrep}
\begin{proof}
	Use \cref{thm:kleeneApproxPrinciple}.
	The premises are satisfied by \cref{prop:exampleKoenigMMStepApproximable}.
\end{proof}

\section{Expressivity for Markov Process Bisimilarity}\label{sec:exMarkovProcesses}

This section shows how one of the main results in~\cite{ClercFKP19}, expressivity of probabilistic modal logic w.r.t.~bisimilarity of labelled Markov process, is proved by our Knaster--Tarski proof principle (\cref{thm:knasterTarskiApproxPrinciple}).

Throughout this section, fix a set $A$ of labels.

\begin{definition}
	\label{def:exMarkovProcessesSetting}
	Define an expressivity situation $\esit_{\mathrm{CFKP}}$ by:
	\begin{itemize}
		\item Its fibration is $\EqRelMeas\arrow\Meas$ (\cref{ex:fibr}).
		\item Its behavior functor $B\colon\Meas\to\Meas$ is $BX=(\SGiry X)^A$, where $\SGiry$ is the variation of
		Giry functor, which sends each measurable space to its space of
		subdistributions.
		\item Its truth-value object is $2=\{0,1\}$ with all subsets measurable and its observation predicate is the equality relation $\Eq_2$ on $2$.
		\item The ranked alphabet of its propositional connectives is $\Sigma=\{\top^0,\wedge^2\}$. Its propositional structure $(f_{\top},f_{\wedge})$ is specified as the usual boolean operations.
		\item The set of its modality indices is $A\times (\mathbb{Q}\cap[0,1])$. For each $(a,r)\in A\times (\mathbb{Q}\cap[0,1])$, the observation modality $\tau_{a,r}\colon(\SGiry 2)^A\to 2$ is defined by
		\[ \tau_{a,r}((\mu_a)_{a\in A}) =
		\mathrm{thr}_r(\mu_a(\{1\})),
		\] where $\mathrm{thr}_r(s)=1$ if and only if $s>r$.
	\end{itemize}
\end{definition}

Note that a \emph{labelled Markov process (LMP) with label set $A$}~\cite[Definition 5.1]{ClercFKP19} is the same as
$B$-coalgebra.
The modal logic $L_{\esit_{\mathrm{CFKP}}}$ (\cref{def:syntax}) has the following syntax: \[
	\varphi_{1},  \varphi_{2}\;::=\;\top\mid\wedge(\varphi_{1},\varphi_{2}) \mid\;\heartsuit_{a,r}\varphi_{1}\;((a,r)\in A\times (\mathbb{Q}\cap[0,1]))
\]
So if we identify $\heartsuit_{a,r}$ with $\langle a \rangle_r$ in the original notation, this recovers the syntax of $\mathrm{PML}_\wedge$ defined in~\cite[Def.~2.3]{ClercFKP19}.
Under this identification, the semantics (\cref{def:formulaSemantics}) is also essentially the same as the original logic: $\sem{\varphi}_x(s)=1\iff s\vDash\varphi$ holds for any LMP $x\colon X\to(\SGiry X)^A$, any point $s\in X$, and any formula $\varphi$.
The fibrational logical equivalence (\cref{def:logicalEq}) can be concretely represented as \[
	\LEq{\esit_{\mathrm{CFKP}}}{x} = \{(s,t)~|~\forall\varphi\in L_{\esit_{\mathrm{CFKP}}},s\vDash\varphi\iff t\vDash\varphi\}.
\]

By expanding the definition of the codensity lifting (\cref{def:codensityLifting}) of $(\SGiry \place)^A$, we can see that it coincides with the one used to define \emph{probabilistic bisimulation}:
\begin{proposition}
	\label{prop:codensityisprobbisim}
	The codensity lifting $\codlif{(\SGiry \place)^A}{\Eq_2}{\tau}$ satisfies the
	following: for each
	$(\mu_a)_{ a\in A},(\nu_ a)_{ a\in A}\in
	(\SGiry X)^A$, they are equivalent in $\codlif{(\SGiry \place)^A}{\Eq_2}{\tau}(X,R)$ if and
	only if, for each $ a\in A$ and each $R$-closed measurable
	set $S\subseteq X$, $\mu_ a(S)=\nu_ a(S)$ holds.\myqed
\end{proposition}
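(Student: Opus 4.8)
The plan is to unfold \cref{def:codensityLifting} for the specific data of $\esit_{\mathrm{CFKP}}$ and then track what the reindexing and meets do concretely. Recall that for $(X,R)\in\EqRelMeas$, the fiber arrows $h\colon (X,R)\to \bOmega = (2,\Eq_2)$ in $\Cat{E}$ are exactly the measurable functions $X\to 2$ that respect $R$ in the sense that $s\mathbin{R}t$ implies $h(s)=h(t)$; equivalently, $h = \chi_S$ for an $R$-closed measurable set $S\subseteq X$. So the index set of the big meet in \cref{eq:codensity} ranges over pairs $\bigl((a,r),\,\chi_S\bigr)$ with $(a,r)\in A\times(\mathbb{Q}\cap[0,1])$ and $S$ an $R$-closed measurable subset of $X$.

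Next I would compute the pullback predicate $(\tau_{a,r}\co B(p h))^*\bOmega$ on $B X = (\SGiry X)^A$ for $h = \chi_S$. Unwinding, $B(p\chi_S)\colon (\SGiry X)^A \to (\SGiry 2)^A$ sends $(\mu_a)_{a\in A}$ to $((\chi_S)_*\mu_a)_{a\in A}$, whose $a$-component is the subdistribution on $2$ putting mass $\mu_a(S)$ on $1$. Post-composing with $\tau_{a,r}$ yields $\mathrm{thr}_r\bigl(\mu_a(S)\bigr)$. Pulling back $\Eq_2$ along this $2$-valued map gives the equivalence relation on $(\SGiry X)^A$ that identifies $(\mu_a)_a$ and $(\nu_a)_a$ exactly when $\mathrm{thr}_r(\mu_a(S)) = \mathrm{thr}_r(\nu_a(S))$, i.e.\ when $\mu_a(S)>r \iff \nu_a(S)>r$. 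Taking the meet $\bigsqcap$ over all $(a,r)$ and all $R$-closed measurable $S$ (meet of equivalence relations is intersection), we get that $(\mu_a)_a$ and $(\nu_a)_a$ are equivalent in $\codlif{(\SGiry \place)^A}{\Eq_2}{\tau}(X,R)$ iff for every $a\in A$, every $R$-closed measurable $S$, and every rational $r\in[0,1]$ we have $\mu_a(S)>r \iff \nu_a(S)>r$.

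Finally I would observe that the last condition, quantified over all rationals $r\in[0,1]$, is equivalent to $\mu_a(S) = \nu_a(S)$: the set of rationals in $[0,1]$ is dense, so $\mu_a(S)$ and $\nu_a(S)$ cannot be separated by a rational threshold precisely when they are equal (using that both values lie in $[0,1]$, and that $\{q\in\mathbb{Q}\cap[0,1] : q < v\}$ determines $v\in[0,1]$). This collapses the stated equivalence to: for each $a\in A$ and each $R$-closed measurable $S\subseteq X$, $\mu_a(S)=\nu_a(S)$, which is exactly the claim.

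I expect the only mildly delicate point to be the identification of $\Cat{E}$-arrows $(X,R)\to(2,\Eq_2)$ with characteristic functions of $R$-closed measurable sets — this uses the concrete description of the fibration $\EqRelMeas\to\Meas$ from \cref{ex:fibr} (measurability of $h$ and the relation-preservation condition), together with the fact that $\bigl((2,\Eq_2)\bigr)$ is such that $h(s)=h(t)$ whenever $s\mathbin{R}t$. Everything else is a routine unwinding of pullbacks and meets in a $\CLatw$-fibration, plus the density-of-rationals observation; no serious obstacle is anticipated.
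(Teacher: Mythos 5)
Your proposal is correct and matches the paper's intent exactly: the paper treats this proposition as a direct unfolding of \cref{def:codensityLifting} (it gives no separate proof), and your computation — identifying $\Cat{E}$-arrows $(X,R)\to(2,\Eq_2)$ with characteristic functions of $R$-closed measurable sets, evaluating each pullback as the threshold condition $\mu_a(S)>r \iff \nu_a(S)>r$, and collapsing the quantification over rationals by density — is precisely that unfolding, with all the minor subtleties (meets of equivalence relations as intersections, both values lying in $[0,1]$) handled correctly.
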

Thus the codensity bisimilarity $\CBisim{\Eq_2}{\tau}{x}$ (\cref{def:codensityBisim}) coincides with the probabilistic bisimilarity used in~\cite{ClercFKP19}.

From \cref{prop:adequacy}, we readily obtain the following:
\begin{corollary}
	Let $x\colon X\to(\SGiry X)^A$ be an LMP.
	If $s,t\in X$ are probabilistically bisimilar, for any $\varphi\in L_{\esit_{\mathrm{CFKP}}}$, $s\vDash\varphi\iff t\vDash\varphi$ holds.\myqed
\end{corollary}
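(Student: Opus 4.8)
The plan is to obtain this as a direct instance of the general adequacy result \cref{prop:adequacy}, using the two concrete identifications already established for $\esit_{\mathrm{CFKP}}$. First, \cref{prop:adequacy} tells us that $\esit_{\mathrm{CFKP}}$ is adequate; unfolding \cref{def:expressive}, this means that for every LMP $x\colon X\to(\SGiry X)^A$ we have $\CBisim{\Eq_2}{\tau}{x}\sqsubseteq\LEq{\esit_{\mathrm{CFKP}}}{x}$ in the fiber $(\EqRelMeas)_X$. Here the hypothesis of \cref{prop:adequacy}---that each propositional connective $f_\sigma$ lifts to an $\Cat{E}$-arrow on $\bOmega$---is satisfied because the boolean operations $f_{\top},f_{\wedge}$ on $2$ respect the equality relation $\Eq_2$, i.e.\ they are arrows of $\EqRelMeas$; this is part of $\esit_{\mathrm{CFKP}}$ being a well-formed expressivity situation.

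Next I would spell out the order $\sqsubseteq$ in this fibration. For $\EqRelMeas\to\Meas$, as for $\EqRel\to\Set$, the fiber over $X$ is ordered by inclusion of endorelations on $X$, with the finer (smaller) relation below. Hence the adequacy inequality reads, concretely, $\CBisim{\Eq_2}{\tau}{x}\subseteq\LEq{\esit_{\mathrm{CFKP}}}{x}$ as subsets of $X\times X$.

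Finally I invoke the two descriptions stated just above: by \cref{prop:codensityisprobbisim} (and the remark following it) $\CBisim{\Eq_2}{\tau}{x}$ is exactly the probabilistic bisimilarity of $x$ in the sense of~\cite{ClercFKP19}, and, using $\sem{\varphi}_x(s)=1\iff s\vDash\varphi$, the fibrational logical equivalence unfolds to $\LEq{\esit_{\mathrm{CFKP}}}{x}=\{(s,t)\mid\forall\varphi\in L_{\esit_{\mathrm{CFKP}}},\,s\vDash\varphi\iff t\vDash\varphi\}$. Combining these, if $s,t$ are probabilistically bisimilar then $(s,t)\in\CBisim{\Eq_2}{\tau}{x}\subseteq\LEq{\esit_{\mathrm{CFKP}}}{x}$, so $s\vDash\varphi\iff t\vDash\varphi$ for every $\varphi\in L_{\esit_{\mathrm{CFKP}}}$, which is the claim. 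There is essentially no obstacle here: all the content sits in \cref{prop:adequacy} and in the prior identifications of the semantic data, and the corollary is only a translation between the fibrational and the concrete vocabulary.
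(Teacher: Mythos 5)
Your proposal is correct and follows exactly the paper's intended route: the corollary is stated as an immediate consequence of \cref{prop:adequacy} combined with the identification of the codensity bisimilarity with probabilistic bisimilarity (\cref{prop:codensityisprobbisim}) and of $\LEq{\esit_{\mathrm{CFKP}}}{x}$ with the concrete logical equivalence. Your unfolding of $\sqsubseteq$ as inclusion of relations in the fiber of $\EqRelMeas\to\Meas$ is also the right reading, so nothing is missing.
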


To show expressivity, we first have to review some mathematical key facts.
In the rest of this section, we write $\sigma(\mathscr{E})$ for the
$\sigma$-algebra generated by a family of sets $\mathscr{E}$.

\begin{definition}
	A \emph{Polish space} is a separable topological space which is
	metrizable by a complete metric. For any continuous map
	$f\colon X\to Y$ between Polish spaces $X$ and $Y$, the image of $f$
	is called an \emph{analytic topological space}.  For an analytic
	topological space $(X,\mathcal{O}_X)$, the measurable space
	$(X,\sigma(\mathcal{O}_X))$ is called an \emph{analytic measurable
	space}.
\end{definition}

Let us review the two key facts they used in~\cite{ClercFKP19}.
The first one is the following ``elegant Borel space analogue of the Stone--Weierstrass theorem''~\cite{Arveson76}.

\begin{fact}[{Unique Structure Theorem~\cite[Thm.~3.3.5]{Arveson76}}]
	\label{fact:exMarkovProcessesUST}
	Let $X\in\Meas$ be an analytic measurable space and $\mathscr{E}$ be
	an (at most) countable family of measurable subsets of $X$ such that
	$X\in\mathscr{E}$.  Define an equivalence relation
	$\equiv_{\mathscr{E}}$ by
	\[ x\equiv_{\mathscr{E}}y \iff \forall S\in\mathscr{E},(x\in S \iff
	y\in S).
	\]
	
If $S\subseteq X$ is measurable and $\equiv_{\mathscr{E}}$-closed,
	then $S\in\sigma(\mathscr{E})$.
\end{fact}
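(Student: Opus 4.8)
The plan is to reduce the statement to Lusin's first separation theorem for analytic sets, by encoding the countable family $\mathscr{E}$ into a Borel map into Cantor space. Enumerate $\mathscr{E} = \{S_n \mid n \in \mathbb{N}\}$ (repeating entries if $\mathscr{E}$ is finite) and define $f \colon X \to 2^{\mathbb{N}}$ by $f(x) = (\mathbf{1}_{S_n}(x))_{n \in \mathbb{N}}$, where $2^{\mathbb{N}}$ carries the product topology and its Borel $\sigma$-algebra. Since each $S_n$ is measurable, $f$ is measurable, and since the basic clopen set $\{y \mid y_n = 1\}$ has $f$-preimage $S_n$, the $\sigma$-algebra $\sigma(\mathscr{E})$ coincides with $\{f^{-1}(B) \mid B \subseteq 2^{\mathbb{N}} \text{ Borel}\}$. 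By construction, $x \equiv_{\mathscr{E}} y$ if and only if $f(x) = f(y)$; that is, $\equiv_{\mathscr{E}}$ is precisely the kernel of $f$.

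Next I would exploit saturation. Fix a measurable, $\equiv_{\mathscr{E}}$-closed set $S \subseteq X$. Since $S$ is a union of $\equiv_{\mathscr{E}}$-classes we have $S = f^{-1}(f(S))$ and $X \setminus S = f^{-1}(f(X \setminus S))$, and moreover $f(S) \cap f(X \setminus S) = \emptyset$: if $f(x) = f(y)$ with $x \in S$, then $x \equiv_{\mathscr{E}} y$, so $y \in S$. Hence it suffices to produce a Borel set $B \subseteq 2^{\mathbb{N}}$ with $f(S) \subseteq B$ and $B \cap f(X \setminus S) = \emptyset$: for such a $B$ one gets $S \subseteq f^{-1}(B)$, while $x \notin S$ forces $f(x) \in f(X \setminus S)$, hence $f(x) \notin B$, hence $x \notin f^{-1}(B)$; thus $S = f^{-1}(B) \in \sigma(\mathscr{E})$.

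The existence of the separating $B$ is where the analyticity hypothesis enters. As $X$ is an analytic measurable space and both $S$ and $X \setminus S$ are measurable, they are themselves analytic measurable spaces (a measurable subset of an analytic measurable space is again analytic); therefore their images $f(S)$ and $f(X \setminus S)$ under the measurable map $f$ into the standard Borel space $2^{\mathbb{N}}$ are analytic subsets of $2^{\mathbb{N}}$. Being disjoint, they are separated by a Borel set via Lusin's first separation theorem, which completes the argument. I expect the only real work to be the descriptive–set–theoretic bookkeeping: verifying (i) Borel measurability of $f$ between measurable spaces, (ii) that a measurable subset of an analytic measurable space is again analytic, and (iii) that the image of an analytic set under a Borel-measurable map into a standard Borel space is analytic, and then citing Lusin separation in the correct (analytic, not merely standard-Borel) generality. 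Everything else is elementary manipulation of preimages and saturated sets.
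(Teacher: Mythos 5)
The paper does not prove this statement: it is imported verbatim as a Fact, with a citation to Arveson's Unique Structure Theorem, so there is no in-paper proof to compare against. Your argument is the standard textbook proof of that theorem and it is correct. Coding $\mathscr{E}$ into $f\colon X\to 2^{\mathbb{N}}$, observing that $\sigma(\mathscr{E})=\{f^{-1}(B)\mid B\ \text{Borel}\}$ and that $\equiv_{\mathscr{E}}$ is the kernel of $f$, reducing the saturated-set claim to Borel separation of the disjoint images $f(S)$ and $f(X\setminus S)$, and invoking Lusin's first separation theorem is exactly the intended route; the hypotheses ``$\mathscr{E}$ countable'' and ``$X$ analytic'' enter precisely where you say they do, and $X\in\mathscr{E}$ plays no role beyond guaranteeing $\mathscr{E}\neq\emptyset$.

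Two small points of bookkeeping, both resolvable. First, with the paper's definition an analytic measurable space is an analytic subset $A$ of a Polish space $Y$ with the relative Borel structure, so a measurable $S\subseteq A$ is $B\cap A$ for some Borel $B\subseteq Y$ and is again analytic in $Y$; rather than arguing that $S$ is ``itself an analytic measurable space,'' it is slightly cleaner to take a continuous surjection $g\colon P\to S$ from a Polish space (for $S\neq\emptyset$) and note that $f(S)=(f\circ g)(P)$ is the image of a Polish space under a Borel map into $2^{\mathbb{N}}$, hence analytic; the empty cases are trivial. Second, you need Lusin separation for two disjoint \emph{analytic} subsets of a Polish space, not the weaker Borel-versus-analytic version; that is indeed the classical first separation theorem, so there is no gap, but it is worth citing in that generality as you anticipate.
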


In the fact above, we use the operations of $\sigma$-algebras to construct $S$.
The second key fact is about ``decomposing'' those operations into two parts.

\begin{definition}
	Let $X$ be a set.  A family of subsets of $X$ is called a
	\emph{$\pi$-system} if it is closed under finite intersections.  A
	family of subsets of $X$ is a \emph{$\lambda$-system} if
	it is closed under complement and countable disjoint unions.
\end{definition}

Intuitively, $\pi$-systems correspond to the propositional connectives of $\esit_{\mathrm{CFKP}}$ and $\lambda$-systems correspond to ``approximation.''
These two operations are enough to recover all $\sigma$-algebra operations:

\begin{fact}[$\pi$-$\lambda$ Theorem~\cite{Dynkin60}]
	\label{fact:exMarkovProcessesPiLambda}
	If $\Pi$ is a $\pi$-system, $\Lambda$ is a $\lambda$-system, and
	$\Pi\subseteq\Lambda$, then $\sigma(\Pi)\subseteq\Lambda$.
\end{fact}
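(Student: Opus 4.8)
The plan is to run Dynkin's classical monotone-class-style argument: show that the \emph{smallest} $\lambda$-system containing $\Pi$ is already a $\sigma$-algebra. Since it is contained in $\Lambda$ (as $\Lambda$ is one such $\lambda$-system) and contains $\Pi$, it then contains $\sigma(\Pi)$, giving $\sigma(\Pi)\subseteq\Lambda$.

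First I would set $\mathcal{L}_0$ to be the intersection of all $\lambda$-systems containing $\Pi$ (assuming w.l.o.g.\ that $X\in\Pi$, as holds in the intended application via \cref{fact:exMarkovProcessesUST}; otherwise $\sigma(\Pi)=\{\emptyset,X\}$ and the claim is immediate). A routine check shows $\mathcal{L}_0$ is itself a $\lambda$-system, and $\mathcal{L}_0\subseteq\Lambda$. Next I would record two elementary facts about $\lambda$-systems $\mathcal{S}$: (a) $\mathcal{S}$ is closed under \emph{proper differences}, i.e.\ if $C\subseteq D$ with $C,D\in\mathcal{S}$ then $D\setminus C=(D^{c}\sqcup C)^{c}\in\mathcal{S}$, since $D^{c}$ and $C$ are disjoint; and (b) if $\mathcal{S}$ is moreover closed under finite intersections, then it is a $\sigma$-algebra, because any countable union $\bigcup_{n}A_{n}$ can be disjointified as $\bigsqcup_{n}\bigl(A_{n}\setminus\bigcup_{k<n}A_{k}\bigr)$ using complements and finite intersections. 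By (b), the whole statement reduces to showing that $\mathcal{L}_0$ is closed under pairwise intersections.

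The key step is the bootstrapping. For $A\in\mathcal{L}_0$ define $\mathcal{D}_{A}=\{B\in\mathcal{L}_0 \mid A\cap B\in\mathcal{L}_0\}$. I would verify that $\mathcal{D}_{A}$ is a $\lambda$-system: it contains $X$ since $A\cap X=A\in\mathcal{L}_0$; it is closed under complement because $A\cap B^{c}=A\setminus(A\cap B)$ is a proper difference of members of $\mathcal{L}_0$ by (a); and it is closed under countable disjoint unions because $A\cap\bigsqcup_{n}B_{n}=\bigsqcup_{n}(A\cap B_{n})$. Then I apply minimality of $\mathcal{L}_0$ twice. First, for $A\in\Pi$: since $\Pi$ is a $\pi$-system, $A\cap B\in\Pi\subseteq\mathcal{L}_0$ for every $B\in\Pi$, so $\Pi\subseteq\mathcal{D}_{A}$, hence $\mathcal{L}_0\subseteq\mathcal{D}_{A}$; this gives $A\cap B\in\mathcal{L}_0$ for all $A\in\Pi$, $B\in\mathcal{L}_0$. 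Second, for arbitrary $B\in\mathcal{L}_0$: the previous conclusion says $\Pi\subseteq\mathcal{D}_{B}$, and $\mathcal{D}_{B}$ is a $\lambda$-system, so $\mathcal{L}_0\subseteq\mathcal{D}_{B}$; that is, $A\cap B\in\mathcal{L}_0$ for all $A,B\in\mathcal{L}_0$. Thus $\mathcal{L}_0$ is closed under finite intersections, so by (b) it is a $\sigma$-algebra containing $\Pi$, whence $\sigma(\Pi)\subseteq\mathcal{L}_0\subseteq\Lambda$.

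The only genuinely delicate point is checking that each $\mathcal{D}_{A}$ is a $\lambda$-system, specifically closure under complementation, which is exactly where the proper-difference property (a) is needed; the rest is bookkeeping with disjoint unions and De Morgan, plus the (standard but easy to overlook) observation that minimality of $\mathcal{L}_0$ must be invoked twice — once with $A$ ranging over $\Pi$ and once over all of $\mathcal{L}_0$.
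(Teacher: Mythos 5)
The paper does not prove this statement at all: it is imported verbatim as a citation-only Fact (Dynkin's $\pi$--$\lambda$ theorem), so there is no in-paper argument to compare against. Your proof is the standard good-sets/bootstrapping argument and is correct: the reduction to showing that the minimal $\lambda$-system $\mathcal{L}_0$ over $\Pi$ is a $\pi$-system, the proper-difference lemma, the disjointification showing that a $\lambda$-system closed under finite intersections is a $\sigma$-algebra, and the two-pass application of minimality to $\mathcal{D}_A$ are all exactly as in the classical proof. One small blemish: your parenthetical ``otherwise $\sigma(\Pi)=\{\emptyset,X\}$'' is false in general (take $\Pi=\{A\}$ for a proper nonempty $A$); but the w.l.o.g.\ is not actually needed, since the argument only uses $X\in\mathcal{L}_0$, which already follows from $\Pi\neq\emptyset$ together with closure of $\mathcal{L}_0$ under complements and (binary) disjoint unions, via $A\sqcup A^{c}=X$.
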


Using \cref{fact:exMarkovProcessesUST,fact:exMarkovProcessesPiLambda}, we obtain a sufficient condition for being an approximating family.
The proof follows the two steps outlined in \cref{rem:twoStepsInApprox}: 1) we can approximate a given $h\colon X\to 2$ by $\sigma$-algebra operations (\cref{fact:exMarkovProcessesUST}), which can be reduced to $\lambda$-system operations (\cref{fact:exMarkovProcessesPiLambda}); and 2) $\lambda$-system operations are in some sense ``preserved'' by the modalities (since measures are $\sigma$-additive).
\begin{propositionrep}
	\label{prop:exMarkovProcessesApproximability}
	Assume the setting of \cref{def:exMarkovProcessesSetting}.
	Let $X\in\Meas$.
	A subset $S\subseteq\Meas(X,2)$ is approximating if the following hold:
	\begin{itemize}
		\item $X$ is an analytic measurable space.
		\item $S$ is at most countable.
		\item For $k,l\in S$, $\top$ and $k\wedge l$ are also included in $S$.\myqed
	\end{itemize}
\end{propositionrep}
\begin{proof}	
	Fix any $h\colon \bigsqcap_{k\in S}k^*\Eq_2\to\Eq_2$, any $a\in A$, and any $r\in\mathbb{Q}\cap[0,1]$.
	By definition, it suffices to show \[
		\bigsqcap_{k\in S, a'\in A, r'\in\mathbb{Q}\cap[0,1]} (\tau_{a',r'}\co Bk)^*\Eq_2\sqsubseteq (\tau_{a,r}\co B(ph))^*\Eq_2.
	\]

	First we concretize these formulas.
	Let $R=\bigsqcap_{k\in S, a'\in A, r'\in\mathbb{Q}\cap[0,1]} (\tau_{a',r'}\co Bk)^*\Eq_2$.
	Using the definition of the arrow part of the functor $B=(\SGiry -)^A$, the relation $R$ on $(\SGiry X)^A$ can be rephrased as
	\begin{align*}
		&((\mu_a)_{a\in A},(\nu_a)_{a\in A})\in R \\
		&\iff\forall k,a',r',(\mu_{a'}(k^{-1}(\{1\}))>r'\iff\nu_{a'}(k^{-1}(\{1\}))>r') \\
		&\iff\forall k,a',r',(\mu_{a'}(k^{-1}(\{1\}))=\nu_{a'}(k^{-1}(\{1\}))),
	\end{align*}
	where $k\in S$, $a'\in A$, and $r'\in\mathbb{Q}\cap[0,1]$.
	In the same way, we can concretely describe $R'=(\tau_{a,r}\co B(ph))^*\Eq_2$ as
	\begin{align*}
		&((\mu_a)_{a\in A},(\nu_a)_{a\in A})\in R' \\
		&\iff(\mu_{a}((ph)^{-1}(\{1\}))>r\iff\nu_{a}((ph)^{-1}(\{1\}))>r).
	\end{align*}
	Thus, it suffices to show that the set $Y'=\{(\mu_a)_{a\in A}~|~\mu_{a}((ph)^{-1}(\{1\}))>r\}\subseteq (\SGiry X)^A$ is $R$-closed.
	Let $X'=(ph)^{-1}(\{1\})\subseteq X$.
	Now $Y'=\{(\mu_a)_{a\in A}~|~\mu_{a}(X')>r\}$.
	
	The set $X'$ corresponds to $h$, and we will ``approximate'' this by sets corresponding to the elements of $S$.
	Let $\mathscr{E}=\{k^{-1}(\{1\})~|~k\in S\}$ and define an equivalence relation
	$\equiv_{\mathscr{E}}$ by
	\[
	x\equiv_{\mathscr{E}}y \iff \forall E\in\mathscr{E},(x\in E \iff y\in E).
	\]
	Since $\equiv_{\mathscr{E}}$ coincides with the meet $\bigsqcap_{k\in S}k^*\Eq_2\in(\EqRelMeas)_X$, $X'$ is $\equiv_{\mathscr{E}}$-closed.
	Since $X$ is analytic and $S$ is at most countable, we can apply \cref{fact:exMarkovProcessesUST} and show $X'\in\sigma(\mathscr{E})$.
	
	Now we show that $Y'$ is $R$-closed.
	In this step, intuitively, we use the fact that the modalities ``preserve'' the ``approximation'' by the operation of $\Lambda$-system.
	Assume $((\mu_a)_{a\in A},(\nu_a)_{a\in A})\in R$ and
	$(\mu_a)_{a\in A}\in Y'$.
	Define a family $\Lambda$ of measurable subsets of $X$ by
	\begin{displaymath}
		E\in\Lambda \iff
		\forall a\in A, \mu_a(E) = \nu_a(E).
	\end{displaymath}
	Since $S$ is closed under $\top$ and $\wedge$, $\mathscr{E}$ is a
	$\pi$-system.
	On the other hand, by the definition of measure, $\Lambda$ is a $\lambda$-system.
	Since $((\mu_a)_{a\in A},(\nu_a)_{a\in A})\in R$, $\mathscr{E}\subseteq\Lambda$. \Cref{fact:exMarkovProcessesPiLambda} implies
	$\sigma(\mathscr{E})\subseteq\Lambda$.  In particular,
	$X'\in\Lambda$. This and $(\mu_a)_{a\in A}\in Y'$ imply
	$(\nu_a)_{a\in A}\in Y'$.
\end{proof}

From this proposition and \cref{thm:knasterTarskiApproxPrinciple}, we obtain the following expressivity result:
\begin{corollaryrep}
	Let $x\colon X\to(\SGiry X)^A$ be an LMP and $s,t\in X$ its states.
	Assume that the label set $A$ is at most countable and that $X$ is an analytic measurable space.
	
	Then $\esit_{\mathrm{CFKP}}$ is expressive for $x$ (\cref{def:expressive}):
	that is, If $s\vDash\varphi\iff t\vDash\varphi$ holds for every $\varphi\in L_{\esit_{\mathrm{CFKP}}}$, then $s$ and $t$ are probabilistically bisimilar.\myqed
\end{corollaryrep}
\begin{proof}
	Since $A$ is at most countable, $\{\sem{\varphi}~|~\varphi\in L_{\esit_{\mathrm{CFKP}}}\}\subseteq\Meas(X,2)$ is also at most countable.
	Moreover, since the logic has $\top$ and $\wedge$, $\{\sem{\varphi}~|~\varphi\in L_{\esit_{\mathrm{CFKP}}}\}\subseteq\Meas(X,2)$ is closed under these operations.
	Thus we can use \cref{prop:exMarkovProcessesApproximability} and \cref{thm:knasterTarskiApproxPrinciple}.
\end{proof}

\section{Expressivity for the Bisimulation Uniformity}
\label{sec:exBisimUnif}

In this section, we introduce \emph{bisimulation uniformity} as a coinductive predicate in a fibration and a logic for it.
By using our main results and a known mathematical result analogous to the Stone--Weierstrass theorem, the logic is readily proved to be adequate and expressive w.r.t.~bisimulation uniformity.
This example shows how our abstract framework can help to explore new bisimilarity-like notions.

\subsection{Uniform Structure as Fibrational Predicate}

Topological space can be regarded as an abstraction of (pseudo-)metric spaces w.r.t.~continuous maps.
In much the same way, \emph{uniform space}~\cite{BourbakiUniformStructures} is an abstraction of (pseudo-)metric spaces w.r.t.~uniformly continuous maps.

\begin{definition}[{from~\cite[Def.~1]{BourbakiUniformStructures}}]
	A \emph{uniform structure}, or \emph{uniformity}, on a set $X$ is a nonempty family $\mathscr{U}\subseteq\Pow(X\times X)$ of subsets of $X\times X$ satisfying the following:
	\begin{itemize}
		\item If $V\in\mathscr{U}$ and $V\subseteq V'\subseteq X\times X$, then $V'\in\mathscr{U}$.
		\item If $V,W\in\mathscr{U}$, then $V\cap W\in\mathscr{U}$.
		\item If $V\in\mathscr{U}$, then $\{(x,x)~|~x\in X\}\subseteq V$.
		\item If $V\in\mathscr{U}$, then $\{(y,x)~|~(x,y)\in V\}\in\mathscr{U}$.
		\item If $V\in\mathscr{U}$, then there exists $W\in\mathscr{U}$ such that $\{(x,z)~|~\exists y~(x,y)\in W\wedge(y,z)\in W\}\subseteq V$.
	\end{itemize}
	Here each element $V\in\mathscr{U}$ is called an \emph{entourage}.
	A pair $(X,\mathscr{U})$ of a set and a uniformity on it is called a \emph{uniform space}.
	
	A function $f\colon X\to Y$ is a \emph{uniformly continuous map from $(X,\mathscr{U}_X)$ to $(Y,\mathscr{U}_Y)$} if, for each entourage $V\in\mathscr{U}_Y$, $\{(x,x')~|~(f(x),f(x'))\in V\}\subseteq X\times X$ is an enrourage of $(X,\mathscr{U}_X)$.
	The category of uniform spaces and uniformly continuous maps is denoted $\Unif$.
\end{definition}

Each entourage represents some degree of ``closeness.''
The following example is an archetypal one:
\begin{example}
	\label{ex:uniformSpace}
	Let $(X,d)$ be a pseudometric space.
	Define a family $\mathscr{U}\subseteq\Pow(X\times X)$ as the set of all relations of the form $\{(x,x')~|~d(x,x')<\varepsilon\}$ for $\varepsilon>0$ and their supersets.
	Then $(X,\mathscr{U})$ is a uniform space.
\end{example}

Some of the concepts considered for metric spaces, like completion, total boundedness, and characterization of compactness, can be lifted to uniform spaces.
For us, the most important fact is that they form a $\CLatw$-fibration:
\begin{proposition}[{from~\cite[Propositions 4 and 5]{BourbakiUniformStructures}}]
	\label{prop:exBisimUnifCLatwFib}
	The forgetful functor $\Unif\to\Set$ is a $\CLatw$-fibration.\myqed
\end{proposition}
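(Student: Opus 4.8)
The plan is to verify, one by one, the conditions packed into \cref{def:fibration}: that $\Unif\to\Set$ is a fibration, that each fiber is a (posetal) complete lattice, and that each reindexing functor preserves arbitrary meets. Throughout, for a function $i\colon X\to Y$ I write $i^{2}=i\times i\colon X\times X\to Y\times Y$, and $\Delta_{X}$, $V^{-1}$, $V\co V$ for the diagonal, the transpose, and the relational square. First I would pin down the fibers: an $\Unif$-morphism above $\id_{X}$ from $(X,\mathscr{U})$ to $(X,\mathscr{U}')$ is exactly a witness that $\id_{X}$ is uniformly continuous, i.e.\ that $\mathscr{U}'\subseteq\mathscr{U}$. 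Hence $\Unif_{X}$ is the poset of uniformities on $X$ ordered by reverse inclusion, which I write $\sqsubseteq$ (a finer uniformity being the more discriminating one, in keeping with the intuition of \cref{subsec:prelimFibration}), with top element $\top_{X}$ the indiscrete uniformity $\{X\times X\}$.

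For the fibration structure, given $(Y,\mathscr{U}_{Y})\in\Unif$ and $i\colon X\to Y$ in $\Set$ I take the pullback to be the \emph{initial} uniformity: $i^{*}(Y,\mathscr{U}_{Y})=(X,\mathscr{U}_{X})$ with $\mathscr{U}_{X}$ the upward closure of the filter base $\{(i^{2})^{-1}V\mid V\in\mathscr{U}_{Y}\}$. Checking that $\mathscr{U}_{X}$ is a uniformity is routine, using only that $(i^{2})^{-1}$ preserves $\subseteq$, finite $\cap$ and $(-)^{-1}$, and that $(i^{2})^{-1}W\co(i^{2})^{-1}W\subseteq(i^{2})^{-1}(W\co W)$. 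The Cartesian lifting $\cart{i}(Y,\mathscr{U}_{Y})$ is then $i$ itself, viewed as a morphism $(X,\mathscr{U}_{X})\to(Y,\mathscr{U}_{Y})$: given $h\colon(Z,\mathscr{U}_{Z})\to(Y,\mathscr{U}_{Y})$ in $\Unif$ and $k\colon Z\to X$ in $\Set$ with $ph=i\co k$, the only candidate factorisation is $k$ (faithfulness gives uniqueness), and $k$ is uniformly continuous because $(k^{2})^{-1}((i^{2})^{-1}V)=(h^{2})^{-1}V\in\mathscr{U}_{Z}$ for each $V\in\mathscr{U}_{Y}$, and uniform continuity need only be tested on a subbasis. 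This is the standard ``initial structure'' argument that makes $\Unif\to\Set$ topological.

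For completeness of the fibers it suffices — since a poset in which every subset has a meet is a complete lattice — to exhibit $\bigsqcap_{j}\mathscr{U}_{j}$ for an arbitrary family $(\mathscr{U}_{j})_{j}$ of uniformities on $X$. Unfolding $\sqsubseteq={\supseteq}$, this meet is the $\subseteq$-least uniformity containing $\bigcup_{j}\mathscr{U}_{j}$, i.e.\ the uniformity \emph{generated} by $\bigcup_{j}\mathscr{U}_{j}$; the real content is that $\bigcup_{j}\mathscr{U}_{j}$ is indeed a subbasis of a uniformity. I expect the ``triangle'' axiom here to be \textbf{the main obstacle}: one cannot simply take $\bigcap_{j}\mathscr{U}_{j}$, since for $V_{1}\cap\dots\cap V_{n}$ with $V_{m}\in\mathscr{U}_{j_{m}}$ the halving entourages $W_{m}\in\mathscr{U}_{j_{m}}$ ($W_{m}\co W_{m}\subseteq V_{m}$) live in different fibers — but their finite intersection $W_{1}\cap\dots\cap W_{n}$ is again a finite intersection drawn from $\bigcup_{j}\mathscr{U}_{j}$ and satisfies $(W_{1}\cap\dots\cap W_{n})\co(W_{1}\cap\dots\cap W_{n})\subseteq V_{1}\cap\dots\cap V_{n}$, so the subbasis axioms go through. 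This is the classical existence of suprema of uniformities, \cite[Prop.~4]{BourbakiUniformStructures}; the empty family recovers $\top_{X}$.

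Finally, to see that each reindexing functor $i^{*}\colon\Unif_{Y}\to\Unif_{X}$ preserves arbitrary meets, I would use the observation that $i^{*}$ of a uniformity generated by a subbasis $\mathscr{S}\subseteq Y\times Y$ is the uniformity generated by $\{(i^{2})^{-1}S\mid S\in\mathscr{S}\}$ — because every entourage of the generated uniformity contains a finite intersection of members of $\mathscr{S}$, and $(i^{2})^{-1}$ commutes with finite intersections and upward closure. Applying this with $\mathscr{S}=\bigcup_{j}\mathscr{U}_{Y}^{j}$ shows that $i^{*}\bigl(\bigsqcap_{j}\mathscr{U}_{Y}^{j}\bigr)$ and $\bigsqcap_{j}i^{*}\mathscr{U}_{Y}^{j}$ are both the uniformity generated by $\{(i^{2})^{-1}V\mid V\in\bigcup_{j}\mathscr{U}_{Y}^{j}\}$, hence equal. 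Alternatively, and perhaps more cleanly, since $\Unif\to\Set$ is topological the initial-lift functor $i^{*}$ is right adjoint to the final-lift functor $\Sigma_{i}$, so it preserves all limits and in particular meets. Either route completes the verification that $\Unif\to\Set$ is a $\CLatw$-fibration.
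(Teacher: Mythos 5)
Your verification is correct: the paper itself gives no proof and simply cites Bourbaki, and what you have written out — initial uniformities as Cartesian liftings, the supremum of a family of uniformities as the uniformity generated by their union (with the finite-intersection trick for the halving axiom), and meet-preservation of $i^{*}$ via the subbasis description or the adjoint $\Sigma_{i}\dashv i^{*}$ — is exactly the content of the cited Bourbaki propositions. So this is the same (intended) approach, just carried out in full detail.
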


Thus we can use uniform structures as a sort of indistinguishability structure.
A uniform structure on a finite set is essentially the same as an equivalence relation.
For infinite sets, however, it can be a helpful way to analyze coalgebras that is more quantitative than an equivalence relation and more robust than a pseudometric.

\subsection{Expressivity Situation for Bisimulation Uniformity}

\begin{definition}
	\label{def:exBisimUnifSetting}
	Define an expressivity situation $\esit_{\mathrm{BU}}$ by:
	\begin{itemize}
		\item Its fibration is $\Unif\to\Set$ (\cref{prop:exBisimUnifCLatwFib}).
		\item Its truth-value object is $\RR$ and its observation predicate is $\mathscr{U}_e$, the uniformity defined using the usual Euclidean metric as in~\cref{ex:uniformSpace}.
		\item The ranked alphabet of its propositional connectives is $\Sigma=\{1^0,\min^2\}\cup\{(r+)^1,(r\times)^1~|~r\in\RR\}$. Its propositional structure $(f_\sigma\colon\RR^{\rank(\sigma)}\to\RR)_{\sigma\in\Sigma}$ is specified by:
		\begin{align*}
			f_{1}()&=1&f_{\min}(x,y)&=\min(x,y)\\
			f_{r+}(x)&=r+x&f_{r\times}(x)&=rx
		\end{align*}
		\item The behavior functor $B\colon\Set\to\Set$, the set of its modality indices $\Lambda$, and its observation modalities $(\tau_\lambda\colon B\RR\to\RR)_{\lambda\in\Lambda}$ are arbitrary.
	\end{itemize}
\end{definition}

For a $B$-coalgebra $x\colon X\to BX$, the codensity lifting $\codlif{B}{\mathscr{U}_e}{\tau}$ (\cref{def:codensityLifting}) yields the codensity bisimilarity $\CBisim{\mathscr{U}_e}{\tau}{x}\in\Unif_X$ (\cref{def:codensityBisim}), which is a uniformity on the set $X$.
We call it the \emph{bisimulation uniformity of $x$}.

On the other hand, the logic $L_{\esit_{\mathrm{BU}}}$ induces the fibrational logical equivalence $\LEq{\esit_{\mathrm{BU}}}{x}$ (\cref{def:logicalEq}) for each $x\colon X\to BX$.
We call this the \emph{logical uniformity of $x$}.

By \cref{prop:adequacy}, we obtain the following:
\begin{proposition}
	Assume the setting of \cref{def:exBisimUnifSetting}.
	Let $x\colon X\to BX$ be a $B$-coalgebra.
	Any entourage of the logical uniformity is also an entourage of the bisimulation uniformity.
	In particular, for any $\varphi\in L_{\esit_{\mathrm{BU}}}$, $\sem{\varphi}_x\colon X\to \RR$ is uniformly continuous w.r.t.~the bisimulation uniformity.\myqed
\end{proposition}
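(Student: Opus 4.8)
The plan is to read the statement off directly from adequacy (\cref{prop:adequacy}); once the abstract fibre order of $\Unif\to\Set$ is spelled out concretely, nothing further is needed, which is exactly the payoff of the abstract machinery.

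First I would invoke \cref{prop:adequacy}: since $\esit_{\mathrm{BU}}$ is an expressivity situation (\cref{def:exBisimUnifSetting}), that proposition applies verbatim and gives, for the fixed coalgebra $x\colon X\to BX$, the inequality
\[
	\CBisim{\mathscr{U}_e}{\tau}{x}\;\sqsubseteq\;\LEq{\esit_{\mathrm{BU}}}{x}
\]
in the complete lattice $\Unif_X$. It then remains to translate this $\sqsubseteq$ into a statement about entourages. By \cref{prop:exBisimUnifCLatwFib}, $\Unif_X$ is the complete lattice of uniformities on $X$; consistently with the standing reading that $P\sqsubseteq Q$ means ``$P$ is more discriminating'', the order is the one under which a finer uniformity (equivalently, one with more entourages) is the smaller element, i.e.\ $P\sqsubseteq Q$ precisely when every entourage of $Q$ is an entourage of $P$. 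Plugging this in, the displayed inequality says exactly that every entourage of the logical uniformity $\LEq{\esit_{\mathrm{BU}}}{x}$ is an entourage of the bisimulation uniformity $\CBisim{\mathscr{U}_e}{\tau}{x}$, which is the first assertion.

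Next I would handle the ``in particular'' part: since $\LEq{\esit_{\mathrm{BU}}}{x}=\bigsqcap_{\psi\in L_{\esit_{\mathrm{BU}}}}\sem{\psi}_{x}^{*}\mathscr{U}_e$ (\cref{def:logicalEq}), the meet sits below each factor, so
\[
	\CBisim{\mathscr{U}_e}{\tau}{x}\;\sqsubseteq\;\LEq{\esit_{\mathrm{BU}}}{x}\;\sqsubseteq\;\sem{\varphi}_{x}^{*}\mathscr{U}_e
\]
for every $\varphi\in L_{\esit_{\mathrm{BU}}}$. By the universal property of the Cartesian lifting (\cref{def:fibration}), $P\sqsubseteq f^{*}Q$ in $\Unif$ holds iff the underlying function $f$ is a morphism $(X,P)\to(\RR,Q)$ of $\Unif$, i.e.\ iff $f$ is uniformly continuous for $P$ and $Q$; taking $P=\CBisim{\mathscr{U}_e}{\tau}{x}$, $Q=\mathscr{U}_e$ and $f=\sem{\varphi}_{x}$ yields uniform continuity of $\sem{\varphi}_{x}\colon X\to\RR$ with respect to the bisimulation uniformity.

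I do not expect any genuine obstacle here: the only point that needs attention is bookkeeping — pinning down the direction of the fibre order on $\Unif_X$ (finer $=$ more entourages $=$ smaller) and matching $P\sqsubseteq f^{*}(-)$ with uniform continuity — and both are immediate once one recalls the $\CLatw$-fibration structure of $\Unif\to\Set$ and the characterisation of its morphisms. Indeed, the absence of any hard step is precisely what the general framework is supposed to deliver for such a ``new'' bisimilarity-like notion: adequacy comes for free, and the work is confined to expressivity (treated later via \cref{thm:knasterTarskiApproxPrinciple} and an approximating-family argument).
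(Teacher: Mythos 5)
Your proposal is correct and follows exactly the route the paper takes: the paper derives this proposition directly from \cref{prop:adequacy}, and your unpacking of the fibre order on $\Unif_X$ (smaller $=$ finer $=$ more entourages) and of $P\sqsubseteq\sem{\varphi}_x^{*}\mathscr{U}_e$ as uniform continuity is the intended, and correct, translation. Nothing is missing.
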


To prove expressivity, we have to make further assumptions.
\begin{assumption}
	\label{asm:exBisimUnif}
	For $\esit_{\mathrm{BU}}$, assume the following:
	\begin{enumerate}
		\item\label{item:exBisimUnifAsmBounded} If $k\colon X\to\RR$ is bounded, then $\tau_\lambda\co Bk\colon BX\to\RR$ is also bounded for each $\lambda\in\Lambda$.
		\item\label{item:exBisimUnifAsmConverge} If a sequence $k_i$ of functions of type $X\arrow\RR$ uniformly converges into $h$,
		then $\tau_\lambda\co Bk_i\colon BX\arrow\RR$ uniformly converges into $\tau_\lambda\co Bh$ for each $\lambda\in\Lambda$.
	\end{enumerate}
\end{assumption}

The key in the expressivity proof is the following known Stone--Weierstrass-like result:
\begin{fact}[{from~\cite[Thm.~1]{CsaszarApproximation1971}}]
	\label{fact:exBisimUnif}
	Let $X$ be a set and $\Gamma\subseteq\RR$ a set of real numbers unbounded both from above and below.
	Assume a family $\Phi$ of bounded real-valued function satisfies the following:
	\begin{enumerate}
		\item\label{item:exBisimUnifFactConst} Every constant is in $\Phi$.
		\item\label{item:exBisimUnifFactScaling} For $f\in\Phi$ and $r\in\Gamma$, $rf\in\Phi$ holds.
		\item\label{item:exBisimUnifFactShift} For $f\in\Phi$ and $r\in\RR$, $r+f\in\Phi$ holds.
		\item\label{item:exBisimUnifFactMinMax} For $f,g\in\Phi$, $\min(f,g),\max(f,g)\in\Phi$ holds.
	\end{enumerate}
	Let $\mathscr{U}_\Phi$ be the coarsest uniformity on $X$ that makes every function in $\Phi$ uniformly continuous.
	Then any real-valued function uniformly continuous w.r.t.~$\mathscr{U}_\Phi$ is the limit of a uniformly convergent sequence of elements of $\Phi$.
\end{fact}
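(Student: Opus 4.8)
The plan is to reprove this statement (essentially Császár's theorem) by reducing it, via uniform continuity, to the lattice form of the classical Stone--Weierstrass theorem on a compact space. Fix $h\colon X\to\RR$ uniformly continuous w.r.t.\ $\mathscr{U}_\Phi$; to exhibit $h$ as a uniform limit of a sequence from $\Phi$ it suffices to produce, for each $\varepsilon>0$, some $g\in\Phi$ with $\sup_{x\in X}|h(x)-g(x)|<\varepsilon$ (then take $\varepsilon=1/m$). Recall that $\mathscr{U}_\Phi$ has the subbasic entourages $V_{f,\delta}=\{(x,y)\mid|f(x)-f(y)|<\delta\}$ for $f\in\Phi$, $\delta>0$. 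For each $k\ge1$, uniform continuity of $h$ yields a finite $F_k\subseteq\Phi$ and $\delta_k>0$ with $\bigcap_{f\in F_k}V_{f,\delta_k}\subseteq\{(x,y)\mid|h(x)-h(y)|<1/k\}$. Let $\{g_1,g_2,\dots\}$ enumerate the (at most countable) set $\bigcup_k F_k$; each $g_j$ is bounded, say $|g_j|\le M_j$. Define a pseudometric on $X$ by $\rho(x,y)=\sup_{j\ge1}\min(2^{-j},|g_j(x)-g_j(y)|)$. Then $\rho$ is totally bounded (given $\varepsilon'>0$, pick $N$ with $2^{-N}<\varepsilon'$ and use that $\langle g_1,\dots,g_N\rangle$ has bounded, hence totally bounded, image in $\RR^N$), each $g_j$ is $1$-Lipschitz w.r.t.\ $\rho$, and $h$ is uniformly continuous w.r.t.\ $\rho$ (a $\rho$-ball of small enough radius sits inside some $\bigcap_{f\in F_k}V_{f,\delta_k}$); in particular $h$ is bounded.

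Pass to the completion $\widehat X$ of $(X,\rho)$: since $\rho$ is totally bounded, $\widehat X$ is compact. As $\RR$ is complete, $h$ and all the $g_j$, being $\rho$-uniformly continuous, extend uniquely and continuously to $\widehat h,\widehat g_j\in C(\widehat X)$. Let $\mathcal L\subseteq C(\widehat X)$ be the smallest family containing all $\widehat g_j$ and all constants and closed under $\varphi\mapsto r+\varphi$ ($r\in\RR$), $\varphi\mapsto s\varphi$ ($s\in\Gamma$), and $(\varphi,\psi)\mapsto\min(\varphi,\psi),\max(\varphi,\psi)$. The reason for this choice is that restricting any $\varphi\in\mathcal L$ to $X$ yields the same syntactic expression built from the $g_j\in\Phi$ using exactly the operations permitted by the closure conditions~\ref{item:exBisimUnifFactConst}--\ref{item:exBisimUnifFactMinMax}, so $\varphi|_X\in\Phi$. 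Hence it suffices to show $\mathcal L$ is uniformly dense in $C(\widehat X)$: choosing $\varphi\in\mathcal L$ with $\|\widehat h-\varphi\|_\infty<\varepsilon$, the function $g:=\varphi|_X\in\Phi$ satisfies $\sup_{x\in X}|h(x)-g(x)|\le\|\widehat h-\varphi\|_\infty<\varepsilon$.

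Density of $\mathcal L$ follows from the lattice version of the Stone--Weierstrass theorem, which needs only closure under $\min$ and $\max$ (not linearity): a sublattice of $C(\widehat X)$ is dense provided that for every pair of distinct points $p\ne q$ of $\widehat X$ and all $\alpha,\beta\in\RR$ there is $\varphi\in\mathcal L$ with $\varphi(p)=\alpha$ and $\varphi(q)=\beta$. To build $\varphi$: since $\rho(p,q)>0$, some $\widehat g_j(p)\ne\widehat g_j(q)$; after renaming the points we may assume $\widehat g_j(p)<\widehat g_j(q)$. Pick $s\in\Gamma$ with $s>0$ and $|s|$ large if $\alpha\le\beta$ (possible since $\Gamma$ is unbounded above), and with $s<0$ and $|s|$ large if $\alpha>\beta$ (possible since $\Gamma$ is unbounded below); set $\varphi_0=r+s\widehat g_j$ with $r\in\RR$ chosen so $\varphi_0(p)=\alpha$. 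For $|s|$ large enough $\varphi_0(q)$ overshoots $\beta$ in the appropriate direction, so $\varphi:=\min(\varphi_0,\beta)$ in the first case and $\varphi:=\max(\varphi_0,\beta)$ in the second (with $\beta$ the constant function) satisfies $\varphi(p)=\alpha$, $\varphi(q)=\beta$ and lies in $\mathcal L$. This establishes the interpolation property, hence density, and the theorem follows.

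\textbf{Main obstacle.} Two points require care. First, turning ``uniformly continuous with respect to the uniformity $\mathscr U_\Phi$'' into ``uniformly continuous with respect to a single totally bounded pseudometric'', so that the compact-space machinery becomes available; this is where one must cut down to the countably many observables actually used by $h$ and verify boundedness and total boundedness. Second, and more delicate, is the two-point interpolation lemma: only the scalings in $\Gamma$ are available, not all real scalings, so a moderate affine combination need not interpolate two prescribed values --- which is precisely why the hypothesis demands that $\Gamma$ be unbounded both above and below, making the ``large slope plus truncation'' construction work. This interpolation step is the technical heart of the argument and is where Császár's original proof concentrates its effort.
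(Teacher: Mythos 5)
The paper does not prove this statement at all: it is imported as a \emph{Fact}, citing Cs\'{a}sz\'{a}r's Theorem~1 directly, so there is no in-paper argument to compare against. Your proposal is a genuine, self-contained proof, and as far as I can check it is correct. The route --- cut $h$ down to the countably many observables $g_j$ it actually depends on, package them into a single totally bounded pseudometric $\rho$, pass to the compact completion $\widehat X$, and invoke the Kakutani--Stone lattice form of Stone--Weierstrass, with the two-point interpolation supplied by ``large slope from $\Gamma$ plus truncation by a constant'' --- is a clean reduction, and you correctly identify the interpolation step as the place where the unboundedness of $\Gamma$ in both directions is indispensable. Two small points. First, the assertion that each $g_j$ is $1$-Lipschitz for $\rho$ is not quite right: because of the truncation at $2^{-j}$ in the definition of $\rho$, one only gets $|g_j(x)-g_j(y)|\le\rho(x,y)$ when $\rho(x,y)<2^{-j}$; but this still gives uniform continuity of $g_j$ w.r.t.\ $\rho$, which is all that the extension to $\widehat X$ requires, so nothing breaks. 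Second, it is worth making explicit that your argument shows every $\mathscr{U}_\Phi$-uniformly continuous $h$ is automatically bounded (being uniformly continuous on a totally bounded pseudometric space); this is needed for the conclusion even to be possible, since a uniform limit of bounded functions is bounded, and your construction delivers it for free.
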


By using this, we can show that a suitable set is approximating. In its proof, we follow the two steps discussed in~\cref{rem:twoStepsInApprox}.

\begin{propositionrep}
	\label{prop:exBisimUnifApproximability}
	Assume the setting of \cref{def:exBisimUnifSetting}.
	Let $X\in\Set$.
	Under \cref{asm:exBisimUnif}, a subset $S\subseteq\Set(X,\RR)$ is approximating if the following hold:
	\begin{itemize}
		\item Every function in $S$ is bounded.
		\item $1\in S$.
		\item $S$ is closed under the three operations $\min$, $(r+)$, and $(r\times)$ for every
		$r\in\RR$.\myqed
	\end{itemize}
\end{propositionrep}
\begin{proof}
	Define a uniformity $\mathscr{U}_S$ as the coarsest uniformity that every $k\in S$ is a uniformly continuous map $(X,\mathscr{U}_S)\to(\RR,\mathscr{U}_e)$.
	Fix $h\colon (X,\mathscr{U}_S) \to (\RR,\mathscr{U}_e)$ and $\lambda\in\Lambda$.
	We show that, in the fiber $\Unif_X$, \[
		\bigsqcap_{k\in S, \lambda'\in\Lambda} (\tau_{\lambda'}\co Bk)^*\mathscr{U}_e\sqsubseteq (\tau_\lambda\co B(ph))^*\mathscr{U}_e
	\] holds.
	Since $\{\{(x,y)\in\RR^2~|~d_e(x,y)<\varepsilon\}~|~\varepsilon>0\}$ is a fundamental system of entourages of $\mathscr{U}_e$, the family $\{\{(x,y)\in X^2~|~d_e((\tau_\lambda\co B(ph))(x),(\tau_\lambda\co B(ph))(y))<\varepsilon\}~|~\varepsilon>0\}$ is a fundamental system of entourages of $(\tau_\lambda\co B(ph))^*\mathscr{U}_e$.
	So it suffices to show that each relation in the family is an entourage of $\bigsqcap_{k\in S, \lambda'\in\Lambda} (\tau_{\lambda'}\co Bk)^*\mathscr{U}_e$.
	
	We show the following stronger claim:
	\begin{claim}
		For any $\varepsilon>0$, there exists $k\in S$ such that $\{(x,y)\in X^2~|~d_e((\tau_\lambda\co Bk)(x),(\tau_\lambda\co Bk)(y))<\varepsilon/3\}$ is a subset of $\{(x,y)\in X^2~|~d_e((\tau_\lambda\co B(ph))(x),(\tau_\lambda\co B(ph))(y))<\varepsilon\}$.
	\end{claim}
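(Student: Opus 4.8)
The plan is to derive the Claim from the Stone--Weierstrass-type approximation result \cref{fact:exBisimUnif} (the approximation theorem of Cs\'asz\'ar) together with condition \ref{item:exBisimUnifAsmConverge} of \cref{asm:exBisimUnif}. First I would check that $S$ meets the hypotheses of \cref{fact:exBisimUnif} with $\Gamma=\RR$: every element of $S$ is bounded by assumption; $S$ is closed under $\min$, under $(r+)$ and under $(r\times)$; and $1\in S$. From these, $S$ contains every constant, since $r=(r\times)(1)$, and $S$ is closed under $\max$, since $\max(f,g)=(-1\times)\bigl(\min((-1\times)f,(-1\times)g)\bigr)$. As $\RR$ is unbounded from above and from below, \cref{fact:exBisimUnif} applies with $\Phi=S$, and the uniformity $\mathscr{U}_\Phi$ it produces is precisely the uniformity $\mathscr{U}_S$ fixed at the start of the proof of \cref{prop:exBisimUnifApproximability}.

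Next, since $h$ is a $\Unif$-morphism $(X,\mathscr{U}_S)\to(\RR,\mathscr{U}_e)$, the underlying function $ph\colon X\to\RR$ is uniformly continuous with respect to $\mathscr{U}_S=\mathscr{U}_\Phi$, so \cref{fact:exBisimUnif} yields a sequence $(k_n)_{n\in\mathbb{N}}$ in $S$ converging uniformly to $ph$. By condition \ref{item:exBisimUnifAsmConverge} of \cref{asm:exBisimUnif}, the sequence $(\tau_\lambda\co Bk_n)_{n\in\mathbb{N}}$ of functions $BX\to\RR$ then converges uniformly to $\tau_\lambda\co B(ph)$. Given $\varepsilon>0$, I would pick $N$ with $\sup_{z\in BX}d_e\bigl((\tau_\lambda\co Bk_N)(z),(\tau_\lambda\co B(ph))(z)\bigr)<\varepsilon/3$ and set $k:=k_N\in S$.

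It then remains to verify the asserted inclusion, which is a one-step triangle-inequality argument: if $z,w\in BX$ satisfy $d_e\bigl((\tau_\lambda\co Bk)(z),(\tau_\lambda\co Bk)(w)\bigr)<\varepsilon/3$, then, using the choice of $N$ for the two outer summands,
\[
	d_e\bigl((\tau_\lambda\co B(ph))(z),(\tau_\lambda\co B(ph))(w)\bigr)< \varepsilon/3+\varepsilon/3+\varepsilon/3=\varepsilon .
\]
This establishes the Claim; note that only condition \ref{item:exBisimUnifAsmConverge} of \cref{asm:exBisimUnif} is used here. In the surrounding proof the Claim provides, for every $\varepsilon>0$, an entourage of $(\tau_\lambda\co Bk)^*\mathscr{U}_e$ for some $k\in S$ contained in the relation $\{(z,w)\in(BX)^2\mid d_e((\tau_\lambda\co B(ph))(z),(\tau_\lambda\co B(ph))(w))<\varepsilon\}$; since entourages are upward closed and the latter relations form a fundamental system of entourages of $(\tau_\lambda\co B(ph))^*\mathscr{U}_e$, this gives $\bigsqcap_{k\in S,\lambda'\in\Lambda}(\tau_{\lambda'}\co Bk)^*\mathscr{U}_e\sqsubseteq(\tau_\lambda\co B(ph))^*\mathscr{U}_e$.

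I do not expect a genuine obstacle in this argument: the mathematical content is entirely in the cited \cref{fact:exBisimUnif} and in \cref{asm:exBisimUnif}. The only point that needs a little care is the bookkeeping in the first paragraph --- confirming that closure under the three operations $\min$, $(r+)$, $(r\times)$ together with $1\in S$ really does imply all four conditions of \cref{fact:exBisimUnif}, notably closure under $\max$ and the presence of all constant functions, and that the uniformity $\mathscr{U}_\Phi$ from the theorem coincides with the $\mathscr{U}_S$ fixed in the proof.
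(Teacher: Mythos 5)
Your proposal matches the paper's own argument essentially step for step: apply Cs\'asz\'ar's theorem (\cref{fact:exBisimUnif}) with $\Gamma=\RR$ and $\Phi=S$ (recovering constants as multiples of $1$ and $\max$ via $-\min(-\cdot,-\cdot)$), obtain a sequence in $S$ converging uniformly to $ph$, push it through $\tau_\lambda\co B(-)$ using condition \ref{item:exBisimUnifAsmConverge} of \cref{asm:exBisimUnif}, and finish with the triangle inequality. The argument is correct and takes the same route as the paper.
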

	
	Use \cref{fact:exBisimUnif} for $\Gamma=\RR$ and $\Phi=S$.
	The condition~(\ref{item:exBisimUnifFactConst}) is satisfied because every constant is a multiple of $1$, and the condition~(\ref{item:exBisimUnifFactMinMax}) is satisfied because $\max(x,y)=-\min(-x,-y)$.
	This ensures the existence of a sequence $(k_n\colon X\to\RR)_{n=1,2,\dots}$ that uniformly converges
	to $ph$ as $n\to\infty$.
	
	Fix $\varepsilon>0$.
	By the assumption~(\ref{item:exBisimUnifAsmConverge}), the sequence $(\tau_\lambda\co Bk_n)_{n=1,2,\dots}$ also uniformly converges to $\tau_\lambda\co B(ph)$ as $n\to\infty$.
	Thus, we can fix $n$ so that, for each $x\in X$, $d_e(\tau_{\lambda}((B(ph))(x)),\tau_{\lambda}((Bk_n)(x)))<\varepsilon/3$ holds.
	From the triangle inequality, we can take $k=pk_n$ for the claim above to hold.
\end{proof}

From this, we can obtain expressivity:

\begin{corollaryrep}
	Assume the setting of \cref{def:exBisimUnifSetting}.
	Let $x\colon X\to BX$ be a $B$-coalgebra.
	Under \cref{asm:exBisimUnif}, the bisimulation uniformity coincides with the logical uniformity, i.e., the former is characterized as the coarsest uniformity making every $\sem{\varphi}_x\colon X\to\RR$ uniformly continuous.\myqed
\end{corollaryrep}
\begin{proof}
	Use \cref{thm:knasterTarskiApproxPrinciple}.
	Indeed, by the assumption~(\ref{item:exBisimUnifAsmBounded}), $\sem{\varphi}_x$ is bounded for every $\varphi\in L_{\esit_{\mathrm{BU}}}$.
\end{proof}

\newcommand{\mt}[1]{\widetilde{#1}} \newcommand{\mc}[1]{{\mathscr #1}}
\newcommand{\modext}[2]{S_{#1+#2}}
\newcommand{\rloop}[2][-]{\save \POS!R(.7) \ar@(ru,rd)^#1{#2} \restore}
\newcommand{\lloop}[2][-]{\save \POS!L(.7) \ar@(lu,ld)_#1{#2} \restore}
\newcommand{\Algo}[1]{\mathcal{#1}}
\newcommand{\teq}{\mathbin{\triangleq}}
\newcommand{\blank}{\place}
\newcommand{\KR}{\mathrm{MLCP}}
\newcommand{\rk}[1]{{\rank(#1)}}
\renewcommand{\angle}[1]{\langle #1\rangle}
\newcommand{\ctuple}[1]{\angle{#1}}
\newcommand{\ptuple}[2]{\angle{#1}_{#2}}

\newcommand{\ca}[2]{\textcolor{black}{#2}}
\newcommand{\coalgebra}[3]{\textcolor{black}{coalgebra ${#3}:{#2}\to {#1}{#2}$}}
\newcommand{\beacoalgebra}[3]{\textcolor{black}{${#3}:{#2}\to {#1}{#2}$ be a coalgebra}}

\section{An Abstract Look at Expressivity and Approximation}
\label{sec:dualityAndComma}

Lastly, we take an abstract approach to the concept of expressivity
and approximation by combining two studies on coalgebraic modal
logic: fibrational formulation of adequacy and
expressivity \cite{KR20}, and Klin's reformulation of duality-based
modal logic using comma categories \cite{Klin10}. The combination
leads us to a new look at adequacy and expressivity as a comparison
problem of final coalgebras through a functor. An approximating family is then
a key construct to solve this comparison problem.

\subsection{Fibrational Theory of Adequacy and Expressivity}

In \cite{KR20}, the third and fourth authors integrated duality-based
modal logic \cite{PavlovicMW06} and fibrational theory of bisimulation
\cite{HJ98}, and formulated adequacy and expressivity of modal
logic. Their formulation is built upon the following categorical
situation \cite[Asm. 14, Def. 15]{KR20}, which we tentatively call a
{\em modal logic with coinductive predicates} $\mc K$:
\begin{equation}
  \label{eq:exsitu}
  \vcenter{\xymatrix{
      \EE \ar[d]_-p \lloop[(.2)]{\ol B} \\
      \CC \lloop[(.2)]B \adjunction{r}{P}{Q} & \DD^\op
      \ar[ul]_-{\ol Q}
      \rloop[(.2)]{L^\op} \restore
    }}
  \quad
  \begin{aligned}
    +&\left(\begin{aligned}
        & \alpha:L\Phi\to\Phi~\text{initial}\\
        & \delta:B\circ Q\to Q\circ L^\op
      \end{aligned}\right) \\
    \text{s.t.}& \left(\begin{aligned}
        & p:\text{$\CLatw$-fibration} \\
        & \ol B: \text{lifting of $B$} \\
        & p\circ\ol Q=Q
      \end{aligned}\right) \\
  \end{aligned}
\end{equation}
The tuple $\mc L\triangleq(B,P\dashv Q,L,\delta,\alpha)$ is called a
duality-based modal logic, while the tuple $(p,B,\ol B)$ determines a
setting for fibrational bisimulation
(\cref{subsec:prelimcoinductionInFib}). Now let \beacoalgebra BXx.  Following \cite[Def. 15]{KR20}, $\mc K$ is said to be
\begin{itemize}
\item {\em adequate} for $x$ if $\nu(x^*\circ\ol B)\sqsubseteq \theory_{\ca Xx}^*(\ol Q\Phi)$, and
\item {\em expressive} for $x$ if $\nu(x^*\circ\ol B)\sqsupseteq \theory_{\ca Xx}^*(\ol Q\Phi)$.
\end{itemize}
Here, $\theory_{\ca Xx}:X\to Q\Phi$ is the theory morphism induced by $x$. It
corresponds to the interpretation function $\sem\blank_{\ca Xx}$ in
\cref{def:formulaSemantics}; see \cite{PavlovicMW06} for detail.

One might wonder how the above definition of adequacy and
expressivity is related to ours in \cref{def:expressive}. The
following construction establishes a formal connection:
\begin{theoremrep}
  For any expressivity
  situation $\esit=(p,B,\Omega,\bOmega,\cdots,\tau)$ whose base category $\CC$ has small powers and
  equalizers, there is a modal logic with  coinductive predicates $\KR(\esit)$ such that for any
  \coalgebra BXx, $\nu(x^*\co\ol B)=\CBisim{\bOmega}{\tau}{\ca Xx}$ and
  $\theory_{\ca Xx}^*(\ol Q\Phi)=\LEq{\esit}{\ca Xx}$ holds.  \myqed
\end{theoremrep}
\begin{proof}
  Let $\esit=(p,B,\Omega,\bOmega,\Sigma,\Lambda,(f_\sigma)_{\sigma\in\Sigma},(\tau_\lambda)_{\lambda\in\Lambda})$
  be an expressivity situation.
\newcommand{\Sig}{{S_\Sigma}} We write $\Sig:\Set\to\Set$ for the
  signature functor of $\Sigma$ defined by
  $\Sig X=\coprod_{\sigma\in\Sigma} X^{\rk\sigma}$.  An element of
  $\Sig X$ is represented as the form
  $\sigma(v_1,\cdots,v_{\rk\sigma})$ where $\sigma\in\Sigma$ and
  $v_1,\cdots,v_{\rk\sigma}\in X$.

  We aim to construct the following modal logic with coinductive predicates $\KR(\esit)$.
  \begin{equation}
    \label{eq:exsitu-esit}
    \vcenter{\xymatrix{
        \EE \ar[d]_-p \lloop[(.2)]{\codlif B\bOmega\tau} \\
        \CC \lloop[(.2)]B \adjunction{r}{P}{Q} & (\Set^\Sig)^\op
        \rloop[(.2)]{L^\op} \restore
        \ar[lu]_-{\ol Q}
      }}
    \quad
    \begin{aligned}
      +&\left(\begin{aligned}
          & \alpha:L\Phi\to\Phi~\text{initial}\\
          & \delta:B\circ Q\to Q\circ L^\op
        \end{aligned}\right) \\
      \text{s.t.}& \left(\begin{aligned}
          & p:\text{$\CLatw$-fibration} \\
          & \text{$\codlif B\bOmega\tau$ lifting of $B$} \\
          & p\circ\ol Q=Q
        \end{aligned}\right) \\
    \end{aligned}
  \end{equation}
  The remaining data $P,Q,\ol Q,\alpha,\delta$ are defined below.  In
  fact, $Q,\ol Q$ are constructed by the following common method.
  \setcounter{theorem}{3}
  \begin{lemma}\label{lem:adj}
    Let $\CC$ be a category with small powers and equalizers, and
    $(\Omega,f)$ be a $\Sigma$-algebra in $\CC$. Define a functor
    $P:\CC\to(\Set^\Sig)^\op$ by
    \begin{align*}
      PX&=(\CC(X,\Omega),f^P) & \text{where}~f^P(\sigma(g_1,\cdots,g_{\rk\sigma})) &=f_\sigma\circ\ctuple{g_1,\cdots,g_{\rk\sigma}}\\
      Ph(k)&=k\circ h.
    \end{align*}
    Then $P$ has a right adjoint $Q$.
  \end{lemma}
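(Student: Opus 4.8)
The plan is to construct the right adjoint $Q$ by hand, using the classical ``dualizing object'' recipe with $\Omega$ (carrying its $\Sigma$-algebra structure $f$) as the ambimorphic object: the value $QA$ will be an object of $\CC$ carved out of a power of $\Omega$ by exactly the equations forcing a generalized point to be a $\Sigma$-algebra homomorphism into $PX$. Before that one checks $P$ is well defined: $f^P$ is computed componentwise from $f$, so it is a $\Sigma$-algebra structure on the hom-set $\CC(X,\Omega)$; and post-composing the defining equation of $f^P$ with a $\CC$-arrow $h\colon X\to X'$ shows that $Ph=(\place)\co h$ is a $\Sigma$-homomorphism $PX'\to PX$. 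Functoriality of $P$ is then immediate.

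\emph{Construction of $Q$.} Fix a $\Sigma$-algebra $A$ with underlying set $|A|$ and structure map $a\colon S_\Sigma|A|=\coprod_{\sigma\in\Sigma}|A|^{\rank\sigma}\to|A|$. Since $\Sigma$ is a ranked alphabet and $|A|$ is a small set, $S_\Sigma|A|$ is a small set, so the powers $\Omega^{|A|}$ and $\Omega^{S_\Sigma|A|}$ exist in $\CC$ by hypothesis. Define two parallel $\CC$-arrows $u,w\colon\Omega^{|A|}\to\Omega^{S_\Sigma|A|}$ by their components at each $\sigma(v_1,\dots,v_n)\in S_\Sigma|A|$ (with $n=\rank\sigma$): set $\pi_{\sigma(v_1,\dots,v_n)}\co u=\pi_{a(\sigma(v_1,\dots,v_n))}$ (so $u$ is reindexing along $a$), and $\pi_{\sigma(v_1,\dots,v_n)}\co w=f_\sigma\co\langle\pi_{v_1},\dots,\pi_{v_n}\rangle$; both are well defined by the universal property of $\Omega^{S_\Sigma|A|}$. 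Let $e\colon QA\to\Omega^{|A|}$ be an equalizer of $u$ and $w$, which exists by hypothesis. For the action on morphisms: a morphism $A\to A'$ of $(\Set^{S_\Sigma})^{\op}$ is a $\Sigma$-homomorphism $\psi\colon A'\to A$, and one checks that $\Omega^{|\psi|}\co e$ equalizes the maps $u',w'$ defining $QA'$ --- precisely because $\psi$ commutes with every $\sigma$-operation, which makes $u$ and $w$ natural along homomorphisms --- so it factors uniquely through $e'$; define $Q\psi$ to be this factorization. Functoriality of $Q$ then follows from uniqueness of equalizing factorizations.

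\emph{The adjunction.} We produce a bijection $\CC(X,QA)\cong\Set^{S_\Sigma}(A,PX)$ natural in both variables. By the universal property of the equalizer, a $\CC$-arrow $X\to QA$ is the same as a $\CC$-arrow $g\colon X\to\Omega^{|A|}$ with $u\co g=w\co g$; by the universal property of the power, such a $g$ is the same as a family $(g_v\colon X\to\Omega)_{v\in|A|}$, i.e.\ a function $\bar g\colon|A|\to\CC(X,\Omega)$; and unwinding the components of $u$ and $w$ (using that $\langle\pi_{v_1},\dots,\pi_{v_n}\rangle\co g=\langle\bar g(v_1),\dots,\bar g(v_n)\rangle$) shows that $u\co g=w\co g$ holds exactly when $\bar g\bigl(a(\sigma(v_1,\dots,v_n))\bigr)=f_\sigma\co\langle\bar g(v_1),\dots,\bar g(v_n)\rangle$ for all $\sigma$ and $v_i$, i.e.\ exactly when $\bar g$ is a $\Sigma$-homomorphism $A\to PX$ (recall the definition of $f^P$). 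Tracing the same identifications through precomposition with a $\CC$-arrow $X'\to X$ on one side, and through $Q\psi$ and $P\psi$ on the other, shows the bijection is natural; hence $P\dashv Q$.

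\emph{Main obstacle.} There is no real conceptual difficulty: this is the standard adjunction attached to a dualizing object, restricted along the forgetful functor $\Set^{S_\Sigma}\to\Set$, and it is the reason dual-adjunction presentations of modal logic work at all. The effort is almost entirely bookkeeping --- assembling $u$ and $w$ correctly from components via the universal property of powers, keeping the $\coprod$-indexing of $S_\Sigma|A|$ straight, and verifying the naturality squares. The one place where a hypothesis is genuinely \emph{used} rather than merely unwound is the claim that $\Omega^{|\psi|}$ restricts to a map $QA\to QA'$, which is exactly where the $\Sigma$-homomorphism axioms for $\psi$ enter; everywhere else only the universal properties of powers and equalizers (and the definition of $f^P$) are needed.
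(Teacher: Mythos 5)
Your proposal is correct and follows essentially the same route as the paper: $QA$ is carved out of the power $\Omega^{|A|}$ as the equalizer of the same two parallel arrows into $\Omega^{S_\Sigma|A|}$ (your $u,w$ are the paper's $b_{X,x},a_{X,x}$), with the equalizing condition unwinding to the $\Sigma$-homomorphism property exactly as in the paper. The only difference is presentational: the paper verifies the adjunction by exhibiting the universal arrow $\epsilon_{X,x}(v)=\pi_v\circ e_{X,x}$ (which gives functoriality of $Q$ and naturality for free), whereas you establish the natural hom-set bijection directly and check functoriality of $Q$ separately.
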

  \begin{nestedproof}
    Let $(X,x)\in\Set^\Sig$. We define an object $Q(X,x)$ to be the
    following equalizer in $\CC$:
    \begin{align*}
      &    \xymatrix@C=2cm{
        Q(X,x) \ar[r]^-{e_{X,x}^\CC} & X\pitchfork\Omega \ar@<.2pc>[r]^-{a_{X,x}^\CC} \ar@<-.2pc>[r]_-{b_{X,x}^\CC} & \Sigma X\pitchfork\Omega
                                                                                                                      }\\
      &a_{X,x}^\CC=\ptuple{ f_\sigma\circ\tuple{\pi_{v_1},\cdots,\pi_{v_{\rk\sigma}}}}{\sigma(v_1,\cdots,v_{\rk\sigma})\in \Sig X}\\
      &b_{X,x}^\CC=\ptuple{ \pi_{x(\sigma(v_1,\cdots,v_{\rk\sigma}))}}{\sigma(v_1,\cdots,v_{\rk\sigma})\in \Sig X}.
    \end{align*}
    The reason why we annotate the category $\CC$ on morphisms $a,b,e$
    is that we later consider $a,b,e$ in different categories. When
    $\CC$ is evident, we omit writing it.
    
    We define a morphism $\epsilon_{X,x}:(X,x)\to PQ(X,x)$ in
    $\Set^\Sig$ by $ \epsilon_{X,x}(v)=\pi_v\circ e_{X,x}$. It is
    indeed a morphism in $\Set^\Sig$ as:
    \begin{align*}
      f^P\circ\Sig\epsilon_{X,x}(\sigma(v_1,\cdots,v_{\rk\sigma}))
      &=f^P(\sigma(\epsilon_{X,x}(v_1),\cdots,\epsilon_{X,x}(v_{\rk\sigma})))\\
      &=f_\sigma\circ\ctuple{\pi_{v_1},\cdots,\pi_{v_{\rk\sigma}}}\circ e_{X,x}\\
      &=\pi_{x(\sigma(v_1,\cdots,v_{\rk\sigma}))}\circ e_{X,x}\\
      &=\epsilon_{X,x}\circ x(\sigma(v_1,\cdots,v_{\rk\sigma})).
    \end{align*}

    We show that $\epsilon_{X,x}$ is a universal arrow from
    $(X,x)\in\Set^\Sig$ to $P$.  Let $h:(X,x)\to PY$ be a morphism in
    $\Set^\Sig$. That is, $h:X\to\CC(Y,\Omega)$ is a function such
    that $f^P\circ\Sig h=h\circ x$. Then
    $\ptuple{h(v)}{v\in X}:Y\to X\pitchfork\Omega$ equalizes $a_{X,x}$
    and $b_{X,x}$:
    \begin{align*}
      a_{X,x} \circ \ptuple{h(v)}{v\in X}
      &=\ptuple{ f_\sigma\circ\tuple{h(v_1),\cdots,h(v_{\rk\sigma})}}{\sigma(v_1,\cdots,v_{\rk\sigma})\in \Sig X}\\
      &=\ptuple{f^P(\sigma(h(v_1),\cdots,h(v_{\rk\sigma})))}{\sigma(v_1,\cdots,v_{\rk\sigma})\in \Sig X}\\
      &=\ptuple{h(x(\sigma(v_1,\cdots,v_{\rk\sigma})))}{\sigma(v_1,\cdots,v_{\rk\sigma})\in \Sig X}\\
      &=b_{X,x}\circ\ptuple{h(v)}{v\in X}.
    \end{align*}
    We thus obtain the unique morphism $\mt h:Y\to Q(X,x)$ in $\EE$
    such that
    \begin{equation}
      \label{eq:univ}
      e_{X,x}\circ \mt h=\ptuple{h(v)}{v\in X}.
    \end{equation}
    This satisfies $P\mt h\circ\epsilon_{X,x}=h$ in $\Set^\Sig$
    because for any $v\in X$, we have
    \begin{displaymath}
      \epsilon_{X,x}(v)=\pi_v\circ e_{X,x}\circ\mt h=\pi_v\circ \ptuple{h(v)}{v\in X}=h(v).
    \end{displaymath}

    We show that such $\mt h$ is unique. Let $h':Y\to Q(X,x)$ be a
    morphism in $\EE$ such that $Ph'\circ\epsilon_{X,x}=h$. This means
    that $\pi_v\circ e_{X,x}\circ h'=h(v)$ holds for any $v\in X$,
    hence $\ptuple{h(v)}{v\in X}=e_{X,x}\circ h'$. From the universal
    property of the equalizer, $h'=\mt h$.
  \end{nestedproof}

  Since $\CC$ has small powers and equalizers, we define $P\dashv Q$
  in \cref{eq:exsitu-esit} to be the one arising from $\esit$'s
  $\Sigma$-algebra $(\Omega,f)$ in $\CC$ by Lemma \ref{lem:adj}.

  Since $p:\EE\to\CC$ is a $\CLatw$-fibration, $\EE$ also has small
  powers and equalizers that are strictly preserved by $p$. Moreover
  $\esit$ specifies another $\Sigma$-algebra $(\bOmega,g)$ in $\EE$
  above $(\Omega,f)$; see the fourth condition in \cref{def:expSitX}.
  Therefore we define $\ol Q$ in \cref{eq:exsitu-esit} to be the right
  adjoint arising from $(\bOmega,g)$ by Lemma \ref{lem:adj}. Since $p$
  preserves small powers and equalizers, $p\circ\ol Q=Q$ holds.

  We define $L$ in \cref{eq:exsitu-esit} to be the composite
  $F^\Sig\circ(\Lambda\times \place)\circ U^\Sig$, where
  $F^\Sig\dashv U^\Sig:\Set^\Sig\to\Set$ is the adjunction
  constructing free $\Sig$-algebras. Let us identify $L$-algebras:
  \begin{lemma}
    By $\Sigma+\Lambda$ we mean the ranked alphabet obtained by
    (disjointly) adding $\Lambda$-many rank-1 symbols to $\Sigma$. We
    have an isomorphism of categories:
    \begin{equation}
      \label{eq:isoalg}
      (\Set^\Sig)^L\cong\Set^{\modext\Sigma\Lambda}.
    \end{equation}
  \end{lemma}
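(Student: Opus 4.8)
\begin{nestedproof}
The plan is to unfold the definition of an $L$-algebra and transpose twice along the free--forgetful adjunction $F^\Sig\dashv U^\Sig$. First I would record that $U^\Sig$ sends an $\Sig$-algebra $(X,x)$ to its carrier $X$, so $L(X,x)=F^\Sig(\Lambda\times X)$ and $Lf=F^\Sig(\id_\Lambda\times f)$ for a morphism $f$.

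An $L$-algebra is then a pair of an $\Sig$-algebra $(X,x)$ and a morphism $\ell\colon F^\Sig(\Lambda\times X)\to(X,x)$ in $\Set^\Sig$; by the adjunction, such $\ell$ corresponds bijectively to a function $\widehat\ell\colon\Lambda\times X\to X$, i.e.\ to a $\Lambda$-indexed family $\bigl(\widehat\ell(\lambda,\blank)\colon X\to X\bigr)_{\lambda\in\Lambda}$ of unary operations on $X$. Hence an $L$-algebra is precisely a set carrying $\Sigma$-algebra structure together with $\Lambda$-many unary operations, which by the definition of the ranked alphabet $\Sigma+\Lambda$ is exactly an $\modext\Sigma\Lambda$-algebra; this correspondence leaves the carrier set untouched. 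For morphisms, an $L$-algebra morphism $f\colon((X,x),\ell^X)\to((Y,y),\ell^Y)$ is an $\Sig$-algebra morphism $f$ with $f\circ\ell^X=\ell^Y\circ Lf=\ell^Y\circ F^\Sig(\id_\Lambda\times f)$; transposing this equation along $F^\Sig\dashv U^\Sig$ yields the equivalent condition $f\circ\widehat{\ell^X}=\widehat{\ell^Y}\circ(\id_\Lambda\times f)$, i.e.\ $f\bigl(\widehat{\ell^X}(\lambda,v)\bigr)=\widehat{\ell^Y}\bigl(\lambda,f(v)\bigr)$ for all $\lambda\in\Lambda$, $v\in X$ --- exactly compatibility of $f$ with the $\Lambda$-many unary operations. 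Combined with $f$ preserving the $\Sigma$-operations, this is precisely the defining condition of a morphism of $\modext\Sigma\Lambda$-algebras. The two assignments are mutually inverse, preserve identities and composition (both are computed on underlying functions on either side), and so assemble into the claimed isomorphism $(\Set^\Sig)^L\cong\Set^{\modext\Sigma\Lambda}$ of categories.

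The only step that needs care --- and the ``hard part'', such as it is --- is the transposition on morphisms: one applies naturality of the hom-bijection $\Set^\Sig(F^\Sig A,B)\cong\Set(A,U^\Sig B)$ in $A$ along $\id_\Lambda\times f\colon\Lambda\times X\to\Lambda\times Y$ and in $B$ along $f\colon(X,x)\to(Y,y)$, and checks that the two naturality squares combine to convert $f\circ\ell^X=\ell^Y\circ F^\Sig(\id_\Lambda\times f)$ into the stated pointwise identity. Everything else is a direct unfolding of definitions.
\end{nestedproof}
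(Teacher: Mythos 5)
Your proof is correct and follows essentially the same route as the paper: transposing the $L$-algebra structure map along $F^\Sig\dashv U^\Sig$ to obtain a $\Lambda$-indexed family of unary operations on the carrier. You are in fact slightly more thorough than the paper's proof, which only spells out the bijection on objects and leaves the morphism-level check (your naturality argument) implicit.
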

  \begin{nestedproof}
    From the following bijection:
    \begin{displaymath}
      \Set^\Sig(L(X,x),(X,x))\cong
      \Set(\Lambda\times X,X)\cong
      \Set(X,X)^\Lambda,
    \end{displaymath}
    an $L$-algebra $l:L(X,x)\to (X,x)$ bijectively corresponds to a
    $\Lambda$-indexed family of endofunctions on $X$. Thus the tuple
    $(X,x,l)$ bijectively corresponds to an
    $\modext\Sigma\Lambda$-algebra on $X$.
  \end{nestedproof}

  We define an initial algebra $\alpha:L\Phi\to\Phi$ to be the one
  corresponding to an initial $\modext\Sigma\Lambda$-algebra
  constructed as the set $L_\esit$ of formulas of $\esit$ in
  (\cref{def:syntax}). Thus the carrier set of $\Phi\in\Set^\Sig$ is
  $L_\esit$.

  Under the right adjoint $Q$ and an endofunctor $L$ defined above, we
  obtain the following bijective correspondence:
  \begin{align*}
    \bec{}{[\Op{(\Set^\Sig)}, \CC] (B \circ Q, Q \circ \Op L)}{}
    \bec{\cong}{[\CC, \Op{(\Set^\Sig)}] (P\circ B, \Op L \circ P)}{adjoint mate}
    \bec{\cong}{[\Op{\Cat{C}}, \Set^\Sig] (L \circ P^\op, P^\op \circ \Op B)}{}
    \bec{=}{[\Op{\Cat{C}}, \Set^\Sig] (F^\Sig (\Lambda\times U^\Sig \place)\circ P^\op, P^\op
    \circ \Op B)}{by def.\ of $L$}
    \bec{\cong}{[\Op{\Cat{C}}, \Set] ((\Lambda\times U^\Sig \place)\circ P^\op, U^\Sig \circ P^\op \circ \Op B)}{}
    \bec{\cong}{[\Op{\Cat{C}}, \Set] (U^\Sig \circ P^\op, (\place)^{\Lambda} \circ U^\Sig \circ P^\op \circ \Op B)}{}
    \bec{=}{[\Op{\Cat{C}}, \Set] \bigl(\,\CC (\place, \Omega),  \bigl(\CC (B\place, \Omega)\bigr)^{\Lambda}\,\bigr)}{by def.\ of $P$}
    \becncr{\cong}{\bigl( \CC (B \Omega, \Omega)\bigr)^{\Lambda}}{by the Yoneda lemma.}
    \tag*{\qedhere}
  \end{align*}
  We therefore define $\delta$ to be the one corresponding to
  $\esit$'s modality
  $(\tau_\lambda)_{\lambda\in\Lambda}\in\CC(B\Omega,\Omega)^\Lambda$.

  Let \beacoalgebra BXx. It is ovbious that
  $\nu(x^*\co\ol B)=\CBisim{\bOmega}{\tau}{\ca Xx}$. We thus show that
  $\theory_{\ca Xx}^*(\ol Q\Phi)=\LEq{\esit}{\ca Xx}=\bigsqcap_{\phi\in
    L_\esit}\sem\phi_{\ca Xx}^*\bOmega$.  First, we construct the theory
  morphism $\theory_{\ca Xx}:X\to Q\Phi$.  From the inductive definition of the
  interpretation function $\sem\blank_{\ca Xx}$, it is a $\Set^\Sig$-algebra
  homomorphism of type $\Phi\to PX$.  Then we define $\theory_{\ca Xx}$ to be the
  adjoint mate $\mt{\sem\blank_{\ca Xx}}:X\to Q\Phi$. We note that
  $e_\Phi\circ \theory_{\ca Xx}=\ptuple{\sem\phi_{\ca Xx}}{\phi\in L_\esit}$ by
  \eqref{eq:univ}.

  Next, we explicitly compute $\ol Q\Phi$.  In $\EE$, the equalizer of
  $a^\EE_\Phi,b^\EE_\Phi:L_\esit\pitchfork\bOmega\to\Sig
  L_\esit\pitchfork\bOmega$ is given by a Cartesian lifting of the
  equalizing morphism of $pa^\EE_\Phi,pb^\EE_\Phi$ with
  $L_\esit\pitchfork\bOmega$.  Since $p$ strictly preserves powers, we
  have $pa^\EE_\Phi=a^\CC_\Phi,pb^\EE_\Phi=a^\CC_\Phi$, hence their
  equalizing morphism is $e_\Phi^\CC$. The right half of the following
  diagram describes this equalizing process.
  \begin{displaymath}
    \xymatrix@C=2.5cm{
      th^*(\ol Q\Phi) \ar@{.>}[r]^-{\ol{\theory_{\ca Xx}}(\ol Q\Phi)} & \ol Q\Phi \ar@{.>}[r]^-{e_{\Phi}^\EE=\ol{e_\Phi^\CC}(L_\esit\pitchfork\bOmega)} & L_\esit\pitchfork\bOmega \ar@<.2pc>[r]^-{a_{\Phi}^\EE} \ar@<-.2pc>[r]_-{b_{\Phi}^\EE} & \Sig L_\esit\pitchfork\bOmega & \EE \ar[d]^-p \\
      X \ar[r]^-{\theory_{\ca Xx}} & Q\Phi \ar[r]^-{e_{\Phi}^\CC} & L_\esit\pitchfork\Omega \ar@<.2pc>[r]^-{a_{\Phi}^\CC} \ar@<-.2pc>[r]_-{b_{\Phi}^\CC} & \Sig L_\esit\pitchfork\Omega & \CC
    }
  \end{displaymath}
  On the left half, we also add the Cartesian lifting of $\theory_{\ca Xx}$ with
  $\ol Q\Phi$, yielding the object $th^*(\ol Q\Phi)$.
  
  We finally remark that in the $\CLatw$-fibration $p:\EE\to\CC$, we
  have
  $L_\esit\pitchfork\bOmega=\bigsqcap_{\phi\in
    L_\esit}\pi_\phi^*\bOmega$.  By combining these facts, we conclude
  \begin{align*}
    th^*(\ol Q\Phi)
    =th^*(e_\Phi^*(L_\esit\pitchfork\bOmega))
    =th^*\left(e_\Phi^*\left(\bigsqcap_{\phi\in L_\esit}\pi_\phi^*\bOmega\right)\right)
    =\bigsqcap_{\phi\in L_\esit}(\pi_\phi\circ e_\Phi\circ th)^*\bOmega
    =\bigsqcap_{\phi\in L_\esit}\sem\phi_{\ca Xx}^*\bOmega.
  \end{align*}
  \qed
\end{proof}

\subsection{Fibration from Duality-Based Modal Logic}

In \cite{Klin10}, Klin studied duality-based modal logic using a comma
category (see \cite[Sect. II.6]{cwm2} for the definition). We quickly
review his study, reusing the duality-based modal logic $\mc L$ in
\eqref{eq:exsitu}. We consider the comma category
$\Id_{\CC}\downarrow Q$\footnote{ This is isomorphic to the category
  $P\downarrow\Id_{\Set^\Sigma}$ employed by Klin in \cite{Klin10}.  }
with the evident first projection functor
$\pi_1:\Id_{\CC}\downarrow Q\to\CC$. He showed the following results
that are relevant to us:
\begin{itemize}
\item \cite[Sect. 4]{Klin10} The natural transformation $\delta$
  determines a {\em lifting} $\Delta$ of $B$ along $\pi_1$, whose
  object part is given by $\Delta(X,Y,f)=(BX,LY,\delta_Y\circ Bf)$.
\item \cite[Corollary 4.4]{Klin10} The initial $L$-algebra $\alpha$
  induces a right adjoint right inverse $R$ of
  $\CA{\pi_1}:\CA{\Delta}\to\CA{B}$ (see
  \cref{prop:coindPredMoreAbstractly}). Its object part
  sends a \coalgebra BXx to
  $R\ca Xx=(X,\Phi,\theory_{\ca Xx})$.
\end{itemize}

We re-interpret these results in terms of the fibrational theory of
bisimulation. First, one can easily verify that
$\pi_1:\Id_\CC\downarrow Q\to\CC$ is a fibration and $\Delta$ is a
fibred lifting. Second, for any \coalgebra BXx, $R\ca Xx$ is a final object in the
fiber category $\CA{\pi_1}_{\ca Xx}$, which corresponds to a
$\Delta$-coinductive predicate $\nu(x^*\circ\Delta)$ over $x$ by
\cref{prop:coindPredMoreAbstractly}.
To summarize, a duality-based modal logic induces a setting for
fibrational bisimulation admitting coinductive predicates.

\subsection{Another Abstract Look at Adequacy and Expressivity}

We combine fibrational theory of adequacy and expressivity, and a
fibrational presentation of the duality-based modal logic based on
Klin's study.  The key factor connecting these two studies is the fibred
functor $H:\pi_1\to p$ given by $H(X,Y,f)=f^*(\ol QY)$.
\begin{equation}
  \label{eq:comparefib}
  \vcenter{
    \xymatrix@R=1em{
      \EE \ar[rd]_(.3)p \lloop{\ol B}
      & & \Id_\CC\downarrow Q \ar[ld]^-{\pi_1} \ar[ll]_-H \rloop\Delta \\
      & \CC \lloop B
    }
  }
\end{equation}
Using $H$, the definitions of adequacy and expressivity are equivalently
rewritten as follows. Let \beacoalgebra BXx.  Then the modal logic with coinductive predicates $\mc K$ is
\begin{itemize}
\item {\em adequate} for $x$ if
  $\nu(x^*\circ\ol B)\sqsubseteq H(\nu(x^*\circ\Delta))$, and
\item {\em expressive} for $x$ if
  $\nu(x^*\circ\ol B)\sqsupseteq H(\nu(x^*\circ\Delta))$.
\end{itemize}
That is, establishing adequacy and expressivity can be viewed as a
familiar problem of comparing final coalgebras in two (fibre)
categories connected by a functor.

\subsection{Another Abstract Look at Kleene Proof Principle}

From the above reformulation of expressivity (and adequacy), we easily
notice the following sound proof method.  Below we impose the
following conditions (*) on 
\begin{theoremrep}\label{thm:kleeneabs}
  Suppose that in the modal logic with coinductive predicates $\mc K$ \eqref{eq:exsitu},
  $\DD$ has an initial object, $L$-initial sequence stabilizes, and
  $\ol Q$ is a right adjoint.  Let \beacoalgebra BXx such that
  $x^*\circ\ol B$-final sequence stabilizes after $\omega$-steps. Then
  each of \eqref{eq:kleene} and \eqref{eq:comm} implies that $\mc{K}$ is expressive for $x$. 
  \begin{align}
    & \label{eq:kleene}
      \forall{i\in\omega}~.~H((x^*\circ\Delta)^i(\top))\sqsubseteq (x^*\circ\ol B)^i(\top) \\
    & \label{eq:comm}
      \forall{i\in\omega}~.~
      H\co\Delta\co (x^*\co \Delta)^i(\top)\sqsubseteq \ol B\co H\co (x^*\co \Delta)^i(\top)
  \end{align}
  \myqed
\end{theoremrep}
\begin{proof}
  The fibre category $(\CC\downarrow Q)_X$ of the fibration
  $\pi_1:\CC\downarrow Q\arrow\CC$ is isomorphic to the comma category
  $X\downarrow Q$. We therefore work with $X\downarrow Q$ instead.
  Then the endofunctor $x^*\circ\Delta$ over $(\CC\downarrow Q)_X$ is
  redefined on $X\downarrow Q$ by
  \begin{displaymath}
    x^*\circ\Delta(Y,f)=(LY,\delta\circ Bf\circ x).
  \end{displaymath}
  We also consider the evident forgetful functor
  $p:X\downarrow Q\arrow\DD^{op}$. This functor reflects isomorphisms,
  and $x^*\circ\Delta$ is a lifting of $L^{op}$ along $p$.

  First consider the final $x^*\circ\Delta$-sequence in
  $X\downarrow Q$:
  \begin{displaymath}
    \xymatrix{
      \top & x^*\circ\Delta \top \ar[l] & x^*\circ\Delta(x^*\circ\Delta \top) \ar[l] & \cdots \ar[l] \\
    }
  \end{displaymath}
  where $\top= (0,!:X\arrow Q0)$ is a terminal object.  Notice that
  $Q:\DD^{op}\arrow\CC$ is a right adjoint, hence $Q0$ is a terminal
  object (here $0$ is the assumed initial object in $\DD$).  The
  unique morphism from $(Y,f)$ to $(0,!)$ is $!:Y\arrow 0$ in
  $\DD^{op}$ (i.e. $!:0\arrow Y$ in $\DD$).

  The functor $p$ sends this final $x^*\circ\Delta$-sequence to the
  following diagram, which is again the final $L^{op}$-sequcence in
  $\DD^{op}$ (i.e. $L$-initial sequence in $\DD$):
  \begin{displaymath}
    \xymatrix{
      0 & L0 \ar[l]_-{L!} & L^20 \ar[l]_-{L^2!} & \cdots \ar[l]
    }
  \end{displaymath}
  From the assumption, this $L^{op}$-sequence converges, say at an
  ordinal $\lambda$. Since $p$ reflects isomorphisms, the final
  $x^*\circ\Delta$-sequence also converges at $\lambda$; without loss
  of gerality we assume $\omega\le\lambda$.  We write
  e$\nu(x^*\circ\Delta)$ for this converging object.  We take the
  evident morphism
  $m:\nu(x^*\circ\Delta)\arrow\lim_{i\in\omega}(x^*\circ\Delta)^i\top$.

  From the assumption,
  $H((x^*\circ\Delta)^i\top)\sqsubseteq (x^*\circ\ol B)^i\top$ holds
  for any $i\in\omega$, and the $\omega^{op}$-limit of the right hand
  side yields $\nu(x^*\circ\ol B)$. On the other hand, from $\ol Q$
  being a right adjoint, $H$ also becomes a right adjoint too (we
  postpone the proof of this claim). Therefore
  $H(\lim_{i\in\omega}(x^*\circ\Delta)^i\top)=\lim_{i\in\omega}
  H((x^*\circ\Delta)^i\top)$.  Therefore
  \begin{align*}
    \bec{}{H(\nu(x^*\circ\Delta))}{}
    \bec{\sqsubseteq}{H(\lim_{i\in\omega}(x^*\circ\Delta)^i\top)}{$H(m)$ witnesses this inequality}
    \bec{=}{\lim_{i\in\omega} H((x^*\circ\Delta)^i\top)}{$H$ preserves limits}
    \bec{\sqsubseteq}{\nu(x^*\circ\ol B).}{}
  \end{align*}
  Next, we show that \eqref{eq:comm} implies \eqref{eq:kleene} by induction.
  The base case is trivial. For the case $i+1\in\omega$,
  \begin{align*}
    \bec{}{H\co(x^*\co\Delta)^{i+1}(\top)}{}
    \bec{=}{H\co x^*\co\Delta\co(x^*\co\Delta)^{i}(\top)}{unfolding}
    \bec{=}{x^*\co H\co \Delta\co(x^*\co\Delta)^{i}(\top)}{$H$ being fibred}
    \bec{\sqsubseteq}{x^*\co\ol B\co H\co(x^*\co\Delta)^{i}(\top)}{\eqref{eq:comm}}
    \bec{\sqsubseteq}{(x^*\co\ol B)^{i+1}(\top).}{IH}
  \end{align*}

  We finally show that $Q$ being a right adjoint implies that $H$ is
  so too.  We write $\eta_X:X\to\ol Q\ol PX$ for the unit of the
  adjunction $\ol P\dashv\ol Q:\CC\to\EE$.  Let $X\in\EE$. Then we
  obtain an object $KX\triangleq(pX,\ol PX,p\eta_X)\in\CC\downarrow Q$, and we
  obtain a vertical morphism $X\le HKX=(p\eta_X)^*(\ol Q\ol PX)$,
  which we name $\dot\eta_X$.

  Now let $(Y,I,f)\in\CC\downarrow Q$ be an object
  and $h:X\to H(Y,I,f)=f^*\ol QI$ be a morphism in $\EE$.
  Then we obtain the composite
  \begin{displaymath}
    \ol f(\ol QI)\circ h:X\to\ol QI,
  \end{displaymath}
  were $\ol f(\ol QI)$ is the Cartesian lifting of $f$ with $\ol QI$.
  We then take the adjoint mate of this composite, and write it by
  $m:\ol PX\to I$. It is immediate that the pair $(ph,m)$ forms a
  morphism from $KX$ to $(Y,I,f)$ in $\CC\downarrow Q$. Moreover,
  $H(ph,m):HKX\to H(Y,I,f)$ is the unique morphism above $ph$ such
  that $h=H(ph,m)\circ \dot\eta_X$.

  Suppose that there is another $(a,b):KX\to(Y,f,I)$ such that
  $h=H(a,b)\circ\dot\eta_X$.  Since $H(a,b)$ is above $a$ and
  $\dot\eta_X$ is vertical, we have $ph=a$.
  Next,
  \begin{displaymath}
    \ol f(\ol QI)\circ h
    =
    \ol f(\ol QI)\circ H(a,b)\circ \dot\eta_X
    =
    \ol Qb\circ\ol{\eta_X}(\ol Q\ol PX)\circ \dot\eta_X
    =
    \ol Qb\circ\eta_X.
  \end{displaymath}
  Therefore $b=m$.
  \begin{displaymath}
    \xymatrix@C=2cm{
      X \ar[r]^-{\dot\eta_X} \ar[rd]_-h & (p\eta_X)^*(\ol Q\ol PX) \ar[r]^-{\ol{p\eta_X}(\ol Q\ol PX)} \ar[d]^-{H(a,b)}& \ol Q\ol PX \ar[d]^-{\ol Qb} \\
      & f^*\ol QI \ar[r]_-{\ol f(\ol QI)} & \ol QI \\
      & pX \ar[d]_-a \ar[r]^-{p\eta_X} & p\ol Q\ol PX=Q\ol PX \ar[d]^-{Qb} & \ol PX \ar[d]^-b \\
      & Y \ar[r]_-f & QI & I
    }
  \end{displaymath}
\end{proof}

In other words, \eqref{eq:kleene} says that the $H$-image of the
$x^*\circ\Delta$-final sequence is bound by the $x^*\circ\ol B$-final
sequence. The above theorem is working behind the proof of Kleene
proof principle for expressivity (\cref{thm:kleeneApproxPrinciple});
in the situation \eqref{eq:comparefib} arising from $\KR(\esit)$,
\cref{thm:kleeneApproxPrinciple} first shows \eqref{eq:kleene} using
the assumption of approximating family of observations, then concludes
the expressivity by invoking \cref{thm:kleeneabs}.  On the other hand,
\eqref{eq:comm} is an analogy of approximating family of
observations appeared in \cref{thm:kleeneApproxPrinciple} in the
abstract set-up \eqref{eq:comparefib}.

\section{Conclusions and Future Work}

We introduced a categorical framework to study expressivity of quantitative modal logics,
based on the novel notion of approximating family. This enabled us to cover not only existing examples (\cref{sec:exampleKoenigMM} and \cref{sec:exMarkovProcesses}) but also a new one (\cref{sec:exBisimUnif}).
We conclude with some future research directions.

\paragraph{Making Use of Size Restrictions on Functors}
Many existing expressivity results make use of size restriction condition on the behavior functor $B$.
For example,~\cite{HennessyM85} required \emph{image-finiteness},~\cite{schr08:expr} used \emph{$\kappa$-accessibility}, and~\cite{WildS20} was based on a quantitative notion, \emph{finitary separability}.
Importing these size restrictions is future work.
A starting point can be~\cite{HasuoKC18}, which successfully connected the finitarity of the behavior functor and the length of the final chain in a fiber.

\paragraph{Study of Bisimulation Uniformity}
We defined bisimulation uniformity in~\cref{sec:exBisimUnif}, but there are many topics left to study.
One primary subject is the connection to bisimilarity and bisimulation metric.
It is also important to see if it is robust under parameter changes of the target system.

\paragraph{Seeking Stone--Weierstrass-like Theorems}
To use our framework to show expressivity, one has to obtain a sufficient condition for being an approximating family.
In many cases, this is reduced to finding an appropriate ``Stone--Weierstrass-like'' theorem.
Concretely find ones and apply them to modal logics (other than those we have mentioned) is future work.
Another research direction is to a seek connection to~\cite{Hofmann02}, where ``Stone--Weierstrass-like'' theorems are formulated in another way.

\bibliographystyle{IEEEtran}
\bibliography{refs}

\clearpage
\onecolumn
\appendix
\renewcommand{\appendix}{}
\renewcommand{\myqed}{}
\renewcommand{\mymyqed}{}

\subsection{Further on~\cref{rem:chainLength}}
\label{appendix:remchainLength}
	Assume that $L_{\esit}$ is expressive for $x\colon X\to BX$.
	First, in much the same way as \cref{prop:adequacy}, we can show ``stepwise adequacy'': \[
	(x^*\co\codlif{B}{\bOmega}{\tau})^n(\top)\sqsubseteq\bigsqcap_{\varphi\in L_{\esit},\depth(\varphi)\le n}\sem{\varphi}_{x}^{*}\bOmega
	\] holds for $n\in\omega$.
	Taking the meets of both sides for $n\in\omega$ shows \[
	\bigsqcap_{n\in\omega}(x^*\co\codlif{B}{\bOmega}{\tau})^n(\top)\sqsubseteq\bigsqcap_{\varphi\in L_{\esit}}\sem{\varphi}_{x}^{*}\bOmega=\LEq{\esit}{x}\sqsubseteq\CBisim{\bOmega}{\tau}{x}.
	\]

\subsection{More on Total Boundedness}
\label{subsubsec:totalBoundedness}

For the arguments in \cref{sec:exampleKoenigMM}, finiteness of $\Lambda$ is crucial, which was not very obvious in~\cite{KonigM18}.
Here we consider a handy counterexample against \cref{fact:exampleKoenigMM} where $\Lambda$ is infinite.
It turns out that the counterexample also affects our approximation argument.

Define $d_2\colon 2\times 2\to[0,1]$ by \[
	d_2(x,y)=\begin{cases*}
		0&\text{if $x=y$} \\
		1&\text{otherwise}
	\end{cases*}.
\]
Note that $(2,d_2)$ is totally bounded.

Consider an instance of the situation of \cref{def:exampleKoenigMMSetting} where
\begin{itemize}
	\item the behavior functor $B$ is defined by $BX=X^\omega$,
	\item the set of modality indices $\Lambda$ is $\omega=\{0,1,2,\dots\}$, and
	\item the observation modality $\tau_i\colon [0,1]^\omega\to[0,1]$ for a modality index $i\in\omega$ is defined as the projection $\tau_i((x_j)_{j\in\omega})=x_i$.
\end{itemize}

Then the codensity lifting $\codlif{(-)^\omega}{d_e}{\tau}$ does not preserve total boundedness.
In fact, the pseudometric space $\codlif{(-)^\omega}{d_e}{\tau}(2,d_2)$ is not totally bounded.
First we show a lemma.
Let $(2^\omega,d_{2^\omega})=\codlif{(-)^\omega}{d_e}{\tau}(2,d_2)$.
\begin{lemma}
	The distance function $d_{2^\omega}$ satisfies \[
		d_{2^\omega}(x,y)=\begin{cases*}
			0&\text{if $x=y$} \\
			1&\text{otherwise}
		\end{cases*}.
	\]
\end{lemma}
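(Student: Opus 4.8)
The claim is that the codensity-lifted pseudometric $d_{2^\omega}$ on $2^\omega$ is the discrete $\{0,1\}$-valued metric. One inequality, $d_{2^\omega}(x,y)\le 1$, is immediate from $1$-boundedness of the fibration $\PMetb\to\Set$ (and from the fact that $d_e$ on $[0,1]$ is $1$-bounded), and $d_{2^\omega}(x,x)=0$ holds because every pseudometric is reflexive. So the content is the lower bound: if $x\neq y$ in $2^\omega$, then $d_{2^\omega}(x,y)\ge 1$, i.e.\ $d_{2^\omega}(x,y)=1$.

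By the definition of the codensity lifting (\cref{def:codensityLifting}), with $B=(\place)^\omega$, $\bOmega=([0,1],d_e)$, and $\tau_i$ the $i$-th projection, we have
\[
 d_{2^\omega}(x,y)=\sup_{i\in\omega,\;h\colon(2,d_2)\to_{\mathrm{ne}}([0,1],d_e)} d_e\bigl(\tau_i((h^\omega)(x)),\tau_i((h^\omega)(y))\bigr)
 =\sup_{i\in\omega,\;h} d_e\bigl(h(x_i),h(y_i)\bigr),
\]
where $x=(x_i)_{i\in\omega}$, $y=(y_i)_{i\in\omega}$, and $h$ ranges over nonexpansive maps $(2,d_2)\to([0,1],d_e)$. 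The plan is: since $x\neq y$, pick an index $i_0$ with $x_{i_0}\neq y_{i_0}$; these are distinct elements of $2$. Then choose the nonexpansive map $h\colon(2,d_2)\to([0,1],d_e)$ given by $h(0)=0$, $h(1)=1$. This is nonexpansive because $d_e(h(a),h(b))=|h(a)-h(b)|\le 1 = d_2(a,b)$ for $a\neq b$ (and $=0$ for $a=b$). For this choice, $d_e(h(x_{i_0}),h(y_{i_0}))=|0-1|=1$, so the supremum defining $d_{2^\omega}(x,y)$ is at least $1$, hence exactly $1$.

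There is essentially no obstacle here; the only point requiring a little care is confirming that $h$ as defined is indeed a legitimate observation, i.e.\ a $\PMetb$-morphism $(2,d_2)\to([0,1],d_e)$, which is the nonexpansiveness check above, and that $2$ genuinely has two elements so that the case $x\neq y$ forces a coordinate where $x_i$ and $y_i$ are the two distinct points. Once the lower bound is in hand, combining it with the trivial upper bound $d_{2^\omega}(x,y)\le 1$ and reflexivity yields the stated formula. (This lemma is then what makes $\codlif{(\place)^\omega}{d_e}{\tau}(2,d_2)=(2^\omega,d_{2^\omega})$ fail to be totally bounded: the uncountably many points of $2^\omega$ are pairwise at distance $1$, so no finite $\varepsilon$-net exists for $\varepsilon<1$.)
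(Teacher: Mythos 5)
Your proof is correct and follows essentially the same route as the paper's: unfold the codensity/Kantorovich lifting into a supremum over nonexpansive $h\colon(2,d_2)\to([0,1],d_e)$ and projections, pick a coordinate where $x$ and $y$ differ, and witness the lower bound $1$ with the map $h(0)=0$, $h(1)=1$. The paper leaves the upper bound and reflexivity implicit, which you spell out, but the substance is identical.
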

\begin{proof}
	It suffices to show that, for $x\neq y$, $d_{2^\omega}(x,y)=1$.
	Let $x=(x_i)_{i\in\omega}$ and $y=(y_i)_{i\in\omega}$.
	Take $i\in\omega$ so that $x_i\neq y_i$.
	Define $f\colon 2\to[0,1]$ by $f(0)=0$ and $f(1)=1$.
	Then $f$ is a nonexpansive map from $(2,d_2)$ to $([0,1],d_e)$.
	
	Using these data, we can see
	\begin{align*}
		d_{2^\omega}(x,y)&=\sup_{g\colon(2,d_2)\to([0,1],d_e),j\in\omega}d_e(\tau_j(((g)^\omega)(x)),\tau_j(((g)^\omega)(y)))\\
		&\ge d_e(\tau_i(((f)^\omega)(x)),\tau_i(((f)^\omega)(y))) \\
		&= d_e(f(x_i),f(y_i))\\
		&=1.
	\end{align*}
\end{proof}

\begin{proposition}
	The pseudometric space $(2^\omega,d_{2^\omega})$ is not totally bounded.
\end{proposition}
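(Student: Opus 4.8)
The plan is to invoke the preceding lemma, which identifies $d_{2^\omega}$ as the discrete $\{0,1\}$-valued metric, and then combine it with the simple fact that $2^\omega$ is an infinite set. First I would record that by the lemma we have $d_{2^\omega}(x,y)=0$ when $x=y$ and $d_{2^\omega}(x,y)=1$ otherwise; in particular $(2^\omega,d_{2^\omega})$ is a genuine metric space (not merely a pseudometric one), and every open ball of radius $\le 1$ around a point is a singleton.

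Next I would argue by contradiction. Suppose $(2^\omega,d_{2^\omega})$ is totally bounded. Applying the definition of total boundedness with $\varepsilon=\tfrac{1}{2}$ produces a finite set $F_{1/2}\subseteq 2^\omega$ such that for each $x\in 2^\omega$ there is $y\in F_{1/2}$ with $d_{2^\omega}(x,y)<\tfrac{1}{2}$. Since $d_{2^\omega}$ takes only the values $0$ and $1$, the inequality $d_{2^\omega}(x,y)<\tfrac{1}{2}$ forces $d_{2^\omega}(x,y)=0$, hence $x=y$. Therefore every $x\in 2^\omega$ already belongs to $F_{1/2}$, i.e.\ $2^\omega\subseteq F_{1/2}$.

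Finally I would note that $2^\omega$ is infinite (it has the cardinality of the continuum; more elementarily, it contains the infinitely many sequences that are eventually $0$), which contradicts the finiteness of $F_{1/2}$. Hence $(2^\omega,d_{2^\omega})$ is not totally bounded. I do not expect any real obstacle here: the whole substance of the proposition sits in the preceding lemma, and once $d_{2^\omega}$ is known to be $\{0,1\}$-valued the statement is the standard remark that an infinite discrete metric space is never totally bounded. The only point deserving a word of care is that the witnessing $\varepsilon$ must be chosen strictly below $1$ (here $\tfrac{1}{2}$) so that the covering balls collapse to points.
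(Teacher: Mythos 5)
Your proof is correct and follows essentially the same route as the paper's: both use the preceding lemma to see that $d_{2^\omega}$ is the discrete $\{0,1\}$-valued metric, so that any ball of radius strictly below $1$ is a singleton, and then conclude from the infinitude of $2^\omega$ that no finite family of such balls can cover the space. Your write-up merely makes explicit the choice $\varepsilon=\tfrac12$ and the infinitude of $2^\omega$, which the paper leaves implicit.
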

\begin{proof}
	For any given $0<\varepsilon<1$, each disc of radius $\varepsilon$ covers only one point.
	This implies that finitely many such discs cannot cover the space $(2^\omega,d_{2^\omega})$, which means it is not totally bounded.
\end{proof}

This space $(2^\omega,d_{2^\omega})$ also shows that we cannot simply remove total boundedness in \cref{prop:exampleKoenigMMDensity}.
Let $\mathcal{F} = \PMetb((2^\omega,d_{2^\omega}), ([0,1],d_e))$.
Define $\mathcal{G} \subseteq \mathcal{F}$ as the set of all functions that depend only on finitely many components.
Then this $\mathcal{G}$ satisfies the two conditions in \cref{prop:exampleKoenigMMDensity}.
However, this is not dense in $\mathcal{F}$:
\begin{proposition}
	Under the topology of uniform convergence, $\mathcal{G}$ is not dense in $\mathcal{F}$.
\end{proposition}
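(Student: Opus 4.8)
The plan is to exhibit a single function $h\in\mathcal{F}$ whose uniform distance to \emph{every} member of $\mathcal{G}$ is bounded below by a fixed positive constant; this contradicts density outright. First I would record the cheap but crucial observation that, since by the preceding lemma $d_{2^\omega}$ is the $\{0,1\}$-valued (discrete) pseudometric, every function $2^\omega\to[0,1]$ is automatically nonexpansive (for $x\neq y$ one has $|f(x)-f(y)|\le 1=d_{2^\omega}(x,y)$). Hence $\mathcal{F}$ is just the set of all maps $2^\omega\to[0,1]$, and in particular the ``tail-sensitive'' function $h\colon 2^\omega\to[0,1]$ with $h(x)=0$ if $x$ has finite support (only finitely many coordinates equal to $1$) and $h(x)=1$ otherwise lies in $\mathcal{F}$.

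Next I would unwind what membership in $\mathcal{G}$ forces. If $g\in\mathcal{G}$ depends only on the coordinates in a finite set $F\subseteq\omega$, then $g(x)=g(y)$ whenever $x$ and $y$ agree on $F$. I then pick two concrete witnesses: $x^{0}=(0,0,0,\dots)$, and $x^{1}$ defined by $x^{1}_i=0$ for $i\in F$ and $x^{1}_i=1$ for $i\notin F$. Since $F$ is finite while $\omega$ is not, $x^{1}$ has infinitely many $1$'s, so $h(x^{0})=0$ and $h(x^{1})=1$, whereas $x^{0}$ and $x^{1}$ agree on $F$ and hence $g(x^{0})=g(x^{1})$.

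The conclusion then follows from the triangle inequality: $1=|h(x^{0})-h(x^{1})|\le |h(x^{0})-g(x^{0})|+|g(x^{0})-g(x^{1})|+|g(x^{1})-h(x^{1})|\le 2\sup_{z\in 2^\omega}|h(z)-g(z)|$, so every $g\in\mathcal{G}$ satisfies $\sup_{z}|h(z)-g(z)|\ge\tfrac12$. Thus no element of $\mathcal{G}$ lies within uniform distance $\tfrac12$ of $h$, and $\mathcal{G}$ is not dense in $\mathcal{F}$. There is essentially no hard step here; the only point requiring care is the choice of $h$ --- it must be a genuinely ``infinitary'' invariant, so that for every finite coordinate set $F$ one can flip the tail to change its value --- together with the trivial check, via discreteness of $d_{2^\omega}$, that $h$ really belongs to $\mathcal{F}$.
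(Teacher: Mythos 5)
Your proof is correct and follows essentially the same route as the paper's: the same ``infinitely many ones'' witness $h$, nonexpansiveness via the discreteness of $d_{2^\omega}$, the same pair of points agreeing on the finite coordinate set but separated by $h$, and the same $1/2$ lower bound on the uniform distance. The only cosmetic difference is that you work with an arbitrary finite coordinate set $F$ rather than an initial segment $\{0,\dots,n-1\}$, which changes nothing.
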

\begin{proof}
	Define $h\colon(2^\omega,d_{2^\omega})\to([0,1],d_e)$ by: \[
		h((x_i)_{i\in\omega}) = \begin{cases*}
			1&\text{if there is infinitely many $i$'s s.t.~$x_i=1$} \\
			0&\text{otherwise}
		\end{cases*}.
	\]
	By the lemma this is indeed nonexpansive.
	
	Fix any $g\in\mathcal{G}$.
	By the definition of $\mathcal{G}$, we can take $n\in\omega$ such that $g$ only depends on the first $n$ components.
	Let $x=(0,0,\dots)\in 2^\omega$.
	Define $y\in 2^\omega$ so that the first $n$ components of $y$ are $0$ and all the others are $1$.
	Then $h(x)=0$, $h(y)=1$ and $g(x)=g(y)$ holds.
	This implies that $d_e(h(x),g(x))\ge 1/2$ or $d_e(h(y),g(y))\ge 1/2$ holds.
	In particular, the uniform distance between $h$ and $g$ is at least $1/2$.
	Since $g$ is arbitrary, $\mathcal{G}$ is not dense in $\mathcal{F}$.
\end{proof}

\end{document}